\definecolor{my_red}{RGB}{208, 2, 27}
\definecolor{my_green}{RGB}{126, 211, 33}
\definecolor{my_blue}{RGB}{74, 144, 226}
\definecolor{my_purple}{RGB}{144, 19, 254}
\definecolor{my_olive}{RGB}{174, 180, 51}
\newcommand*{\rom}[1]{\expandafter\@slowromancap\romannumeral #1@}
\begin{document}

\mainmatter  

\title{Finding All Leftmost Separators of Size $\leq k$}

\titlerunning{Finding All Leftmost Separators}

%
%
\author{Mahdi Belbasi, Martin F\"urer}
%
\authorrunning{Mahdi Belbasi, Martin F\"urer}

\institute{Department of Computer Science and Engineering \\
	Pennsylvania State University \\
	University Park, PA 16802,  USA \\
	\{belbasi, fhs\}@psu.edu 
}

%
%

\maketitle

\begin{abstract}
	We define a notion called leftmost separator of size at most $k$. A leftmost separator of size $k$ is a minimal separator $S$ that separates two given sets of vertices $X$ and $Y$ such that we ``cannot move $S$ more towards $X$'' such that $|S|$ remains smaller than the threshold. One of the incentives is that by using leftmost separators we can improve the time complexity of treewidth approximation. Treewidth approximation is a problem which is known to have a linear time FPT algorithm in terms of input size, and only single exponential in terms of the parameter, treewidth. It is not known whether this result can be improved theoretically. However, the coefficient of the parameter $k$ (the treewidth) in the exponent is large. Hence, our goal is to decrease the coefficient of $k$ in the exponent, in order to achieve a more practical algorithm. Hereby, we trade a linear-time algorithm for an $\mathcal{O}(n \log n)$-time algorithm. The previous known $\mathcal{O}(f(k) n \log n)$-time algorithms have dependences of $2^{24k}k!$, $2^{8.766k}k^2$ (a better analysis shows that it is $2^{7.671k}k^2$), and higher. In this paper, we present an algorithm for treewidth approximation which runs in time $\mathcal{O}(2^{6.755k}\  n \log n)$,
	
	Furthermore, we count the number of leftmost separators and give a tight upper bound for them. We show that the number of leftmost separators of size $\leq k$ is at most $C_{k-1}$ (Catalan number). Then, we present an algorithm which outputs all leftmost separators in time $\mathcal{O}(\frac{4^k}{\sqrt{k}}n)$.
\end{abstract}

\section{Introduction}
Finding vertex separators that partition a graph in a ``balanced'' way is a crucial problem in computer science, both in theory and applications. For instance, in a divide and conquer algorithm, most of the time it is vital to have balanced subproblems. If we want to separate two subsets of vertices in a graph, we prefer the separator to be closer to the bigger side. 
In this work, we place the bigger set on the left side and the smaller one on the right. Before going into depth, we review and introduce some notations.

\subsection{Notation}
W.l.o.g., assume that $G$ is a connected graph. $S\subseteq V$ is a separator that separates two subsets of vertices $X, Y\subseteq V$ in $G$, if there is no $X-Y$ path in $G[V\setminus S]$, where $G[V \setminus S]$ is the induced graph on $V \setminus S$. In the following, we use $G - S$ instead of $G[V\setminus S]$ for the sake of simplicity. We call $S$ an $(X,Y)^{G}$-separator. Later on, we drop the superscripts if it is obvious from the context. 
\begin{definition}
	$S^G_{X,Y}$ is the set of all $(X,Y)^G$-separators. 
\end{definition}
\begin{definition}
	The separator $S \in S^G_{X,Y}$ partitions $G - S$ into three sets $V_{X, S}$, $V_{S, Y}$, and $V_{Z}$, where $V_{X, S}$ is the set of vertices with a path from $X\setminus S$, $V_{S, Y}$ is the set of vertices with a path from $Y\setminus S$, and $V_{Z}$ is the set of all vertices  reachable from neither $X\setminus S$ nor $Y\setminus S$ in $G - S$.  
\end{definition}
Having a non-empty $V_{Z}$ set is only to our advantage.
We think of $X$ being on the left side and $Y$ on the right side of $S$. Any of the three sets ($X$, $Y$, and $S$) might intersect.
\begin{definition}\textbf{Partial Ordering.}
	We say separator $S \in S^G_{X,Y}$ is at least as much to the left as separator $S' \in S^G_{X,Y}$ if\/  $V_{X, S} \subseteq V_{X, S'}$. In this case, we use the notation $S \preceq S'$.
\end{definition}
\begin{definition}
	Separator $S \in S^G_{X,Y}$ is called an \emph{($X$, $Y$, $\leq k$)$^G$-separator} if $|S| \leq k$.
\end{definition}
\begin{definition}\label{def:leftmost}
	Separator $S \in S^G_{X, Y}$ is called a \emph{leftmost ($X$, $Y$, $\leq k$)$^G$-separator} if it is minimal and there exists no other minimal $(X$, $Y$, $\leq k)^G$-separator $S'$ such that $S' \preceq S$.
\end{definition}

Notice that the minimality is important here, otherwise according to the partial ordering definition, one can keep adding extra vertices to the left of $S'$ (towards $X$) and artificially make it more to the left. In order to avoid this, we require all separators we work with to be minimal unless specified otherwise. 

The notion of leftmost separator is closely related to the notion of \emph{important separator}. Important separator has been defined in \cite{marx2006parameterized}, and then used in \cite{chitnis2013fixed} and \cite{lokshtanov2013clustering}.

The difference between a leftmost separator and an important separator comes from their corresponding partial orders. The partial order defined for important separators is as follows:

\begin{definition}\textbf{Partial Ordering used for Important Separators}. Separator $S \in S_{X,Y}^G$ dominates (or ``is more important than'') separator $S' \in S_{X,Y}^G$ if $|S| \leq |S'|$ and $V_{S,Y} \subset V_{S', Y}$.
\end{definition}
\begin{definition}
	Separator $S \in S_{X,Y}^G$ is an \emph{important ($X$, $Y$, $\leq k$)$^G$-separator} if there exists no other minimal $(X$, $Y$, $\leq k)^G$-separator $S'$ dominating $S$.
\end{definition}

As you see, when ordering important separators we also look at the relation between the sizes but in a leftmost separator, its size just has to be $\leq k$.

\begin{lemma}\label{lem:left-important}
    Every leftmost $(X$, $Y$, $\leq k)^G$-separator is an important $(X$, $Y$, $\leq k)^G$-separator, but the converse does not hold.
\end{lemma}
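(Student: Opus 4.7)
My plan is a contrapositive argument for the containment, followed by a small concrete counter-example for the strict part.

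For the forward direction, suppose $S$ is a minimal $(X, Y, \leq k)^G$-separator that is not important, so there is some minimal $(X, Y, \leq k)^G$-separator $S'$ dominating it, with $|S'|\leq|S|$ and $V_{S', Y}\subsetneq V_{S, Y}$. I want to show that this forces the existence of a minimal $(X, Y, \leq k)^G$-separator $S^*\neq S$ with $V_{X, S^*}\subseteq V_{X, S}$, contradicting leftmost-ness. The first step is to exploit minimality of both $S$ and $S'$: every vertex of a minimal $(X, Y)$-separator is adjacent to vertices on both the $X$-side and the $Y$-side, which tightly controls the ``dead zone'' $V_Z$ and makes the two sides essentially complementary modulo the separator. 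Next I would apply the standard submodularity of the vertex-boundary function to the $X$-closed sets $A=V_{X, S}$ and $A'=V_{X, S'}$: from $|\partial(A\cap A')|+|\partial(A\cup A')|\leq|S|+|S'|$, together with $|S'|\leq|S|$ and the observation that $\partial(A\cup A')$ is itself an $(X, Y)$-separator (hence of size at least the min cut, which is at most $|S'|$), I obtain $|\partial(A\cap A')|\leq|S|\leq k$. A minimalization $S^*$ of $\partial(A\cap A')$ satisfies $V_{X, S^*}\subsetneq V_{X, S}$ and produces the desired contradiction.

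For the converse, I would give a bottleneck example. Take $X=\{x\}$, $Y=\{y\}$, and let $G$ have vertex set $\{x, a_1, a_2, b, y\}$ with edges $x$-$a_1$, $x$-$a_2$, $a_1$-$b$, $a_2$-$b$, and $b$-$y$; set $k=2$. The two minimal $(X, Y, \leq 2)^G$-separators are $\{b\}$ and $\{a_1, a_2\}$. The separator $\{b\}$ is important, since no size-$1$ separator has a strictly smaller $Y$-side. However, $V_{X, \{a_1, a_2\}}=\{x\}\subsetneq\{x, a_1, a_2\}=V_{X, \{b\}}$, so $\{a_1, a_2\}\prec\{b\}$, witnessing that $\{b\}$ is not leftmost.

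The main obstacle is the translation step in the forward direction: the important-separator ordering compares $Y$-sides, while leftmost-ness compares $X$-sides, and the two do not line up unless one is careful about $V_Z$ and about intersections of the separators with $X\cup Y$. My plan handles this by invoking minimality to control $V_Z$ and then the submodularity inequality above; the size budget $|S'|\leq|S|$ built into the important-separator definition is exactly what keeps the uncrossed leftward separator within the threshold $k$.
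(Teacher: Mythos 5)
The forward direction of your proposal does not go through, and the root cause is the direction of the domination relation. You read the paper's definition literally, so that the dominating separator $S'$ satisfies $|S'|\le|S|$ and $V_{S',Y}\subsetneq V_{S,Y}$. That printed $\subset$ is a typo: the paper's own proof of this lemma, its proof of Lemma~\ref{lem:numleft_numimportant}, and Figure~\ref{fig:counter-ex} all use the opposite convention, in which the dominating separator has the strictly \emph{larger} $Y$-side, $V_{S',Y}\supsetneq V_{S,Y}$. The direction is not cosmetic: under your reading the lemma is false. In Figure~\ref{fig:counter-ex}, the leftmost separator $S$ would be ``dominated'' by $S_1$ (we have $|S_1|\le|S|$ and $V_{S_1,Y}\subsetneq V_{S,Y}$), so $S$ would not be important, contradicting the caption. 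Since the statement is false under your hypotheses, the argument must break somewhere, and it breaks at the submodularity step: from $|\partial(A\cap A')|+|\partial(A\cup A')|\le|S|+|S'|$ you may conclude $|\partial(A\cap A')|\le|S|$ only if $|\partial(A\cup A')|\ge|S'|$, but all you establish is $|\partial(A\cup A')|\ge\lambda$ for the minimum cut value $\lambda$, together with $\lambda\le|S'|$; these chain to $|\partial(A\cap A')|\le|S|+|S'|-\lambda$, a quantity that is at least $|S|$, not at most $|S|$. The inequality points the wrong way. Moreover, nothing forces your minimalized $S^*$ to differ from $S$, which you need in order to contradict leftmostness.

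Under the intended convention no uncrossing is needed at all, and this is the paper's proof: if $S'$ dominates $S$, then $V_{S',Y}\supsetneq V_{S,Y}$ forces $S\setminus S'\ne\emptyset$; minimality of $S$ gives each $v\in S\setminus S'$ a neighbor in $V_{S,Y}\subseteq V_{S',Y}$, and since $v\notin S'$ this puts $v\in V_{S',Y}$; hence $S\subseteq S'\cup V_{S',Y}$, so every path from $X$ in $G-S'$ avoids $S$, giving $V_{X,S'}\subseteq V_{X,S}$ with $S'\ne S$, i.e.\ $S'\preceq S$ --- contradicting that $S$ is leftmost.

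Your counter-example for the strict part has the right idea (it is a miniature of Figure~\ref{fig:counter-ex}), but as stated it is defeated by a degeneracy the paper's definitions permit: separators may intersect $X$, so with $X=\{x\}$ the set $\{x\}$ is itself a minimal $(X,Y,\le 2)^G$-separator of size $1$, and its $Y$-side $\{a_1,a_2,b,y\}$ strictly contains $V_{\{b\},Y}=\{y\}$; thus $\{x\}$ dominates $\{b\}$ and $\{b\}$ is not important after all. The repair is to take $|X|>k$, e.g.\ replace $x$ by three vertices each adjacent to both $a_1$ and $a_2$; then no dominator of $\{b\}$ fits within the size budget $1$, so $\{b\}$ is important but not leftmost, as you intended.
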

\begin{proof}
    Let $S$ be a leftmost $(X$, $Y$, $\leq k)^G$-separator. Assume for the sake of contradiction that $S$ is not important. Hence, there exists important $(X$, $Y$, $\leq k)^G$-separator $S'$ that dominates $S$. Therefore $V_{S',Y}\supset V_{S, Y}$. Notice that $S\setminus S'$ is nonempty. Let $v\in S\setminus S'$. Then, $v\in V_{S', Y}$. Therefore, $V_{X, S'}\subset V_{X,S}$. This implies that $S'\preceq S$, which is a contradiction. 
    
    To show that the converse of the proposition does not hold, consider the following counter-example.
    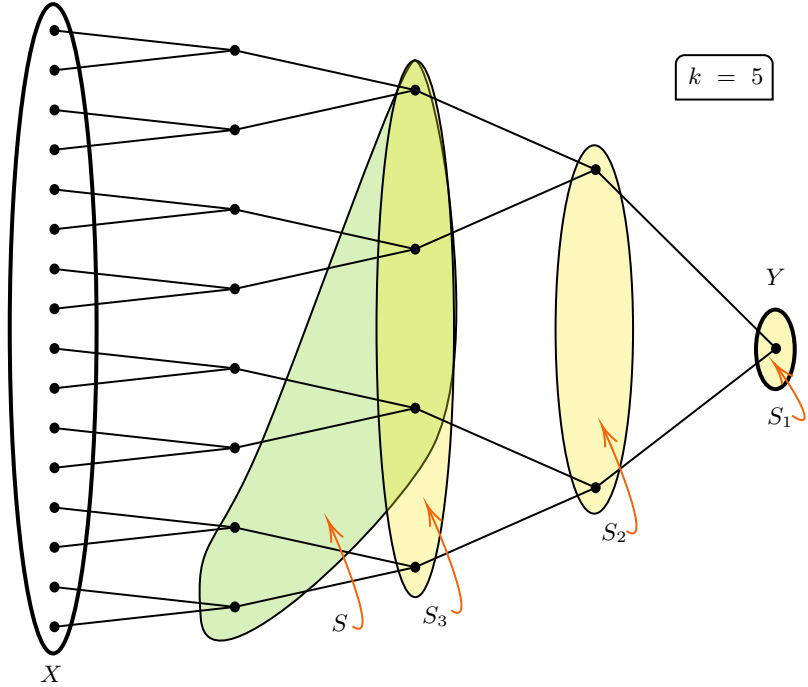
\begin{figure}
        \centering

\tikzset{every picture/.style={line width=0.75pt}} 

\begin{tikzpicture}[x=0.75pt,y=0.75pt,yscale=-1,xscale=1]

\draw  [fill={rgb, 255:red, 126; green, 211; blue, 33 }  ,fill opacity=0.3 ] (366,47.33) .. controls (372.26,49.84) and (380.33,92) .. (383,116.67) .. controls (385.67,141.33) and (392.33,205.33) .. (375,241.33) .. controls (357.67,277.33) and (262.33,371.33) .. (259,328) .. controls (255.67,284.67) and (265.93,297.71) .. (291,235.33) .. controls (316.07,172.95) and (359.74,44.83) .. (366,47.33) -- cycle ;
\draw  [fill={rgb, 255:red, 248; green, 231; blue, 28 }  ,fill opacity=0.3 ][line width=1.5]  (536,192.67) .. controls (536,181.62) and (540.33,172.67) .. (545.67,172.67) .. controls (551.01,172.67) and (555.33,181.62) .. (555.33,192.67) .. controls (555.33,203.71) and (551.01,212.67) .. (545.67,212.67) .. controls (540.33,212.67) and (536,203.71) .. (536,192.67) -- cycle ;
\draw  [fill={rgb, 255:red, 248; green, 231; blue, 28 }  ,fill opacity=0.3 ] (346.67,182.33) .. controls (346.67,107.77) and (355.32,47.33) .. (366,47.33) .. controls (376.68,47.33) and (385.33,107.77) .. (385.33,182.33) .. controls (385.33,256.89) and (376.68,317.33) .. (366,317.33) .. controls (355.32,317.33) and (346.67,256.89) .. (346.67,182.33) -- cycle ;
\draw  [fill={rgb, 255:red, 248; green, 231; blue, 28 }  ,fill opacity=0.3 ] (436,182.67) .. controls (436,131.49) and (444.66,90) .. (455.33,90) .. controls (466.01,90) and (474.67,131.49) .. (474.67,182.67) .. controls (474.67,233.85) and (466.01,275.33) .. (455.33,275.33) .. controls (444.66,275.33) and (436,233.85) .. (436,182.67) -- cycle ;
\draw  [fill={rgb, 255:red, 0; green, 0; blue, 0 }  ,fill opacity=1 ] (184,32.25) .. controls (184,31.01) and (184.9,30) .. (186,30) .. controls (187.1,30) and (188,31.01) .. (188,32.25) .. controls (188,33.49) and (187.1,34.5) .. (186,34.5) .. controls (184.9,34.5) and (184,33.49) .. (184,32.25) -- cycle ;
\draw  [fill={rgb, 255:red, 0; green, 0; blue, 0 }  ,fill opacity=1 ] (184,52.25) .. controls (184,51.01) and (184.9,50) .. (186,50) .. controls (187.1,50) and (188,51.01) .. (188,52.25) .. controls (188,53.49) and (187.1,54.5) .. (186,54.5) .. controls (184.9,54.5) and (184,53.49) .. (184,52.25) -- cycle ;
\draw  [fill={rgb, 255:red, 0; green, 0; blue, 0 }  ,fill opacity=1 ] (184,72.25) .. controls (184,71.01) and (184.9,70) .. (186,70) .. controls (187.1,70) and (188,71.01) .. (188,72.25) .. controls (188,73.49) and (187.1,74.5) .. (186,74.5) .. controls (184.9,74.5) and (184,73.49) .. (184,72.25) -- cycle ;
\draw  [fill={rgb, 255:red, 0; green, 0; blue, 0 }  ,fill opacity=1 ] (184,92.25) .. controls (184,91.01) and (184.9,90) .. (186,90) .. controls (187.1,90) and (188,91.01) .. (188,92.25) .. controls (188,93.49) and (187.1,94.5) .. (186,94.5) .. controls (184.9,94.5) and (184,93.49) .. (184,92.25) -- cycle ;
\draw  [fill={rgb, 255:red, 0; green, 0; blue, 0 }  ,fill opacity=1 ] (184,112.25) .. controls (184,111.01) and (184.9,110) .. (186,110) .. controls (187.1,110) and (188,111.01) .. (188,112.25) .. controls (188,113.49) and (187.1,114.5) .. (186,114.5) .. controls (184.9,114.5) and (184,113.49) .. (184,112.25) -- cycle ;
\draw  [fill={rgb, 255:red, 0; green, 0; blue, 0 }  ,fill opacity=1 ] (184,132.25) .. controls (184,131.01) and (184.9,130) .. (186,130) .. controls (187.1,130) and (188,131.01) .. (188,132.25) .. controls (188,133.49) and (187.1,134.5) .. (186,134.5) .. controls (184.9,134.5) and (184,133.49) .. (184,132.25) -- cycle ;
\draw  [fill={rgb, 255:red, 0; green, 0; blue, 0 }  ,fill opacity=1 ] (184,152.25) .. controls (184,151.01) and (184.9,150) .. (186,150) .. controls (187.1,150) and (188,151.01) .. (188,152.25) .. controls (188,153.49) and (187.1,154.5) .. (186,154.5) .. controls (184.9,154.5) and (184,153.49) .. (184,152.25) -- cycle ;
\draw  [fill={rgb, 255:red, 0; green, 0; blue, 0 }  ,fill opacity=1 ] (184,172.25) .. controls (184,171.01) and (184.9,170) .. (186,170) .. controls (187.1,170) and (188,171.01) .. (188,172.25) .. controls (188,173.49) and (187.1,174.5) .. (186,174.5) .. controls (184.9,174.5) and (184,173.49) .. (184,172.25) -- cycle ;
\draw  [fill={rgb, 255:red, 0; green, 0; blue, 0 }  ,fill opacity=1 ] (274,42.25) .. controls (274,41.01) and (274.9,40) .. (276,40) .. controls (277.1,40) and (278,41.01) .. (278,42.25) .. controls (278,43.49) and (277.1,44.5) .. (276,44.5) .. controls (274.9,44.5) and (274,43.49) .. (274,42.25) -- cycle ;
\draw  [fill={rgb, 255:red, 0; green, 0; blue, 0 }  ,fill opacity=1 ] (274,82.25) .. controls (274,81.01) and (274.9,80) .. (276,80) .. controls (277.1,80) and (278,81.01) .. (278,82.25) .. controls (278,83.49) and (277.1,84.5) .. (276,84.5) .. controls (274.9,84.5) and (274,83.49) .. (274,82.25) -- cycle ;
\draw  [fill={rgb, 255:red, 0; green, 0; blue, 0 }  ,fill opacity=1 ] (274,122.25) .. controls (274,121.01) and (274.9,120) .. (276,120) .. controls (277.1,120) and (278,121.01) .. (278,122.25) .. controls (278,123.49) and (277.1,124.5) .. (276,124.5) .. controls (274.9,124.5) and (274,123.49) .. (274,122.25) -- cycle ;
\draw  [fill={rgb, 255:red, 0; green, 0; blue, 0 }  ,fill opacity=1 ] (274,162.25) .. controls (274,161.01) and (274.9,160) .. (276,160) .. controls (277.1,160) and (278,161.01) .. (278,162.25) .. controls (278,163.49) and (277.1,164.5) .. (276,164.5) .. controls (274.9,164.5) and (274,163.49) .. (274,162.25) -- cycle ;
\draw  [fill={rgb, 255:red, 0; green, 0; blue, 0 }  ,fill opacity=1 ] (364,62.25) .. controls (364,61.01) and (364.9,60) .. (366,60) .. controls (367.1,60) and (368,61.01) .. (368,62.25) .. controls (368,63.49) and (367.1,64.5) .. (366,64.5) .. controls (364.9,64.5) and (364,63.49) .. (364,62.25) -- cycle ;
\draw  [fill={rgb, 255:red, 0; green, 0; blue, 0 }  ,fill opacity=1 ] (364,142.25) .. controls (364,141.01) and (364.9,140) .. (366,140) .. controls (367.1,140) and (368,141.01) .. (368,142.25) .. controls (368,143.49) and (367.1,144.5) .. (366,144.5) .. controls (364.9,144.5) and (364,143.49) .. (364,142.25) -- cycle ;
\draw  [fill={rgb, 255:red, 0; green, 0; blue, 0 }  ,fill opacity=1 ] (454,102.25) .. controls (454,101.01) and (454.9,100) .. (456,100) .. controls (457.1,100) and (458,101.01) .. (458,102.25) .. controls (458,103.49) and (457.1,104.5) .. (456,104.5) .. controls (454.9,104.5) and (454,103.49) .. (454,102.25) -- cycle ;
\draw  [fill={rgb, 255:red, 0; green, 0; blue, 0 }  ,fill opacity=1 ] (184,192.25) .. controls (184,191.01) and (184.9,190) .. (186,190) .. controls (187.1,190) and (188,191.01) .. (188,192.25) .. controls (188,193.49) and (187.1,194.5) .. (186,194.5) .. controls (184.9,194.5) and (184,193.49) .. (184,192.25) -- cycle ;
\draw  [fill={rgb, 255:red, 0; green, 0; blue, 0 }  ,fill opacity=1 ] (184,212.25) .. controls (184,211.01) and (184.9,210) .. (186,210) .. controls (187.1,210) and (188,211.01) .. (188,212.25) .. controls (188,213.49) and (187.1,214.5) .. (186,214.5) .. controls (184.9,214.5) and (184,213.49) .. (184,212.25) -- cycle ;
\draw  [fill={rgb, 255:red, 0; green, 0; blue, 0 }  ,fill opacity=1 ] (184,232.25) .. controls (184,231.01) and (184.9,230) .. (186,230) .. controls (187.1,230) and (188,231.01) .. (188,232.25) .. controls (188,233.49) and (187.1,234.5) .. (186,234.5) .. controls (184.9,234.5) and (184,233.49) .. (184,232.25) -- cycle ;
\draw  [fill={rgb, 255:red, 0; green, 0; blue, 0 }  ,fill opacity=1 ] (184,252.25) .. controls (184,251.01) and (184.9,250) .. (186,250) .. controls (187.1,250) and (188,251.01) .. (188,252.25) .. controls (188,253.49) and (187.1,254.5) .. (186,254.5) .. controls (184.9,254.5) and (184,253.49) .. (184,252.25) -- cycle ;
\draw  [fill={rgb, 255:red, 0; green, 0; blue, 0 }  ,fill opacity=1 ] (184,272.25) .. controls (184,271.01) and (184.9,270) .. (186,270) .. controls (187.1,270) and (188,271.01) .. (188,272.25) .. controls (188,273.49) and (187.1,274.5) .. (186,274.5) .. controls (184.9,274.5) and (184,273.49) .. (184,272.25) -- cycle ;
\draw  [fill={rgb, 255:red, 0; green, 0; blue, 0 }  ,fill opacity=1 ] (184,292.25) .. controls (184,291.01) and (184.9,290) .. (186,290) .. controls (187.1,290) and (188,291.01) .. (188,292.25) .. controls (188,293.49) and (187.1,294.5) .. (186,294.5) .. controls (184.9,294.5) and (184,293.49) .. (184,292.25) -- cycle ;
\draw  [fill={rgb, 255:red, 0; green, 0; blue, 0 }  ,fill opacity=1 ] (184,312.25) .. controls (184,311.01) and (184.9,310) .. (186,310) .. controls (187.1,310) and (188,311.01) .. (188,312.25) .. controls (188,313.49) and (187.1,314.5) .. (186,314.5) .. controls (184.9,314.5) and (184,313.49) .. (184,312.25) -- cycle ;
\draw  [fill={rgb, 255:red, 0; green, 0; blue, 0 }  ,fill opacity=1 ] (184,332.25) .. controls (184,331.01) and (184.9,330) .. (186,330) .. controls (187.1,330) and (188,331.01) .. (188,332.25) .. controls (188,333.49) and (187.1,334.5) .. (186,334.5) .. controls (184.9,334.5) and (184,333.49) .. (184,332.25) -- cycle ;
\draw  [fill={rgb, 255:red, 0; green, 0; blue, 0 }  ,fill opacity=1 ] (274,202.25) .. controls (274,201.01) and (274.9,200) .. (276,200) .. controls (277.1,200) and (278,201.01) .. (278,202.25) .. controls (278,203.49) and (277.1,204.5) .. (276,204.5) .. controls (274.9,204.5) and (274,203.49) .. (274,202.25) -- cycle ;
\draw  [fill={rgb, 255:red, 0; green, 0; blue, 0 }  ,fill opacity=1 ] (274,242.25) .. controls (274,241.01) and (274.9,240) .. (276,240) .. controls (277.1,240) and (278,241.01) .. (278,242.25) .. controls (278,243.49) and (277.1,244.5) .. (276,244.5) .. controls (274.9,244.5) and (274,243.49) .. (274,242.25) -- cycle ;
\draw  [fill={rgb, 255:red, 0; green, 0; blue, 0 }  ,fill opacity=1 ] (274,282.25) .. controls (274,281.01) and (274.9,280) .. (276,280) .. controls (277.1,280) and (278,281.01) .. (278,282.25) .. controls (278,283.49) and (277.1,284.5) .. (276,284.5) .. controls (274.9,284.5) and (274,283.49) .. (274,282.25) -- cycle ;
\draw  [fill={rgb, 255:red, 0; green, 0; blue, 0 }  ,fill opacity=1 ] (274,322.25) .. controls (274,321.01) and (274.9,320) .. (276,320) .. controls (277.1,320) and (278,321.01) .. (278,322.25) .. controls (278,323.49) and (277.1,324.5) .. (276,324.5) .. controls (274.9,324.5) and (274,323.49) .. (274,322.25) -- cycle ;
\draw  [fill={rgb, 255:red, 0; green, 0; blue, 0 }  ,fill opacity=1 ] (364,222.25) .. controls (364,221.01) and (364.9,220) .. (366,220) .. controls (367.1,220) and (368,221.01) .. (368,222.25) .. controls (368,223.49) and (367.1,224.5) .. (366,224.5) .. controls (364.9,224.5) and (364,223.49) .. (364,222.25) -- cycle ;
\draw  [fill={rgb, 255:red, 0; green, 0; blue, 0 }  ,fill opacity=1 ] (364,302.25) .. controls (364,301.01) and (364.9,300) .. (366,300) .. controls (367.1,300) and (368,301.01) .. (368,302.25) .. controls (368,303.49) and (367.1,304.5) .. (366,304.5) .. controls (364.9,304.5) and (364,303.49) .. (364,302.25) -- cycle ;
\draw  [fill={rgb, 255:red, 0; green, 0; blue, 0 }  ,fill opacity=1 ] (454,262.25) .. controls (454,261.01) and (454.9,260) .. (456,260) .. controls (457.1,260) and (458,261.01) .. (458,262.25) .. controls (458,263.49) and (457.1,264.5) .. (456,264.5) .. controls (454.9,264.5) and (454,263.49) .. (454,262.25) -- cycle ;
\draw  [fill={rgb, 255:red, 0; green, 0; blue, 0 }  ,fill opacity=1 ] (544,192.25) .. controls (544,191.01) and (544.9,190) .. (546,190) .. controls (547.1,190) and (548,191.01) .. (548,192.25) .. controls (548,193.49) and (547.1,194.5) .. (546,194.5) .. controls (544.9,194.5) and (544,193.49) .. (544,192.25) -- cycle ;
\draw    (456,102.25) -- (546,192.25) ;
\draw    (366,62.25) -- (456,102.25) ;
\draw    (366,142.25) -- (456,102.25) ;
\draw    (366,222.25) -- (456,262.25) ;
\draw    (456,262.25) -- (546,192.25) ;
\draw    (366,302.25) -- (456,262.25) ;
\draw    (276,42.25) -- (366,62.25) ;
\draw    (276,82.25) -- (366,62.25) ;
\draw    (276,122.25) -- (366,142.25) ;
\draw    (276,162.25) -- (366,142.25) ;
\draw    (276,202.25) -- (366,222.25) ;
\draw    (276,242.25) -- (366,222.25) ;
\draw    (276,282.25) -- (366,302.25) ;
\draw    (276,322.25) -- (366,302.25) ;
\draw    (186,32.25) -- (276,42.25) ;
\draw    (186,52.25) -- (276,42.25) ;
\draw    (186,72.25) -- (276,82.25) ;
\draw    (186,92.25) -- (276,82.25) ;
\draw    (186,112.25) -- (276,122.25) ;
\draw    (186,132.25) -- (276,122.25) ;
\draw    (186,152.25) -- (276,162.25) ;
\draw    (186,172.25) -- (276,162.25) ;
\draw    (186,192.25) -- (276,202.25) ;
\draw    (186,212.25) -- (276,202.25) ;
\draw    (186,232.25) -- (276,242.25) ;
\draw    (186,252.25) -- (276,242.25) ;
\draw    (186,272.25) -- (276,282.25) ;
\draw    (186,292.25) -- (276,282.25) ;
\draw    (186,312.25) -- (276,322.25) ;
\draw    (186,332.25) -- (276,322.25) ;
\draw  [line width=1.5]  (163.67,182.33) .. controls (163.67,92.13) and (173.37,19) .. (185.33,19) .. controls (197.3,19) and (207,92.13) .. (207,182.33) .. controls (207,272.54) and (197.3,345.67) .. (185.33,345.67) .. controls (173.37,345.67) and (163.67,272.54) .. (163.67,182.33) -- cycle ;
\draw [color={rgb, 255:red, 242; green, 99; blue, 10 }  ,draw opacity=1 ]   (554,226) .. controls (567.76,235.28) and (556.02,214.15) .. (546.21,201.24) ;
\draw [shift={(545,199.67)}, rotate = 411.71000000000004] [color={rgb, 255:red, 242; green, 99; blue, 10 }  ,draw opacity=1 ][line width=0.75]    (10.93,-3.29) .. controls (6.95,-1.4) and (3.31,-0.3) .. (0,0) .. controls (3.31,0.3) and (6.95,1.4) .. (10.93,3.29)   ;
\draw [color={rgb, 255:red, 242; green, 99; blue, 10 }  ,draw opacity=1 ]   (471.33,284) .. controls (485.17,293.33) and (467.02,248.02) .. (459.75,232.26) ;
\draw [shift={(459,230.67)}, rotate = 424.53999999999996] [color={rgb, 255:red, 242; green, 99; blue, 10 }  ,draw opacity=1 ][line width=0.75]    (10.93,-3.29) .. controls (6.95,-1.4) and (3.31,-0.3) .. (0,0) .. controls (3.31,0.3) and (6.95,1.4) .. (10.93,3.29)   ;
\draw [color={rgb, 255:red, 242; green, 99; blue, 10 }  ,draw opacity=1 ]   (384,324) .. controls (397.83,333.33) and (379.69,288.02) .. (372.41,272.26) ;
\draw [shift={(371.67,270.67)}, rotate = 424.53999999999996] [color={rgb, 255:red, 242; green, 99; blue, 10 }  ,draw opacity=1 ][line width=0.75]    (10.93,-3.29) .. controls (6.95,-1.4) and (3.31,-0.3) .. (0,0) .. controls (3.31,0.3) and (6.95,1.4) .. (10.93,3.29)   ;
\draw [color={rgb, 255:red, 242; green, 99; blue, 10 }  ,draw opacity=1 ]   (334.67,332) .. controls (348.5,341.33) and (330.36,296.02) .. (323.08,280.26) ;
\draw [shift={(322.33,278.67)}, rotate = 424.53999999999996] [color={rgb, 255:red, 242; green, 99; blue, 10 }  ,draw opacity=1 ][line width=0.75]    (10.93,-3.29) .. controls (6.95,-1.4) and (3.31,-0.3) .. (0,0) .. controls (3.31,0.3) and (6.95,1.4) .. (10.93,3.29)   ;
\draw   (496,49.13) .. controls (496,46.67) and (498,44.67) .. (500.47,44.67) -- (539.87,44.67) .. controls (542.33,44.67) and (544.33,46.67) .. (544.33,49.13) -- (544.33,67) .. controls (544.33,67) and (544.33,67) .. (544.33,67) -- (496,67) .. controls (496,67) and (496,67) .. (496,67) -- cycle ;

\draw (500.67,49.4) node [anchor=north west][inner sep=0.75pt]  [font=\footnotesize]  {$k\ =\ 5$};
\draw (539.33,150.4) node [anchor=north west][inner sep=0.75pt]    {$Y$};
\draw (177.33,350.4) node [anchor=north west][inner sep=0.75pt]    {$X$};
\draw (322.67,324.73) node [anchor=north west][inner sep=0.75pt]  [font=\small]  {$S$};
\draw (368,320.73) node [anchor=north west][inner sep=0.75pt]  [font=\small]  {$S_{3}$};
\draw (457.33,278.73) node [anchor=north west][inner sep=0.75pt]  [font=\small]  {$S_{2}$};
\draw (540,220.73) node [anchor=north west][inner sep=0.75pt]  [font=\small]  {$S_{1}$};

\end{tikzpicture}
        \caption{Here $k = 5$. Separators $S_1$, $S_2$, $S_3$, and $S$ are all important but only $S$ is leftmost}
        \label{fig:counter-ex}
    \end{figure}
\end{proof}
 
 As shown in Figure~\ref{fig:counter-ex}, not all the important separators are leftmost. Our purpose is to find a separator more towards the bigger side in order to have more balanced separators. For that reason, not all the important separators are good. For instance we do not need to consider $S_1$ because that is the most unbalanced separator one can find. This is the main reason that we defined the new notion of leftmost separators. The reason that $S_1$ in Figure~\ref{fig:counter-ex} is important is that $|S_1| \leq |S_2|$ (even though $V_{S_2,Y}$ contains $V_{S_1, Y}$). 
 As argued, leftmost separators are better candidates for our application. However, as the reader can see, there is a strong similarity between these two notions. We give tight upper bounds for the number of leftmost separators and a tight upper bound for the number of important separators.
 
 \begin{lemma}\label{lem:numleft_numimportant}
     Let $\mathcal{A}_{X,Y, \leq k}^{G}$ and $\mathcal{B}_{X,Y, \leq k}^{G}$ be the set of all leftmost ($X, Y, \leq k$)$^G$-separators and the set of all important ($X, Y, \leq k$)$^G$-separators, respectively. Then, 
     \[
     \mathcal{B}_{X,Y, \leq k}^{G} = \bigcup\limits_{i=1}^{k}\mathcal{A}_{X,Y, \leq i}^{G}
     \]
 \end{lemma}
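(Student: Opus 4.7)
The plan is to show the two inclusions separately.

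For $\bigcup_{i=1}^{k}\mathcal{A}_{X,Y,\leq i}^{G}\subseteq \mathcal{B}_{X,Y,\leq k}^{G}$, I would fix $S\in\mathcal{A}_{X,Y,\leq i}^{G}$ for some $i\le k$ and apply \Cref{lem:left-important} at threshold $i$ to conclude that $S$ is already an important $(X,Y,\leq i)^G$-separator. To promote this to importance at threshold $k$, suppose for contradiction that some minimal $(X,Y,\leq k)^G$-separator $S'$ dominates $S$. Then the domination inequality $|S'|\le|S|\le i$ shows that $S'$ is in fact a minimal $(X,Y,\leq i)^G$-separator dominating $S$, contradicting the importance of $S$ at threshold $i$.

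For the reverse inclusion, I would take an important $(X,Y,\leq k)^G$-separator $S$ and argue $S\in\mathcal{A}_{X,Y,\leq |S|}^{G}$, which is enough because $|S|\le k$. If $S$ were not leftmost at this threshold, there would exist a minimal separator $S'\ne S$ with $|S'|\le|S|$ and $V_{X,S'}\subseteq V_{X,S}$ (strict, since a minimal $(X,Y)$-separator is determined by its $V_{X,\cdot}$-set via $S = N(V_{X,S})\setminus V_{X,S}$). The key structural step is to upgrade this into $V_{S,Y}\subsetneq V_{S',Y}$, so that $S'$ dominates $S$ in the important-separator ordering, contradicting the hypothesis on $S$.

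To establish $V_{S,Y}\subseteq V_{S',Y}$, I would pick $v\in V_{S,Y}$, fix a path $v=u_0,u_1,\dots,u_r$ to some $u_r\in Y\setminus S$ inside $G-S$ (so every $u_i\in V_{S,Y}$), and argue that no $u_i$ can lie in $S'$: indeed, minimality of $S'$ would force $u_i$ to have a neighbor $w\in V_{X,S'}\subseteq V_{X,S}$, and then $u_i w$ would be an edge of $G-S$ joining the disjoint components $V_{S,Y}$ and $V_{X,S}$ of $G-S$, which is impossible. Hence the path survives in $G-S'$, giving $v\in V_{S',Y}$. For strictness, the same argument with the roles of $S,S'$ (and $X,Y$) swapped shows that $V_{S',Y}\subseteq V_{S,Y}$ would imply $V_{X,S}\subseteq V_{X,S'}$, forcing $V_{X,S}=V_{X,S'}$ and thus $S=S'$.

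The main obstacle I anticipate is precisely this structural correspondence $V_{X,S'}\subseteq V_{X,S}\ \Leftrightarrow\ V_{S,Y}\supseteq V_{S',Y}$ for minimal separators, on which both the ``leftmost'' and ``important'' orderings ultimately rest. Minimality is essential throughout: without it one could pad a separator with irrelevant vertices on either side and break the implication that produces the dominating separator.
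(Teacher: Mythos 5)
Your proposal is correct and follows essentially the same route as the paper's proof: one inclusion via Lemma~\ref{lem:left-important} together with promotion of importance from threshold $i$ to threshold $k$, and the reverse inclusion by showing that every important separator $S$ is a leftmost $(X,Y,\leq |S|)^G$-separator. The only difference is that you explicitly justify two steps the paper merely asserts --- the threshold promotion via the domination inequality $|S'|\leq|S|\leq i$, and the structural equivalence $V_{X,S'}\subsetneq V_{X,S} \Leftrightarrow V_{S,Y}\subsetneq V_{S',Y}$ for minimal separators --- which strengthens rather than changes the argument.
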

 \begin{proof}
     Based on Lemma~\ref{lem:left-important}, for  every $1\leq i\leq k$, every leftmost ($X, Y, \leq i$)$^G$-separator is an important ($X, Y, \leq i$)$^G$-separator. Furthermore, every important ($X, Y, \leq i$)$^G$-separator is an important ($X, Y, \leq k$)$^G$-separator. Hence, \[
     \mathcal{B}_{X,Y, \leq k}^{G} \supseteq \bigcup\limits_{i=1}^{k}\mathcal{A}_{X,Y, \leq i}^{G}\]
     
     For the other side, we show that every important ($X, Y, \leq k$)$^G$-separator $S$ is a leftmost ($X, Y, \leq |S|$)$^G$-separator.
     For the sake of contradiction, assume that there exist important ($X, Y, \leq k$)$^G$-separator $S$ that is not a leftmost ($X, Y, \leq |S|$)$^G$-separator (Notice that $\mathcal{A}_{X,Y, \leq |S|}^{G}$ is nonempty since there is at least one separator of $\leq |S|$ between $X$ and $Y$). This means there exist leftmost ($X, Y, \leq |S|$)$^G$-separator $S'$ that $S' \preceq S$. Notice that $|S'| \leq |S|$ and also $V_{S', Y} \supset V_{S, Y}$. This means $S'$ is dominates $S$ and hence $S$ is not important, which is a contradiction.
 \end{proof}

In this paper, we show that the number of leftmost ($X$, $Y$, $\leq k$)-separators is $\leq C_{k-1}$, where $C_n$ is the $n$-th Catalan number. Furthermore, we close the gap and show that this upper bound is tight. Then, we give an $\mathcal{O}(4^{k}k n)$-time algorithm finding all minimal leftmost ($X$, $Y$, $\leq k$)-separators. Notice that $C_{k-1}\sim\frac{4^{k-1}}{\sqrt{\pi}(k-1)^{\frac{3}{2}}}$.

Based on Lemma~\ref{lem:numleft_numimporta}, this implies that the number of important ($X$, $Y$, $\leq k$)-separators is $\leq \sum\limits_{i=1}^{k-1}C_{i}$ and the bound is tight.

One of the important applications of the algorithm that finds all the leftmost separators is treewidth approximation. Treewidth approximation is a crucial problem in computer science. Courcelle's methatheorem \cite{courcelle1990monadic} states that every problem which can be described in monadic second order logic has an FPT algorithm with the treewidth $k$ as its parameter. An FPT algorithm is an algorithm that runs in time $\mathcal{O}(f(k)\, poly(n))$, where $n$ is the input size, $k$ is the parameter (here, treewidth), and $f(\cdot)$ is a computable function.

So, based on Courcelle's methatheorem, many NP-complete graph problems obtain polynomial algorithms (in terms of the input size), and hence they can be solved fast if the treewidth is small. These algorithms require access to tree decompositions of small width. However, finding the exact treewidth itself is another NP-complete problem \cite{arnborg1987complexity}. Here, we look for an approximation algorithm to solve the treewidth problem.

\begin{tcolorbox}
	\textbf{Problem 0.} Given a graph $G = (V, E)$, and an integer $k \in \mathbb{N}$, is the treewidth of $G$ at most $k$? If yes, output a tree decomposition with width $\leq \alpha k$, where $\alpha \geq 1$ is a constant. Otherwise, output a subgraph which is the bottleneck.
\end{tcolorbox}

There are various algorithms solving this problem for different $\alpha$'s (the approximation ratio). As mentioned above, we are interested in constant-factor approximation FPT algorithms. Table \ref{tab:history} shows a brief history of the previous work.
\begin{table}
\begin{center}
\resizebox{\columnwidth}{!}{%
	\begin{tabular}{|| c  | c |  c | c | c ||}
		\hline
		Reference & Approximation& Dependence on $k$ & Dependence on $n$ & Comments\\& Ratio $\alpha$ &&& \\ 
		\hline\hline
		Robertson \& Seymour \cite{robertson1995graph}& $4$ & $ 
		3^{3k}$ &$n^2$ &\\
		\hline
		Lagergren \cite{lagergren1996efficient} & $8$ & $2^{\mathcal{O}(k\log k)}$ &$n \log^2 n$ &\\
		\hline
		Reed \cite{reed1992finding}& $7$ or $5$ & $2^{24k}k!$ & $n \log n$ & Large coefficient \\ &&&&of $k$\\&&&&in the exponent\\
		\hline
		Amir \cite{amir2010approximation}& $4.5$ & $2^{3k}$ & $n^2$ & \\
		\hline
		Amir \cite{amir2010approximation} & $\frac{11}{3}$ & $2^{3.6982 k}$ & $n^2$&\\
		\hline
		Bodlaender et al.\ \cite{bodlaender2016c}& $3$ & $2^{\mathcal{O}(k)}$ & $n \log n$& The coefficient\\&&&&of $k$\\&&&&is not stated\\
		\hline
		Bodlaender et al.\ \cite{bodlaender2016c}& $5$ & $2^{\mathcal{O}(k)}$  & $n$ & The coefficient\\&&&&of $k$\\&&&&is not stated\\
		\hline
		Belbasi \& F\"urer \cite{belbasi2020improvement}& $5$ & $2^{8.766k}$ & $n \log n$ &Small \\&&&&coefficient of $k$\\
		\hline
		Korhonen \cite{korhonen2021single} & 2 & $2^{\mathcal{O}(k)}$&$n$&better coefficient\\&&&& of $k$\\&&&&compared to \cite{bodlaender2016c}\\&&&& but still not\\&&&&very applicable.\\
		\hline
		This paper & $5$ & $2^{6.755k}$ & $n \log n$ & Practical \\&&&& for small $k$\\
		\hline
	\end{tabular}
	}
\end{center}\label{tab:history}
\end{table}

Algorithms \cite{bodlaender2016c} and \cite{korhonen2021single} both run in $2^{\mathcal{O}(k)}n$ time, which is linear in $n$. However, the coefficients of $k$ in the exponent are large. The former one does not mention the exact  coefficient and seems to have a very large coefficient. The latter one, which is a very recent paper, mentions that the coefficient of $k$ in the exponent is some number between 10 and 11. Our goal is to make treewidth approximation more applicable by decreasing the coefficient of $k$ in the exponent. We can afford an extra $\log n$ factor in the running time in order to reduce the huge dependence on $k$. We sacrifice the linear dependence on $n$, and give an algorithm which runs way faster in various cases. So, let us look at $n\log n$-time algorithms. Reed \cite{reed1992finding} gave the first $n \log n$-time algorithm. He did not mention the dependence on $k$ precisely but a detailed analysis in \cite{belbasi2020improvement} shows that it is $\mathcal{O}(2^{24k}k!)$. Here, even though $2^{24k} = o(k!)$, actually $k!$ is reasonable for small $k$'s while $2^{24k}$ is not. Later on, the authors of this paper introduced an $\mathcal{O}(2^{8.766k} n \log n)$-algorithm \cite{belbasi2020improvement}. The algorithm presented in this paper is based on \cite{reed1992finding} and \cite{belbasi2020improvement}. In these papers, when it is known that a good separator $S$ exists between $X$ and $Y$, an efficient algorithm finds an arbitrary separator between $X$ and $Y$. The ability to find leftmost separators allows for an improvement. If $S$ is a good separator between $X$ and $Y$, and $V_{X,S}$ is estimated to be at least as big as $V_{S,Y}$, then the best leftmost separator between $X$ and $Y$ has a definite advantage.

Instead of a balanced separator with minimum size, we consider all leftmost separators (closest possible to the bigger side). This helps us to obtain an $\mathcal{O}(2^{6.755k} n\log n)$-time algorithm with the same approximation ratio of $5$ as in \cite{reed1992finding} and \cite{belbasi2020improvement}. 

Before moving onto the next section, we have to mention that the algorithm to find all leftmost separators works for both directed and undirected graphs.

Below, we summarize our contributions.

\subsection{Our Contributions} 

First, we give a tight upper bound on the number of the leftmost separators.

\begin{theorem}
	Let $G = (V, E)$ be a graph, $X, Y \subseteq V$, and $k \in \mathbb{N}$. The number of leftmost $(X$, $Y$, $\leq k)^G$-separators\footnote{Notice that all leftmost separators are minimal per definition.} is at most $C_{k-1}$\footnote{$C_n$ is the $n$th Catalan number}$= \frac{1}{k}{2(k-1) \choose k-1} \sim \frac{4^{k-1}}{\sqrt{\pi}(k-1)^{3/2}}$. Furthermore, the number of important $(X$, $Y$, $\leq k)^G$-separators is at most $\sum\limits_{i=0}^{k-1}C_i$. Both bounds are tight.
\end{theorem}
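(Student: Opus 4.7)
I would prove the $C_{k-1}$ bound on leftmost separators by strong induction on $k$ via a branching argument, and then derive the important-separator bound from Lemma~\ref{lem:numleft_numimportant}. Let $T(k) = \max_{G, X, Y} |\mathcal{A}_{X, Y, \leq k}^{G}|$; the target is $T(k) \leq C_{k-1}$. The base case $T(1) \leq 1 = C_0$ reduces to the standard fact that minimum $1$-separators form a chain under $\preceq$, so there is a unique $\preceq$-minimal one; $T(2) \leq 1 = C_1$ follows similarly.

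For the inductive step, I would fix the (unique) leftmost minimum $(X, Y)$-separator $L^*$ of $G$, of some size $c \leq k$, and pick a canonical branching vertex $v \in L^*$ adjacent to $V_{X, L^*}$. A key preliminary observation is that every leftmost $(X, Y, \leq k)^G$-separator $S$ must satisfy $V_{X, L^*} \subseteq V_{X, S}$: otherwise $L^*$ itself would witness $L^* \preceq S$ with $L^* \neq S$, contradicting the leftmost property of $S$. I then branch on whether $v \in S$:
\begin{itemize}
    \item If $v \in S$, then $S \setminus \{v\}$ is a leftmost $(X', Y, \leq k{-}1)^{G{-}v}$-separator, where $X' = (X \setminus \{v\}) \cup (N(v) \cap V_{X, L^*})$. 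By induction, this branch contributes at most $T(k-1) = C_{k-2}$ options.
    \item If $v \notin S$, then $v \in V_{X, S}$, so $S$ is a leftmost separator of the instance $(G, X \cup \{v\}, Y)$, whose leftmost minimum cut now has size strictly less than $c$.
\end{itemize}
To convert this branching into the Catalan recurrence $C_{k-1} = \sum_{j=0}^{k-2} C_j C_{k-2-j}$, I would refine the analysis by tracking how the $c$ vertex-disjoint Menger paths through $L^*$ are split on the two sides of $v$: the corresponding sub-parts of $S$ behave as independent leftmost separators of two smaller sub-instances, and summing products over the possible splits produces the recurrence, yielding $T(k) \leq C_{k-1}$.

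For tightness I would construct an explicit family of graphs attaining $C_{k-1}$ leftmost $(X, Y, \leq k)^G$-separators. A natural candidate is a nested gadget built from $k$ parallel paths with carefully placed crossing edges, designed so that Dyck words of semilength $k-1$ are in bijection with the leftmost separators. The important-separator bound is then immediate from Lemma~\ref{lem:numleft_numimportant}:
\[
|\mathcal{B}_{X, Y, \leq k}^{G}| = \Bigl|\bigcup_{i=1}^{k} \mathcal{A}_{X, Y, \leq i}^{G}\Bigr| \leq \sum_{i=1}^{k} C_{i-1} = \sum_{i=0}^{k-1} C_i,
\]
and its tightness is inherited from the same family.

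The main obstacle is extracting the Catalan recurrence rather than a weaker $2^{O(k)}$ bound. A naive branch on $v \in S$ versus $v \notin S$ yields only $T(k) \leq 2 T(k-1)$, giving a $2^k$ bound. Obtaining the product recurrence $T(k) \leq \sum_{j=1}^{k-1} T(j)T(k-j)$ requires identifying two genuinely independent sub-instances separated by $v$; this independence depends crucially on $L^*$ being the leftmost \emph{minimum} cut, which prevents the two halves of $S$ from interacting. Pinning down the tightness construction sharply enough that distinct Dyck words yield distinct separators is also delicate.
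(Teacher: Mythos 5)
Your derivation of the important-separator bound from Lemma~\ref{lem:numleft_numimportant} is exactly the paper's, but the core of your argument for the $C_{k-1}$ bound has genuine gaps, two of which are outright errors. First, your ``key preliminary observation'' is backwards: $V_{X,L^*}\subseteq V_{X,S}$ \emph{is}, by definition, the statement $L^*\preceq S$, so your justification (``otherwise $L^*$ would witness $L^*\preceq S$'') asserts that the failure of a property implies that very property. In fact, for a leftmost $S\neq L^*$ the containment you claim is precisely what is forbidden: since $L^*$ is itself a minimal $(X,Y,\leq k)$-separator, leftmostness of $S$ forces $V_{X,L^*}\not\subseteq V_{X,S}$. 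The usable statement is the paper's Claim~\ref{claim:all_left}, which points the other way: every leftmost separator lies inside $V_{X,L^*}\cup L^*$. Second, in your $v\notin S$ branch the new instance $(G,\,X\cup\{v\},\,Y)$ has minimum cut size \emph{at least} $c$, never ``strictly less than $c$'': every $(X\cup\{v\},Y)$-separator is in particular an $(X,Y)$-separator; progress in that branch must come from the cut size strictly increasing, as in the standard important-separator analysis. Third, and most importantly, the step that would give you Catalan rather than $2^{O(k)}$ --- the product recurrence $T(k)\leq\sum_{j}T(j)\,T(k-j)$ via ``two genuinely independent sub-instances'' --- is never established, and you flag it yourself as the main obstacle. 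It is also doubtful it can be established in this form: in a general graph the portions of $V_{X,L^*}$ attached to different vertices of $L^*$ need not be disconnected from one another, so the pieces of $S$ cutting the different Menger paths interact and a product bound does not follow. Your tightness construction (``parallel paths with crossing edges'') is likewise only a candidate, not a proof.

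The paper reaches $C_{k-1}$ by a mechanism that avoids any instance decomposition: it encodes each root-to-leaf path of the computation tree of the stack-based branching procedure (Algorithm~\ref{algo:3}) as a word over $\{[,]\}$, where ``$[$'' means excluding the popped vertex (and pushing its left neighbours) and ``$]$'' means committing it to $I$; the budget $|S|\leq k$ together with the stack discipline forces these words to be prefixes of full $k'$-parenthesizations with $k'\leq k$, and a leaf count on the compact $\leq k$-parentheses tree bounds the number of leaves --- hence of leftmost separators --- by the number of full $k$-parenthesizations, which is $C_{k-1}$. Tightness uses the complete binary tree $\hat G$ of depth at least $k+1$ with $X$ the set of leaves and $Y$ the root: the leftmost $(X,Y,\leq k)$-separators are exactly the leaf sets of full binary subtrees with $k$ leaves (no smaller separator is leftmost, since any vertex can be replaced by its two children), and there are exactly $C_{k-1}$ of them; the same graph makes the union in Lemma~\ref{lem:numleft_numimportant} disjoint, giving tightness of $\sum_i C_i$ for important separators. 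If you want to rescue an inductive proof, this tree is the one setting where your product recurrence genuinely holds; for general graphs you would need to prove a decomposition lemma that the paper's encoding argument renders unnecessary.
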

Then, we give an algorithm finding all leftmost separators.
\begin{theorem}
	Let $G = (V, E)$ be a graph, $X, Y \subseteq V$, and $k \in \mathbb{N}$. There is an $\mathcal{O}(2^{2k}\sqrt{k}n)$-time algorithm which outputs all the leftmost $(X$, $Y$, $\leq k)^G$-separators.
\end{theorem}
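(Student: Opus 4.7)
The plan is to run a recursive branching procedure guided by the leftmost minimum $(X,Y)$-cut. At each subproblem $(G, X, Y, k)$ I would first compute the leftmost min cut $C$ of size $\lambda$ using a single max-flow computation followed by a BFS from $X$ in the residual graph (this costs $\mathcal{O}(\lambda n)$). If $\lambda > k$ the branch returns nothing. Otherwise I emit $C$, then designate a vertex $v \in C$ and recurse on two subinstances: branch~(A) solves $(G - v, X, Y, k-1)$ and prepends $v$ to every separator it returns, while branch~(B) solves $(G, X \cup \{v\}, Y, k)$. The final output is the union of all emissions across the recursion tree.

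For correctness I need three ingredients. First, $C$ is a leftmost $(X, Y, \leq k)^G$-separator: by the submodularity of vertex cuts, the leftmost min cut satisfies $C \preceq S$ for every $(X, Y, \leq k)^G$-separator $S$. Second, every other leftmost separator falls into exactly one of branches~(A) or~(B) according to whether $v \in S$, which rules out duplicates. Third, the recursively produced sets really are leftmost in the original $G$: this follows because the partial order $\preceq$ and the left-reachable set $V_{X, S}$ behave coherently under the two operations ``delete $v$'' and ``move $v$ into $X$.''

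To bound the running time I would combine the Catalan-number count of leftmost separators from the previous theorem with the observation that a binary recursion tree has at most twice as many nodes as leaves, so the total number of recursive calls is $\mathcal{O}(C_{k-1}) = \mathcal{O}(4^k / k^{3/2})$. I would use the progress measure $\mu = 2k - \lambda$: in branch~(A), removing $v \in C$ drops the min cut of $G - v$ from $\lambda$ to $\lambda - 1$ while $k$ drops by $1$, so $\mu$ drops by one; in branch~(B), $\mu$ drops by at least one provided $v$ is picked so that moving it into $X$ strictly raises the min cut. Each recursive call maintains the max flow incrementally in $\mathcal{O}(n)$ time via a single augmenting path, and writing out an emitted separator costs $\mathcal{O}(k)$, giving a total running time of $\mathcal{O}(C_{k-1}) \cdot \mathcal{O}(n + k) = \mathcal{O}(4^k n / \sqrt{k})$, which implies the claimed $\mathcal{O}(2^{2k}\sqrt{k}\,n)$ bound.

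The main technical obstacle I anticipate is choosing $v \in C$ in branch~(B) so that moving $v$ into $X$ provably strictly increases the min cut --- this is what keeps the measure argument intact. A careless choice, for instance when several parallel min cuts sit just to the right of $C$, can leave $\lambda$ unchanged and break the analysis. The fix is to select $v$ as a vertex of $C$ adjacent to $V_{S, Y}$ on the augmenting-path side of the residual, and then argue from the leftmost property of $C$ that no $(X \cup \{v\}, Y)^G$-separator of size $\lambda$ avoids $v$. Getting this structural lemma right, together with showing that the incremental max-flow maintenance across the whole recursion really costs only $\mathcal{O}(n)$ per node so that the $1/\sqrt{k}$ savings from the Catalan bound is preserved, will be the delicate portion of the proof.
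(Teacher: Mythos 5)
Your proposal has two fatal structural errors, both stemming from running the branching in the wrong direction. First, your soundness ingredient is false: submodularity gives $C \preceq S$ only for separators $S$ of \emph{minimum} size $\lambda$, not for all separators of size $\leq k$. Once $k > \lambda$, a larger separator can lie strictly to the left of the leftmost minimum cut, so $C$ is in general \emph{not} a leftmost $(X,Y,\leq k)^G$-separator, and emitting it at every node puts wrong sets into your output. The true relation is the reverse one (Claim~\ref{claim:all_left} of the paper): every leftmost $(X,Y,\leq k)^G$-separator $S$ satisfies $S \subseteq V_{X,C} \cup C$, i.e.\ the leftmost minimum cut is an extreme point on the \emph{right}, not on the left. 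Concretely, let $\hat G$ be a complete binary tree, $X$ its leaves, $Y = \{r\}$ its root, and $k = 2$: the leftmost minimum cut is $\{r\}$, while the unique leftmost $(X,Y,\leq 2)$-separator is the pair consisting of the two children of $r$. Your algorithm emits $\{r\}$, which is not leftmost. This is exactly why the paper's Algorithm~\ref{algo:3} does not output the current separator immediately, but only after all exclude-branches have failed, i.e.\ after certifying via the boolean $\mathit{leftmost}$ that no $(X,Y,\leq k)$-separator of size $\leq k$ lies strictly further left.

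Second, your branch~(B) pushes separators toward $Y$, which destroys completeness. If $S$ is a leftmost separator with $v \notin S$, then, since $S$ lies weakly to the left of $C$ and $C$ is a minimal separator, the vertex $v \in C \setminus S$ still reaches $Y$ in $G - S$ (its path to $Y$ through $V_{C,Y}$ avoids $S$ because $S \subseteq V_{X,C} \cup C$), so $v \in V_{S,Y}$. Hence $S$ does not separate $X \cup \{v\}$ from $Y$ at all, and the recursive call on $(G, X \cup \{v\}, Y, k)$ can never return it; in the binary-tree example this call asks for a separator between $X \cup \{r\}$ and $\{r\}$, which does not exist, so the children separator is never found by either branch. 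Moving $v$ into $X$ is the classical branching for \emph{important} separators (which are pushed toward $Y$); for \emph{leftmost} separators one needs the mirror operation, as in Algorithm~\ref{algo:3}: exclude $v$ by replacing it in the right set with its left neighborhood, $Y' = (S \setminus \{v\}) \cup (N(v) \cap V_{X,S})$, which forces the next separator strictly to the left. Finally, even with the branching fixed, your time analysis needs more than ``number of leaves equals number of leftmost separators'': branches can terminate without emitting anything, so one must bound the size of the whole computation tree, which the paper does via the compact $\leq k$-parentheses tree and restricted Dyck word correspondence rather than by citing the separator count alone.
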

Now, we use the algorithm finding all the leftmost separators to solve treewidth approximation much faster.
\begin{theorem}
	Let $G = (V, E)$ be a graph, and $k \in \mathbb{N}$. There is an algorithm that either outputs a tree decomposition of $G$ with width $\leq 5(k-1)$, or determines that $tw(G) > k-1$ in time $\mathcal{O}\left(2^{6.755k}n \log n\right)$.
\end{theorem}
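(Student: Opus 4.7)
The plan is to adapt the Reed-style recursive framework (as used in \cite{reed1992finding} and \cite{belbasi2020improvement}), replacing the enumeration of arbitrary $(X,Y)$-separators by the enumeration of leftmost separators from the preceding theorem. At each recursive call we maintain a ``context'' bag $W$ of size $\leq 5(k-1)$ together with a subgraph $G'$ in which to build a tree decomposition rooted at a node whose bag extends $W$. To make progress at a node, we guess a partition $(X,Y)$ of $W$ into an intended ``bigger'' side $X$ and ``smaller'' side $Y$, invoke the leftmost-separator enumerator to list every leftmost $(X,Y,\le k)^{G'}$-separator $S$, and for each such $S$ recurse on the two induced subproblems with new context bags obtained from $W \cup S$ restricted to the relevant side.

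Correctness has two parts. If $tw(G)\le k-1$, then for some partition of the current bag $W$ there is a balanced $(X,Y)$-separator of size $\le k$; by \cref{lem:left-important} every such separator is either itself leftmost or is dominated in the leftmost order by a leftmost separator of no larger size that lies at least as far towards $X$ (hence is at least as balanced), so restricting the search to leftmost candidates loses nothing. If no partition of $W$ admits a leftmost $(X,Y,\le k)^{G'}$-separator, Menger's theorem certifies $tw(G)>k-1$ and we output the corresponding obstruction. Throughout, the new bags have size at most $|W|+|S|\le 5(k-1)+k$, and can be pruned back to $5(k-1)$ using the standard cleanup step from \cite{reed1992finding} and \cite{belbasi2020improvement}, which is what fixes the approximation ratio at $5$.

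For the running time, balanced-separator recursion has depth $O(\log n)$, giving the global $\log n$ factor. At a single recursive node, the work is dominated by (i) the number of partitions of $W$ we have to try, bounded by a binomial coefficient $\binom{5(k-1)}{\beta(k-1)}$ around the targeted balance ratio $\beta$, contributing roughly $2^{\alpha k}$ where $\alpha$ is the corresponding binary-entropy value, and (ii) the leftmost-separator enumeration, which by the preceding theorem costs $O(2^{2k}\sqrt{k}\,n')$ on a subgraph with $n'$ vertices. The per-level cost is therefore $2^{(2+\alpha)k}\cdot n$, and optimizing $\beta$ numerically so as to minimize $2+\alpha$ while still yielding enough imbalance to guarantee a constant-fraction cut yields the exponent $6.755$.

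The hard part will be the amortized accounting that keeps per-level work linear in $n$ despite the double branching (over partitions of $W$ and over leftmost separators): one must share the max-flow / unit-capacity augmentation structure across the many candidate partitions at a single node and reuse it from parent to child, instead of recomputing flows from scratch, exactly in the spirit of \cite{belbasi2020improvement}. Once this sharing is in place, the contributions at a fixed recursion level sum to $O(2^{6.755k}\,n)$, and telescoping over the $O(\log n)$ levels produces the claimed $O(2^{6.755k}\,n\log n)$ bound.
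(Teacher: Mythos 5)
Your high-level plan (Reed-style recursion with the leftmost-separator enumerator replacing a single max-flow separator) matches the paper's, but two of your steps would fail as stated. First, the set you branch over is wrong. You propose guessing a partition of the context bag $W$ of size $5(k-1)$, which already costs about $3^{5k}\approx 2^{7.92k}$ three-way placements (into $X$, $Y$, or the separator) and overshoots the claimed exponent before the separator enumeration is even paid for. The paper instead branches over a set of at most roughly $3k$ elements (the $\le 3k$ DFS-subtree \emph{representatives} in a split by volume, resp.\ the $(3k-2)$-element set $W$ in a split of $W$): choosing $\le k$ of them for the separator and placing the remaining $\le 2k$ left or right costs $\binom{3k}{k}2^{2k}\approx 3^{3k}=2^{(3\log_2 3)k}$, and multiplying by the $\mathcal{O}(4^k k^{-1/2}n)$ enumeration of all $C_{k-1}$ leftmost separators gives exactly $2^{(3\log_2 3+2)k}<2^{6.755k}$. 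Your $\binom{5(k-1)}{\beta(k-1)}$ with a numerically optimized $\beta$ does not reproduce this arithmetic, and the representative machinery you omit is what keeps the branching set down to $O(k)$ elements while still controlling balance by \emph{volume}.

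Second, your correctness claim that a leftmost separator lying further towards the bigger side $X$ is ``at least as balanced'' is false for volume: pushing $S$ towards $X$ shrinks $V_{X,S}$, i.e.\ it shrinks precisely the side you need to keep large, so balance can be destroyed and the $O(\log n)$ recursion depth is not justified. The paper's resolution is that the leftmost separator found realizes the \emph{same weight partition} of the representatives as the (unknown) strong centroid, and a separate ``proper volume'' argument (vertices whose representative lands on their own side) converts the weight guarantee back into a volume lower bound of $\Omega(\epsilon n)$ for \emph{both} sides; this is the step that licenses the $T'(n')\le \frac{c}{\epsilon}f(k)n'\log n'$ induction. Finally, the ``hard part'' you defer --- sharing flow structure across the many partitions of $W$ at one node --- is not actually needed: the paper simply runs the enumerator independently for each placement and multiplies the counts; incremental reuse of the disjoint-path family $\mathcal{P}$ happens only \emph{inside} a single enumeration (parent to child in its computation tree), which is already provided by the preceding theorem.
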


\section{Finding the Leftmost Minimum Size  ($X$, $Y$, $\leq k$)$^G$-Separator}\label{sec:flumgrohe}

In this section, we review an algorithm for the following problem which has been fully described in \cite{FlumGrohe} in details. It is based on \cite{menger1927allgemeinen}, \cite{robertson1995graph}, and \cite{reed1992finding}.\\
\begin{tcolorbox}
	\textbf{Problem 1.} Given a graph $G = (V, E)$, sets $X, Y \subseteq V$, and $k \in \mathbb{N}$, is there an ($X$, $Y$, $\leq k$)-separator?
\end{tcolorbox}
\begin{lemma}{(Lemma 11.20 of \cite{FlumGrohe})} There is an algorithm which solves Problem~1 in time $\mathcal{O}(k |V|)$.
\end{lemma}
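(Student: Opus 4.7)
The plan is to rephrase the existence question as a unit-capacity maximum $s$-$t$ flow problem and solve it with at most $k+1$ rounds of augmenting paths.

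First I would construct a directed capacitated network $G'$: split every $v \in V$ into an in-copy $v^{\text{in}}$ and an out-copy $v^{\text{out}}$ joined by a directed arc of unit capacity; replace each edge $\{u,v\} \in E$ by a pair of arcs $u^{\text{out}} \to v^{\text{in}}$ and $v^{\text{out}} \to u^{\text{in}}$ of unbounded capacity; add a super-source $s$ with an unbounded-capacity arc $s \to x^{\text{in}}$ for every $x \in X$ and a super-sink $t$ with an unbounded-capacity arc $y^{\text{out}} \to t$ for every $y \in Y$. By Menger's theorem, the minimum size of an $(X,Y)$-separator in $G$ equals the value of a maximum $s$-$t$ flow in $G'$; moreover any minimum $s$-$t$ cut saturates only split arcs, so it projects back to a minimum separator in $G$.

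Starting from the zero flow, I would then iteratively find an augmenting $s$-$t$ path by BFS in the current residual network and push one unit of flow along it. The loop halts as soon as either (i) no augmenting path remains, in which case the current flow value is $\leq k$ and the split vertices whose in-copy is reachable from $s$ but whose out-copy is not form an explicit $(X,Y,\leq k)$-separator (output ``yes''), or (ii) $k+1$ successful augmentations have already been completed, which certifies that the max flow exceeds $k$ and hence no $(X,Y,\leq k)$-separator exists (output ``no'').

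Each BFS visits every vertex and every residual arc at most once, costing $O(|V|+|E|)$, and the residual network can be updated along an augmenting path of length $\ell$ in $O(\ell)$ time by maintaining reverse-arc pointers and a one-bit flow indicator on each split arc. Since the loop runs at most $k+1$ times, the total work is $O(k(|V|+|E|))$, which matches the claimed $O(k|V|)$ bound under the sparse-graph convention implicit in \cite{FlumGrohe}. The main obstacle I anticipate is the bookkeeping that makes each augmentation truly linear in the path length rather than in the size of the whole graph; this is handled by the standard adjacency-list layout in which every arc stores a constant-time pointer to its reverse, together with a visited-flag array whose reset after each BFS is amortised against the exploration it follows.
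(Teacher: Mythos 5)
Your proposal is correct and takes essentially the same route as the paper, which also reduces the problem via Menger's theorem to at most $k+1$ rounds of augmentation on a vertex-capacitated flow (phrased there as $\mathcal{P}$-augmenting walks on disjoint paths rather than an explicit split network, and extracting the separator as the first vertex of each path not reachable by an augmenting walk --- the same cut your reachability criterion identifies). The only caveat, which you already flag, is that each round genuinely costs $\mathcal{O}(|V|+|E|)$, so the stated $\mathcal{O}(k|V|)$ bound relies on the same sparsity assumption in the paper's citation of Flum--Grohe as in your argument.
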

The proof has been given in \cite{FlumGrohe}. Here, we just briefly mention the crux of the idea and apply a tiny modification. Later on, we write the pseudocode of Algorithm~\ref{algo:1} since we use it in Algorithm \ref{sec:min_left}. The main idea is based on the following theorem of Menger \cite{menger1927allgemeinen}.
\begin{theorem}(A version of Menger's theorem \cite{menger1927allgemeinen})
	Let $G = (V, E)$ be a graph, and $X, Y \subseteq V$. Then, the size of the minimum  $(X, Y)^G$-separator is equal to the maximal number of disjoint paths from vertices of $X$ to vertices of $Y$.
\end{theorem}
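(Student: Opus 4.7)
\medskip
\noindent\textbf{Proof proposal.} The plan is to prove the two inequalities separately. The \emph{easy} direction, that the size of the minimum $(X,Y)^G$-separator is at least the maximum number of vertex-disjoint $X$-$Y$ paths, follows immediately by a pigeonhole argument: given any collection of $p$ pairwise vertex-disjoint paths $P_1, \dots, P_p$ from $X$ to $Y$, every $(X,Y)^G$-separator $S$ must contain at least one vertex from each $P_i$, because deleting $S$ must destroy the path. Since the $P_i$ are vertex-disjoint, these contributions are distinct, yielding $|S| \geq p$.

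The \emph{hard} direction, that min separator size is at most max disjoint paths, is the substantive content. My plan is to reduce it to the edge version of Menger's theorem (equivalently, to the max-flow min-cut theorem) on an auxiliary directed graph $G'$ obtained by a standard vertex-splitting gadget. For each $v \in V$, introduce two copies $v^-$ and $v^+$ linked by a directed edge $v^- \to v^+$; for each undirected edge $\{u,v\}$ of $G$ insert the two directed edges $u^+ \to v^-$ and $v^+ \to u^-$. Add a super-source $s$ with edges $s \to x^-$ for every $x \in X$, and a super-sink $t$ with edges $y^+ \to t$ for every $y \in Y$, giving the edges $s \to \cdot$ and $\cdot \to t$ very large capacity and every other edge capacity $1$. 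Then $s$-$t$ edge-disjoint paths in $G'$ correspond bijectively to $X$-$Y$ vertex-disjoint paths in $G$ (the unit-capacity ``internal'' edges $v^- \to v^+$ force vertex-disjointness), and a minimum-capacity $s$-$t$ edge cut in $G'$ uses only internal edges $v^- \to v^+$ and so corresponds to an $(X,Y)^G$-separator of the same cardinality.

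Having set up this correspondence, I would invoke max-flow min-cut on $G'$ to conclude equality of the two quantities. For self-containment, max-flow min-cut itself can be proved via Ford--Fulkerson: start with the zero flow, repeatedly augment along a residual $s$-$t$ path; when no augmenting path remains, the set $S^*$ of vertices reachable from $s$ in the residual graph yields an $s$-$t$ cut whose capacity equals the value of the current flow, giving both max flow $\leq$ min cut and its reverse.

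The main obstacle I anticipate is bookkeeping around the corner cases, not the core argument: one must handle vertices in $X \cap Y$ cleanly (a vertex $v \in X \cap Y$ is itself a trivial $X$-$Y$ ``path'' and must lie in every separator, which the splitting gadget captures by forcing the internal edge $v^- \to v^+$ into every cut); one must verify that a minimum cut in $G'$ contains only internal edges (immediate from the large capacities assigned to the $s$- and $t$-incident edges); and one must verify the bijection between integral $s$-$t$ flows of value $p$ and families of $p$ vertex-disjoint $X$-$Y$ paths, using flow decomposition and the unit capacities to rule out shared interior vertices. Once these verifications are in place, the theorem follows.
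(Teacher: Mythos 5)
Your proof is correct in outline, but it takes a different route from the paper: the paper does not prove this statement at all — it cites Menger's 1927 result and then, following Lemma~11.20 of Flum--Grohe, works \emph{directly} on the vertex-capacitated graph $G$ via the notion of a $\mathcal{P}$-augmenting walk, so that when no augmenting walk exists the set $C(\mathcal{P})$ of first non-reachable vertices on each path is simultaneously a minimum separator and the \emph{leftmost} one. Your reduction to the edge version via the vertex-splitting gadget $v^- \to v^+$ plus max-flow min-cut is the standard textbook argument and is perfectly valid as a proof of the equality itself; what it does not buy you is the extra structural information ($R(\mathcal{P})$, $C(\mathcal{P})$, and the leftmostness of the resulting separator) that the paper's direct augmenting-walk formulation provides and that the rest of the paper relies on. One small patch you should make: you assign capacity $1$ to \emph{every} edge other than those incident to $s$ and $t$, including the adjacency edges $u^+ \to v^-$, so your claim that a minimum cut consists only of internal edges does not follow merely from the large capacities on the $s$- and $t$-edges. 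Either give the adjacency edges infinite capacity, or add the one-line exchange argument that any cut edge $u^+ \to v^-$ can be replaced by the internal edge $v^- \to v^+$ without increasing the cut (since every $s$--$t$ path entering $v^-$ must leave through $v^- \to v^+$). With that fixed, your argument is complete.
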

So, the problem reduces to finding the number of disjoint paths from $X$ to $Y$, and this itself reduces to a network flow problem (with the capacities on the vertices and not the edges).

\begin{definition}
	Let $\mathcal{P}$ be a family of disjoint paths from a set $X \subseteq V$ to a set $Y \subseteq V$ in a graph $G= (V, E)$. We call $Q$ a \emph{$\mathcal{P}$-augmenting walk} if\/ $Q = q_1\cdots q_s$ such that $q_i \in V$ for all $i$ with $1\leq i \leq s$, and $\{q_i, q_{i+1}\} \in E$ for all $i$ with $1\leq i \leq s-1$, and also
	\begin{enumerate}
		\item No edge shows up twice on $Q$,
		\item If $Q$ intersects $P = p_1 \cdots p_l \in \mathcal{P}$ at $q_i = p_j$, then $i \neq s$ and $q_{i+1} = p_{j-1}$ (i.e., if $Q$ intersects a path $P \in \mathcal{P}$, they share at least one edge and that edge appears in opposite directions on $P$ and $Q$). 
	\end{enumerate}
\end{definition}
\begin{claim}\label{col:menger}(Claim 1 of Lemma 11.20 in \cite{FlumGrohe}) 
	Let $\mathcal{P}$ be a family of pairwise disjoint paths from $X$ to $Y$ and let $Q$ be a $\mathcal{P}$-augmenting walk from $X$ to $Y$. Then, there exists a family of pairwise disjoint paths from $X$ to $Y$ of size $|\mathcal{P}| + 1$.
\end{claim}

The idea is to think of paths in $\mathcal{P}$ and also $Q$ as sending a unit flow from $X$ to $Y$\footnote{Again, notice that the capacities are on the vertices and not the edges.} (the flows in opposite directions cancel each other  when $Q$ intersects with a path in $\mathcal{P}$). This gives a new family of pairwise disjoint paths with one more disjoint path. Then, we keep trying to find another $\mathcal{P}$-augmenting walk and update $\mathcal{P}$ until it is impossible to proceed (no $\mathcal{P}$-augmenting walk is found). Check Figure~\ref{fig:augmenting}; initially, $\mathcal{P} = \{P_1, P_2, P_3, P_4\}$. We find an $\mathcal{P}$-augmenting path $Q$. After sending a unit flow through all the paths, we get a new set of paths $\mathcal{P}' = \{\textcolor{my_red}{P'_0}, \textcolor{my_green}{P'_1}, \textcolor{my_blue}{P'_2}, \textcolor{my_purple}{P'_3}, \textcolor{my_olive}{P'_4}\}$ of size one more.

\begin{figure}
    \centering

\tikzset{every picture/.style={line width=0.75pt}} 

\begin{tikzpicture}[x=0.35pt,y=0.35pt,yscale=-1,xscale=1]

\draw  [draw opacity=0][fill={rgb, 255:red, 245; green, 166; blue, 35 }  ,fill opacity=0.3 ] (62,339.5) .. controls (63,304.5) and (400.81,330.01) .. (416.94,337.05) .. controls (433.07,344.08) and (442.14,352.36) .. (433,361.5) .. controls (423.86,370.64) and (282.68,382.81) .. (282,402.5) .. controls (281.32,422.19) and (173.88,427.49) .. (161,450.5) .. controls (148.12,473.51) and (426.66,451.54) .. (424,467.5) .. controls (421.34,483.46) and (405,490.5) .. (403,499.5) .. controls (401,508.5) and (134,480.5) .. (114,450.5) .. controls (94,420.5) and (229,433.5) .. (240,400.45) .. controls (251,367.4) and (342.75,365.04) .. (384,357.5) .. controls (425.25,349.96) and (61,374.5) .. (62,339.5) -- cycle ;
\draw [color={rgb, 255:red, 207; green, 206; blue, 206 }  ,draw opacity=1 ]   (76,379.03) -- (621,340.03) ;
\draw [color={rgb, 255:red, 207; green, 206; blue, 206 }  ,draw opacity=1 ]   (76,419.03) -- (621,380.03) ;
\draw [color={rgb, 255:red, 207; green, 206; blue, 206 }  ,draw opacity=1 ]   (76,459.03) -- (621,420.03) ;
\draw [color={rgb, 255:red, 207; green, 206; blue, 206 }  ,draw opacity=1 ]   (76,499.03) -- (621,460.03) ;
\draw   (41,50) .. controls (41,42.27) and (47.27,36) .. (55,36) -- (97,36) .. controls (104.73,36) and (111,42.27) .. (111,50) -- (111,239) .. controls (111,246.73) and (104.73,253) .. (97,253) -- (55,253) .. controls (47.27,253) and (41,246.73) .. (41,239) -- cycle ;
\draw  [fill={rgb, 255:red, 0; green, 0; blue, 0 }  ,fill opacity=1 ] (72,62) .. controls (72,59.79) and (73.79,58) .. (76,58) .. controls (78.21,58) and (80,59.79) .. (80,62) .. controls (80,64.21) and (78.21,66) .. (76,66) .. controls (73.79,66) and (72,64.21) .. (72,62) -- cycle ;
\draw  [fill={rgb, 255:red, 0; green, 0; blue, 0 }  ,fill opacity=1 ] (72,102) .. controls (72,99.79) and (73.79,98) .. (76,98) .. controls (78.21,98) and (80,99.79) .. (80,102) .. controls (80,104.21) and (78.21,106) .. (76,106) .. controls (73.79,106) and (72,104.21) .. (72,102) -- cycle ;
\draw  [fill={rgb, 255:red, 0; green, 0; blue, 0 }  ,fill opacity=1 ] (72,142) .. controls (72,139.79) and (73.79,138) .. (76,138) .. controls (78.21,138) and (80,139.79) .. (80,142) .. controls (80,144.21) and (78.21,146) .. (76,146) .. controls (73.79,146) and (72,144.21) .. (72,142) -- cycle ;
\draw  [fill={rgb, 255:red, 0; green, 0; blue, 0 }  ,fill opacity=1 ] (72,182) .. controls (72,179.79) and (73.79,178) .. (76,178) .. controls (78.21,178) and (80,179.79) .. (80,182) .. controls (80,184.21) and (78.21,186) .. (76,186) .. controls (73.79,186) and (72,184.21) .. (72,182) -- cycle ;
\draw  [fill={rgb, 255:red, 0; green, 0; blue, 0 }  ,fill opacity=1 ] (72,222) .. controls (72,219.79) and (73.79,218) .. (76,218) .. controls (78.21,218) and (80,219.79) .. (80,222) .. controls (80,224.21) and (78.21,226) .. (76,226) .. controls (73.79,226) and (72,224.21) .. (72,222) -- cycle ;

\draw   (586,51) .. controls (586,43.27) and (592.27,37) .. (600,37) -- (642,37) .. controls (649.73,37) and (656,43.27) .. (656,51) -- (656,240) .. controls (656,247.73) and (649.73,254) .. (642,254) -- (600,254) .. controls (592.27,254) and (586,247.73) .. (586,240) -- cycle ;
\draw  [fill={rgb, 255:red, 0; green, 0; blue, 0 }  ,fill opacity=1 ] (617,63) .. controls (617,60.79) and (618.79,59) .. (621,59) .. controls (623.21,59) and (625,60.79) .. (625,63) .. controls (625,65.21) and (623.21,67) .. (621,67) .. controls (618.79,67) and (617,65.21) .. (617,63) -- cycle ;
\draw  [fill={rgb, 255:red, 0; green, 0; blue, 0 }  ,fill opacity=1 ] (617,103) .. controls (617,100.79) and (618.79,99) .. (621,99) .. controls (623.21,99) and (625,100.79) .. (625,103) .. controls (625,105.21) and (623.21,107) .. (621,107) .. controls (618.79,107) and (617,105.21) .. (617,103) -- cycle ;
\draw  [fill={rgb, 255:red, 0; green, 0; blue, 0 }  ,fill opacity=1 ] (617,143) .. controls (617,140.79) and (618.79,139) .. (621,139) .. controls (623.21,139) and (625,140.79) .. (625,143) .. controls (625,145.21) and (623.21,147) .. (621,147) .. controls (618.79,147) and (617,145.21) .. (617,143) -- cycle ;
\draw  [fill={rgb, 255:red, 0; green, 0; blue, 0 }  ,fill opacity=1 ] (617,183) .. controls (617,180.79) and (618.79,179) .. (621,179) .. controls (623.21,179) and (625,180.79) .. (625,183) .. controls (625,185.21) and (623.21,187) .. (621,187) .. controls (618.79,187) and (617,185.21) .. (617,183) -- cycle ;
\draw  [fill={rgb, 255:red, 0; green, 0; blue, 0 }  ,fill opacity=1 ] (617,223) .. controls (617,220.79) and (618.79,219) .. (621,219) .. controls (623.21,219) and (625,220.79) .. (625,223) .. controls (625,225.21) and (623.21,227) .. (621,227) .. controls (618.79,227) and (617,225.21) .. (617,223) -- cycle ;

\draw    (76,102) -- (621,63) ;
\draw    (76,142) -- (621,103) ;
\draw    (76,182) -- (621,143) ;
\draw    (76,222) -- (621,183) ;
\draw  [dash pattern={on 4.5pt off 4.5pt}]  (76,62) .. controls (116,32) and (389,-13) .. (491,65) ;
\draw  [fill={rgb, 255:red, 0; green, 0; blue, 0 }  ,fill opacity=1 ] (490,72) .. controls (490,69.79) and (491.79,68) .. (494,68) .. controls (496.21,68) and (498,69.79) .. (498,72) .. controls (498,74.21) and (496.21,76) .. (494,76) .. controls (491.79,76) and (490,74.21) .. (490,72) -- cycle ;
\draw  [fill={rgb, 255:red, 0; green, 0; blue, 0 }  ,fill opacity=1 ] (445,75) .. controls (445,72.79) and (446.79,71) .. (449,71) .. controls (451.21,71) and (453,72.79) .. (453,75) .. controls (453,77.21) and (451.21,79) .. (449,79) .. controls (446.79,79) and (445,77.21) .. (445,75) -- cycle ;
\draw [line width=1.5]  [dash pattern={on 5.63pt off 4.5pt}]  (491,65) -- (447,66) ;
\draw  [dash pattern={on 4.5pt off 4.5pt}]  (341,114) .. controls (359.45,100.16) and (374.92,100.37) .. (390.16,99.3) .. controls (407.96,98.06) and (425.45,95.09) .. (447,66) ;
\draw [line width=1.5]  [dash pattern={on 5.63pt off 4.5pt}]  (341,114) -- (304,117) ;
\draw  [fill={rgb, 255:red, 0; green, 0; blue, 0 }  ,fill opacity=1 ] (337,123) .. controls (337,120.79) and (338.79,119) .. (341,119) .. controls (343.21,119) and (345,120.79) .. (345,123) .. controls (345,125.21) and (343.21,127) .. (341,127) .. controls (338.79,127) and (337,125.21) .. (337,123) -- cycle ;
\draw  [fill={rgb, 255:red, 0; green, 0; blue, 0 }  ,fill opacity=1 ] (301,126) .. controls (301,123.79) and (302.79,122) .. (305,122) .. controls (307.21,122) and (309,123.79) .. (309,126) .. controls (309,128.21) and (307.21,130) .. (305,130) .. controls (302.79,130) and (301,128.21) .. (301,126) -- cycle ;
\draw  [dash pattern={on 4.5pt off 4.5pt}]  (214,164) .. controls (232.45,150.16) and (231.92,151.37) .. (247.16,150.3) .. controls (262.4,149.24) and (249,168) .. (304,117) ;
\draw [line width=1.5]  [dash pattern={on 5.63pt off 4.5pt}]  (214,164) -- (177,167) ;
\draw  [fill={rgb, 255:red, 0; green, 0; blue, 0 }  ,fill opacity=1 ] (210,173) .. controls (210,170.79) and (211.79,169) .. (214,169) .. controls (216.21,169) and (218,170.79) .. (218,173) .. controls (218,175.21) and (216.21,177) .. (214,177) .. controls (211.79,177) and (210,175.21) .. (210,173) -- cycle ;
\draw  [fill={rgb, 255:red, 0; green, 0; blue, 0 }  ,fill opacity=1 ] (174,176) .. controls (174,173.79) and (175.79,172) .. (178,172) .. controls (180.21,172) and (182,173.79) .. (182,176) .. controls (182,178.21) and (180.21,180) .. (178,180) .. controls (175.79,180) and (174,178.21) .. (174,176) -- cycle ;
\draw  [dash pattern={on 4.5pt off 4.5pt}]  (483,183) .. controls (416,140) and (117,221) .. (177,167) ;
\draw [line width=1.5]  [dash pattern={on 5.63pt off 4.5pt}]  (483,183) -- (446,186) ;
\draw  [fill={rgb, 255:red, 0; green, 0; blue, 0 }  ,fill opacity=1 ] (479,192) .. controls (479,189.79) and (480.79,188) .. (483,188) .. controls (485.21,188) and (487,189.79) .. (487,192) .. controls (487,194.21) and (485.21,196) .. (483,196) .. controls (480.79,196) and (479,194.21) .. (479,192) -- cycle ;
\draw  [fill={rgb, 255:red, 0; green, 0; blue, 0 }  ,fill opacity=1 ] (443,195) .. controls (443,192.79) and (444.79,191) .. (447,191) .. controls (449.21,191) and (451,192.79) .. (451,195) .. controls (451,197.21) and (449.21,199) .. (447,199) .. controls (444.79,199) and (443,197.21) .. (443,195) -- cycle ;
\draw  [dash pattern={on 4.5pt off 4.5pt}]  (621,223) .. controls (595.57,206.68) and (525,255) .. (524,235) .. controls (523,215) and (408.78,219.5) .. (446,186) ;
\draw   (41,327.03) .. controls (41,319.3) and (47.27,313.03) .. (55,313.03) -- (97,313.03) .. controls (104.73,313.03) and (111,319.3) .. (111,327.03) -- (111,516.03) .. controls (111,523.76) and (104.73,530.03) .. (97,530.03) -- (55,530.03) .. controls (47.27,530.03) and (41,523.76) .. (41,516.03) -- cycle ;
\draw   (586,328.03) .. controls (586,320.3) and (592.27,314.03) .. (600,314.03) -- (642,314.03) .. controls (649.73,314.03) and (656,320.3) .. (656,328.03) -- (656,517.03) .. controls (656,524.76) and (649.73,531.03) .. (642,531.03) -- (600,531.03) .. controls (592.27,531.03) and (586,524.76) .. (586,517.03) -- cycle ;
\draw [color={rgb, 255:red, 238; green, 17; blue, 17 }  ,draw opacity=1 ][line width=1.5]    (76,339.03) .. controls (116,309.03) and (392,267.03) .. (494,345.03) ;
\draw [color={rgb, 255:red, 184; green, 184; blue, 184 }  ,draw opacity=1 ][line width=0.75]  [dash pattern={on 4.5pt off 4.5pt}]  (491,342.03) -- (447,343.03) ;
\draw [color={rgb, 255:red, 126; green, 211; blue, 33 }  ,draw opacity=1 ][line width=1.5]    (343,400.03) .. controls (361.45,386.19) and (376.92,386.4) .. (392.16,385.33) .. controls (409.96,384.09) and (427.45,381.12) .. (449,352.03) ;
\draw [color={rgb, 255:red, 163; green, 163; blue, 163 }  ,draw opacity=1 ][line width=0.75]  [dash pattern={on 4.5pt off 4.5pt}]  (341,391.03) -- (304,394.03) ;
\draw [color={rgb, 255:red, 163; green, 163; blue, 163 }  ,draw opacity=1 ][line width=0.75]  [dash pattern={on 4.5pt off 4.5pt}]  (214,441.03) -- (177,444.03) ;
\draw [color={rgb, 255:red, 144; green, 19; blue, 254 }  ,draw opacity=1 ][line width=1.5]    (484,469.03) .. controls (417,426.03) and (118,507.03) .. (178,453.03) ;
\draw [color={rgb, 255:red, 163; green, 163; blue, 163 }  ,draw opacity=1 ][line width=0.75]  [dash pattern={on 4.5pt off 4.5pt}]  (483,460.03) -- (446,463.03) ;
\draw [color={rgb, 255:red, 174; green, 180; blue, 51 }  ,draw opacity=1 ][line width=1.5]    (621,500.03) .. controls (595.57,483.71) and (526,541.03) .. (525,521.03) .. controls (524,501.03) and (409.78,505.53) .. (447,472.03) ;
\draw [color={rgb, 255:red, 208; green, 2; blue, 27 }  ,draw opacity=1 ][line width=1.5]    (494,349.03) -- (621,340.03) ;
\draw  [fill={rgb, 255:red, 0; green, 0; blue, 0 }  ,fill opacity=1 ][line width=0.75]  (490,349.03) .. controls (490,346.82) and (491.79,345.03) .. (494,345.03) .. controls (496.21,345.03) and (498,346.82) .. (498,349.03) .. controls (498,351.24) and (496.21,353.03) .. (494,353.03) .. controls (491.79,353.03) and (490,351.24) .. (490,349.03) -- cycle ;
\draw [color={rgb, 255:red, 126; green, 211; blue, 33 }  ,draw opacity=1 ][line width=1.5]    (76,379.03) -- (449,352.03) ;
\draw  [fill={rgb, 255:red, 0; green, 0; blue, 0 }  ,fill opacity=1 ] (445,352.03) .. controls (445,349.82) and (446.79,348.03) .. (449,348.03) .. controls (451.21,348.03) and (453,349.82) .. (453,352.03) .. controls (453,354.24) and (451.21,356.03) .. (449,356.03) .. controls (446.79,356.03) and (445,354.24) .. (445,352.03) -- cycle ;
\draw  [fill={rgb, 255:red, 0; green, 0; blue, 0 }  ,fill opacity=1 ] (617,340.03) .. controls (617,337.82) and (618.79,336.03) .. (621,336.03) .. controls (623.21,336.03) and (625,337.82) .. (625,340.03) .. controls (625,342.24) and (623.21,344.03) .. (621,344.03) .. controls (618.79,344.03) and (617,342.24) .. (617,340.03) -- cycle ;
\draw [color={rgb, 255:red, 74; green, 144; blue, 226 }  ,draw opacity=1 ][line width=1.5]    (76,419.03) -- (305,403.03) ;
\draw  [fill={rgb, 255:red, 0; green, 0; blue, 0 }  ,fill opacity=1 ] (301,403.03) .. controls (301,400.82) and (302.79,399.03) .. (305,399.03) .. controls (307.21,399.03) and (309,400.82) .. (309,403.03) .. controls (309,405.24) and (307.21,407.03) .. (305,407.03) .. controls (302.79,407.03) and (301,405.24) .. (301,403.03) -- cycle ;
\draw  [fill={rgb, 255:red, 0; green, 0; blue, 0 }  ,fill opacity=1 ][line width=0.75]  (337,400.03) .. controls (337,397.82) and (338.79,396.03) .. (341,396.03) .. controls (343.21,396.03) and (345,397.82) .. (345,400.03) .. controls (345,402.24) and (343.21,404.03) .. (341,404.03) .. controls (338.79,404.03) and (337,402.24) .. (337,400.03) -- cycle ;
\draw  [fill={rgb, 255:red, 0; green, 0; blue, 0 }  ,fill opacity=1 ] (617,420.03) .. controls (617,417.82) and (618.79,416.03) .. (621,416.03) .. controls (623.21,416.03) and (625,417.82) .. (625,420.03) .. controls (625,422.24) and (623.21,424.03) .. (621,424.03) .. controls (618.79,424.03) and (617,422.24) .. (617,420.03) -- cycle ;
\draw  [fill={rgb, 255:red, 0; green, 0; blue, 0 }  ,fill opacity=1 ] (617,380.03) .. controls (617,377.82) and (618.79,376.03) .. (621,376.03) .. controls (623.21,376.03) and (625,377.82) .. (625,380.03) .. controls (625,382.24) and (623.21,384.03) .. (621,384.03) .. controls (618.79,384.03) and (617,382.24) .. (617,380.03) -- cycle ;
\draw [color={rgb, 255:red, 144; green, 19; blue, 254 }  ,draw opacity=1 ][line width=1.5]    (76,459.03) -- (178,453.03) ;
\draw  [fill={rgb, 255:red, 0; green, 0; blue, 0 }  ,fill opacity=1 ] (174,453.03) .. controls (174,450.82) and (175.79,449.03) .. (178,449.03) .. controls (180.21,449.03) and (182,450.82) .. (182,453.03) .. controls (182,455.24) and (180.21,457.03) .. (178,457.03) .. controls (175.79,457.03) and (174,455.24) .. (174,453.03) -- cycle ;
\draw [color={rgb, 255:red, 144; green, 19; blue, 254 }  ,draw opacity=1 ][line width=1.5]    (484,469.03) -- (621,460.03) ;
\draw  [fill={rgb, 255:red, 0; green, 0; blue, 0 }  ,fill opacity=1 ][line width=0.75]  (479,469.03) .. controls (479,466.82) and (480.79,465.03) .. (483,465.03) .. controls (485.21,465.03) and (487,466.82) .. (487,469.03) .. controls (487,471.24) and (485.21,473.03) .. (483,473.03) .. controls (480.79,473.03) and (479,471.24) .. (479,469.03) -- cycle ;
\draw  [fill={rgb, 255:red, 0; green, 0; blue, 0 }  ,fill opacity=1 ] (617,460.03) .. controls (617,457.82) and (618.79,456.03) .. (621,456.03) .. controls (623.21,456.03) and (625,457.82) .. (625,460.03) .. controls (625,462.24) and (623.21,464.03) .. (621,464.03) .. controls (618.79,464.03) and (617,462.24) .. (617,460.03) -- cycle ;
\draw [color={rgb, 255:red, 174; green, 180; blue, 51 }  ,draw opacity=1 ][line width=1.5]    (76,499.03) -- (447,472.03) ;
\draw  [fill={rgb, 255:red, 0; green, 0; blue, 0 }  ,fill opacity=1 ] (443,472.03) .. controls (443,469.82) and (444.79,468.03) .. (447,468.03) .. controls (449.21,468.03) and (451,469.82) .. (451,472.03) .. controls (451,474.24) and (449.21,476.03) .. (447,476.03) .. controls (444.79,476.03) and (443,474.24) .. (443,472.03) -- cycle ;
\draw  [fill={rgb, 255:red, 245; green, 166; blue, 35 }  ,fill opacity=1 ][line width=1.5]  (72,339.03) .. controls (72,336.82) and (73.79,335.03) .. (76,335.03) .. controls (78.21,335.03) and (80,336.82) .. (80,339.03) .. controls (80,341.24) and (78.21,343.03) .. (76,343.03) .. controls (73.79,343.03) and (72,341.24) .. (72,339.03) -- cycle ;
\draw  [fill={rgb, 255:red, 0; green, 0; blue, 0 }  ,fill opacity=1 ] (72,379.03) .. controls (72,376.82) and (73.79,375.03) .. (76,375.03) .. controls (78.21,375.03) and (80,376.82) .. (80,379.03) .. controls (80,381.24) and (78.21,383.03) .. (76,383.03) .. controls (73.79,383.03) and (72,381.24) .. (72,379.03) -- cycle ;
\draw  [fill={rgb, 255:red, 0; green, 0; blue, 0 }  ,fill opacity=1 ] (72,419.03) .. controls (72,416.82) and (73.79,415.03) .. (76,415.03) .. controls (78.21,415.03) and (80,416.82) .. (80,419.03) .. controls (80,421.24) and (78.21,423.03) .. (76,423.03) .. controls (73.79,423.03) and (72,421.24) .. (72,419.03) -- cycle ;
\draw  [fill={rgb, 255:red, 0; green, 0; blue, 0 }  ,fill opacity=1 ] (72,459.03) .. controls (72,456.82) and (73.79,455.03) .. (76,455.03) .. controls (78.21,455.03) and (80,456.82) .. (80,459.03) .. controls (80,461.24) and (78.21,463.03) .. (76,463.03) .. controls (73.79,463.03) and (72,461.24) .. (72,459.03) -- cycle ;
\draw  [fill={rgb, 255:red, 0; green, 0; blue, 0 }  ,fill opacity=1 ] (72,499.03) .. controls (72,496.82) and (73.79,495.03) .. (76,495.03) .. controls (78.21,495.03) and (80,496.82) .. (80,499.03) .. controls (80,501.24) and (78.21,503.03) .. (76,503.03) .. controls (73.79,503.03) and (72,501.24) .. (72,499.03) -- cycle ;
\draw  [fill={rgb, 255:red, 0; green, 0; blue, 0 }  ,fill opacity=1 ] (617,500.03) .. controls (617,497.82) and (618.79,496.03) .. (621,496.03) .. controls (623.21,496.03) and (625,497.82) .. (625,500.03) .. controls (625,502.24) and (623.21,504.03) .. (621,504.03) .. controls (618.79,504.03) and (617,502.24) .. (617,500.03) -- cycle ;
\draw  [fill={rgb, 255:red, 0; green, 0; blue, 0 }  ,fill opacity=1 ][line width=0.75]  (210,450.03) .. controls (210,447.82) and (211.79,446.03) .. (214,446.03) .. controls (216.21,446.03) and (218,447.82) .. (218,450.03) .. controls (218,452.24) and (216.21,454.03) .. (214,454.03) .. controls (211.79,454.03) and (210,452.24) .. (210,450.03) -- cycle ;
\draw   (8,32) -- (33.5,32) -- (33.5,57.5) -- (8,57.5) -- cycle ;
\draw   (8,315) -- (33.5,315) -- (33.5,340.5) -- (8,340.5) -- cycle ;
\draw [color={rgb, 255:red, 126; green, 211; blue, 33 }  ,draw opacity=1 ][line width=1.5]    (343,400.03) -- (621,380.03) ;
\draw [color={rgb, 255:red, 74; green, 144; blue, 226 }  ,draw opacity=1 ][line width=1.5]    (215,450.03) -- (621,420.03) ;
\draw [color={rgb, 255:red, 74; green, 144; blue, 226 }  ,draw opacity=1 ][line width=1.5]    (215,450.03) .. controls (233.45,436.19) and (232.92,437.4) .. (248.16,436.33) .. controls (263.4,435.27) and (250,454.03) .. (305,403.03) ;
\draw  [fill={rgb, 255:red, 245; green, 166; blue, 35 }  ,fill opacity=1 ][line width=1.5]  (403,354) .. controls (403,351.79) and (404.79,350) .. (407,350) .. controls (409.21,350) and (411,351.79) .. (411,354) .. controls (411,356.21) and (409.21,358) .. (407,358) .. controls (404.79,358) and (403,356.21) .. (403,354) -- cycle ;
\draw  [fill={rgb, 255:red, 245; green, 166; blue, 35 }  ,fill opacity=1 ][line width=1.5]  (262,405) .. controls (262,402.79) and (263.79,401) .. (266,401) .. controls (268.21,401) and (270,402.79) .. (270,405) .. controls (270,407.21) and (268.21,409) .. (266,409) .. controls (263.79,409) and (262,407.21) .. (262,405) -- cycle ;
\draw  [fill={rgb, 255:red, 245; green, 166; blue, 35 }  ,fill opacity=1 ][line width=1.5]  (139,455) .. controls (139,452.79) and (140.79,451) .. (143,451) .. controls (145.21,451) and (147,452.79) .. (147,455) .. controls (147,457.21) and (145.21,459) .. (143,459) .. controls (140.79,459) and (139,457.21) .. (139,455) -- cycle ;
\draw  [fill={rgb, 255:red, 245; green, 166; blue, 35 }  ,fill opacity=1 ][line width=1.5]  (405,475) .. controls (405,472.79) and (406.79,471) .. (409,471) .. controls (411.21,471) and (413,472.79) .. (413,475) .. controls (413,477.21) and (411.21,479) .. (409,479) .. controls (406.79,479) and (405,477.21) .. (405,475) -- cycle ;
\draw    (368,494) -- (346.26,520.94) ;
\draw [shift={(345,522.5)}, rotate = 308.9] [color={rgb, 255:red, 0; green, 0; blue, 0 }  ][line width=0.75]    (10.93,-3.29) .. controls (6.95,-1.4) and (3.31,-0.3) .. (0,0) .. controls (3.31,0.3) and (6.95,1.4) .. (10.93,3.29)   ;

\draw (19,132.9) node [anchor=north west][inner sep=0.75pt]    {$X$};
\draw (665,133.9) node [anchor=north west][inner sep=0.75pt]    {$Y$};
\draw (16,409.9) node [anchor=north west][inner sep=0.75pt]    {$X$};
\draw (662,410.9) node [anchor=north west][inner sep=0.75pt]    {$Y$};
\draw (112.2,345) node [anchor=north west][inner sep=0.75pt]  [color={rgb, 255:red, 126; green, 211; blue, 33 }  ,opacity=1 ,rotate=-354.31]  {$P'_{1}$};
\draw (113.83,386) node [anchor=north west][inner sep=0.75pt]  [color={rgb, 255:red, 74; green, 144; blue, 226 }  ,opacity=1 ,rotate=-354.31]  {$P'_{2}$};
\draw (113.83,425) node [anchor=north west][inner sep=0.75pt]  [color={rgb, 255:red, 144; green, 19; blue, 254 }  ,opacity=1 ,rotate=-354.31]  {$P'_{3}$};
\draw (113.83,500) node [anchor=north west][inner sep=0.75pt]  [color={rgb, 255:red, 174; green, 180; blue, 51 }  ,opacity=1 ,rotate=-354.31]  {$P'_{4}$};
\draw (230.89,271) node [anchor=north west][inner sep=0.75pt]  [color={rgb, 255:red, 208; green, 2; blue, 27 }  ,opacity=1 ,rotate=-350.71]  {$P'_{0}$};
\draw (34,45) node  [font=\scriptsize] [align=left] {\begin{minipage}[lt]{15.98pt}\setlength\topsep{0pt}
a:
\end{minipage}};
\draw (34,325) node  [font=\scriptsize] [align=left] {\begin{minipage}[lt]{15.64pt}\setlength\topsep{0pt}
b:
\end{minipage}};
\draw (393.22,328.84) node [anchor=north west][inner sep=0.75pt]  [color={rgb, 255:red, 245; green, 166; blue, 35 }  ,opacity=1 ,rotate=-354.98]  {$c_{P_{1}}$};
\draw (252.29,380.21) node [anchor=north west][inner sep=0.75pt]  [color={rgb, 255:red, 245; green, 166; blue, 35 }  ,opacity=1 ,rotate=-354.98]  {$c_{P_{2}}$};
\draw (131.22,461.84) node [anchor=north west][inner sep=0.75pt]  [color={rgb, 255:red, 245; green, 166; blue, 35 }  ,opacity=1 ,rotate=-354.98]  {$c_{P_{3}}$};
\draw (396.22,482.84) node [anchor=north west][inner sep=0.75pt]  [color={rgb, 255:red, 245; green, 166; blue, 35 }  ,opacity=1 ,rotate=-354.98]  {$c_{P_{4}}$};
\draw (114.2,74.43) node [anchor=north west][inner sep=0.75pt]  [rotate=-354.31]  {$P_{1}$};
\draw (113.83,113.43) node [anchor=north west][inner sep=0.75pt]  [rotate=-354.31]  {$P_{2}$};
\draw (113.83,154) node [anchor=north west][inner sep=0.75pt]  [rotate=-354.31]  {$P_{3}$};
\draw (113.83,193.43) node [anchor=north west][inner sep=0.75pt]  [rotate=-354.31]  {$P_{4}$};
\draw (214.38,0) node [anchor=north west][inner sep=0.75pt]  [rotate=-355.09]  {$Q$};
\draw (54,311) node [anchor=north west][inner sep=0.75pt]  [color={rgb, 255:red, 245; green, 166; blue, 35 }  ,opacity=1 ]  {$c_{Q}$};
\draw (183,525) node [anchor=north west][inner sep=0.75pt]  [font=\small,color={rgb, 255:red, 0; green, 0; blue, 0 }  ,opacity=1 ] [align=left] {Separator $\displaystyle C_{\mathcal{P}} \ =\ \{c_{Q}$$\displaystyle ,c_{P}{}_{1} ,c_{P}{}_{2} ,c_{P}{}_{3} ,c_{P}{}_{4}\}$. \\Leftmost minimum size separator. };

\end{tikzpicture}
    \caption{Augmenting}
    \label{fig:augmenting}
\end{figure}
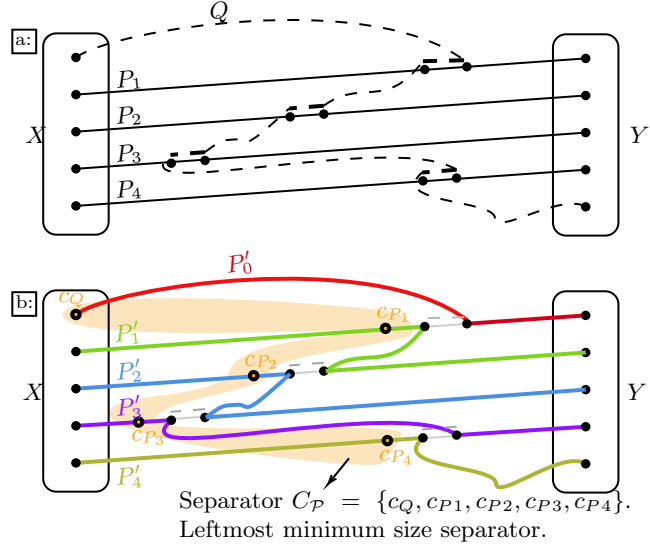

Now that we have the maximum size family of pairwise disjoint paths from $X$ to $Y$ (namely $\mathcal{P}$), we show how to find the leftmost minimum sized ($X$, $Y$, $\leq k$)$^G$-separator\footnote{Again, for more details refer to \cite{FlumGrohe}.}.

Let $R(\mathcal{P})$\footnote{We have borrowed the notation for this part from \cite{FlumGrohe}.} be the set of all vertices $v \in V$ such that there is a $\mathcal{P}$-augmenting path from $X$ to $v$.
For every path $P\in \mathcal{P}$, let $c_P$ be the first vertex on $P$ not contained in $R(\mathcal{P})$. Also, let $C(\mathcal{P}) = \bigcup\limits_{P \in \mathcal{P}} \{c_P\}$.

Note that if $c_P \notin X$, then $c_P$ lies on an augmenting path that continues to the predecessor of $c_P$ on $P$.
\begin{claim}(Claim 2 of Lemma 11.20 in \cite{FlumGrohe})
	$C(\mathcal{P})$ separates $X$ from $Y$.
\end{claim}

So, $C(\mathcal{P})$ is an ($X$, $Y$, $\leq k$)-separator and also based on Menger's result, this is a separator of the minimum size. In fact, $C(\mathcal{P})$ is the leftmost minimum size separator due to the construction of the $c_P$'s (check $C_{\mathcal{P}}$ in Figure~\ref{fig:augmenting} and $S$ in Figure~\ref{fig:all-seps}).

Now, we present pseudocode for the algorithm described above with a tiny modification. The modification is that we do not start from scratch necessarily and we pass a subset $\mathcal{P}$ of disjoint paths to the algorithm in the beginning. Then, the algorithm tries to find a $\mathcal{P}$-augmenting walk $Q$,  updates $\mathcal{P}$, and recurses until no augmenting walk is found. The original algorithm can be implemented by initializing $\mathcal{P} = \emptyset$. The reason why we need this modification is described in Section \ref{sec:min_left}.

The following are the global variables used in Algorithms \ref{algo:1}, \ref{algo:2}, and \ref{algo:3}.
\begin{itemize}
	\item $X\subseteq V(G)$: the left set
	\item $\mathcal{A}\subseteq 2^{V(G)}$: the set of all minimal leftmost ($X$, $Y$, $\leq k$)$^G$-separators 
	\item $k \in \mathbb{N}$: a given upper bound on the treewidth
\end{itemize}

\textbf{Observation.} Note that the output of Algorithm \ref{algo:1} is unique and well-defined.

\begin{figure}
    \centering
    \tikzset{every picture/.style={line width=0.75pt}} 

\begin{tikzpicture}[x=0.5pt,y=0.5pt,yscale=-1,xscale=1]

\draw  [fill={rgb, 255:red, 86; green, 14; blue, 235 }  ,fill opacity=0.2 ] (173,64) .. controls (193,54) and (250,95) .. (230,115) .. controls (210,135) and (268,182) .. (288,212) .. controls (308,242) and (211,262) .. (224,231) .. controls (237,200) and (153,74) .. (173,64) -- cycle ;
\draw  [fill={rgb, 255:red, 246; green, 137; blue, 137 }  ,fill opacity=0.2 ] (172,139) .. controls (192,129) and (324,66) .. (304,86) .. controls (284,106) and (192,156) .. (212,186) .. controls (232,216) and (172,282) .. (152,252) .. controls (132,222) and (152,149) .. (172,139) -- cycle ;
\draw  [fill={rgb, 255:red, 155; green, 155; blue, 155 }  ,fill opacity=0.2 ] (30,157.5) .. controls (30,90.4) and (45.67,36) .. (65,36) .. controls (84.33,36) and (100,90.4) .. (100,157.5) .. controls (100,224.6) and (84.33,279) .. (65,279) .. controls (45.67,279) and (30,224.6) .. (30,157.5) -- cycle ;
\draw  [fill={rgb, 255:red, 128; green, 128; blue, 128 }  ,fill opacity=0.2 ] (596,142.5) .. controls (596,121.79) and (606.97,105) .. (620.5,105) .. controls (634.03,105) and (645,121.79) .. (645,142.5) .. controls (645,163.21) and (634.03,180) .. (620.5,180) .. controls (606.97,180) and (596,163.21) .. (596,142.5) -- cycle ;
\draw  [dash pattern={on 4.5pt off 4.5pt}]  (65,36) .. controls (148,6) and (143,61) .. (184,63) ;
\draw  [dash pattern={on 4.5pt off 4.5pt}]  (184,63) .. controls (298,72) and (355,95) .. (391,120) ;
\draw  [dash pattern={on 4.5pt off 4.5pt}]  (391,120) .. controls (449,152.5) and (471,62.5) .. (499,69.5) ;
\draw  [dash pattern={on 4.5pt off 4.5pt}]  (65,279) .. controls (103,285) and (124,261) .. (165,259) ;
\draw  [dash pattern={on 4.5pt off 4.5pt}]  (165,259) .. controls (203,265) and (211,244) .. (245,247) ;
\draw  [dash pattern={on 4.5pt off 4.5pt}]  (245,247) .. controls (306,245) and (386,155) .. (423,174) ;
\draw  [dash pattern={on 4.5pt off 4.5pt}]  (423,174) .. controls (499,205.5) and (526.5,165.5) .. (555,166.5) ;
\draw  [fill={rgb, 255:red, 3; green, 252; blue, 200 }  ,fill opacity=0.3 ] (399,149.5) .. controls (399,137.63) and (405.27,128) .. (413,128) .. controls (420.73,128) and (427,137.63) .. (427,149.5) .. controls (427,161.37) and (420.73,171) .. (413,171) .. controls (405.27,171) and (399,161.37) .. (399,149.5) -- cycle ;
\draw  [dash pattern={on 4.5pt off 4.5pt}]  (499,69.5) .. controls (541,76.5) and (492,121.5) .. (554,124.5) ;
\draw  [fill={rgb, 255:red, 232; green, 252; blue, 3 }  ,fill opacity=0.3 ][dash pattern={on 0.84pt off 2.51pt}] (534,145.5) .. controls (534,133.63) and (540.27,124) .. (548,124) .. controls (555.73,124) and (562,133.63) .. (562,145.5) .. controls (562,157.37) and (555.73,167) .. (548,167) .. controls (540.27,167) and (534,157.37) .. (534,145.5) -- cycle ;
\draw  [dash pattern={on 4.5pt off 4.5pt}]  (554,124.5) .. controls (601,117.5) and (570,107) .. (620.5,105) ;
\draw  [dash pattern={on 4.5pt off 4.5pt}]  (555,166.5) .. controls (582,166.5) and (604,180.5) .. (620.5,180) ;
\draw    (306,82) -- (375,78.5) ;
\draw  [fill={rgb, 255:red, 255; green, 8; blue, 42 }  ,fill opacity=0.33 ] (375,78.5) .. controls (370.94,68.23) and (382.21,54.14) .. (400.19,47.02) .. controls (418.16,39.91) and (436.02,42.47) .. (440.09,52.74) .. controls (444.15,63.01) and (432.88,77.11) .. (414.9,84.22) .. controls (396.93,91.33) and (379.06,88.77) .. (375,78.5) -- cycle ;
\draw    (245,193) -- (312.83,287.38) ;
\draw [shift={(314,289)}, rotate = 234.29] [color={rgb, 255:red, 0; green, 0; blue, 0 }  ][line width=0.75]    (10.93,-3.29) .. controls (6.95,-1.4) and (3.31,-0.3) .. (0,0) .. controls (3.31,0.3) and (6.95,1.4) .. (10.93,3.29)   ;
\draw    (196,222) -- (307.31,291.94) ;
\draw [shift={(309,293)}, rotate = 212.14] [color={rgb, 255:red, 0; green, 0; blue, 0 }  ][line width=0.75]    (10.93,-3.29) .. controls (6.95,-1.4) and (3.31,-0.3) .. (0,0) .. controls (3.31,0.3) and (6.95,1.4) .. (10.93,3.29)   ;
\draw    (427,55) -- (493.06,38.49) ;
\draw [shift={(495,38)}, rotate = 525.96] [color={rgb, 255:red, 0; green, 0; blue, 0 }  ][line width=0.75]    (10.93,-3.29) .. controls (6.95,-1.4) and (3.31,-0.3) .. (0,0) .. controls (3.31,0.3) and (6.95,1.4) .. (10.93,3.29)   ;
\draw    (413,160) -- (415.82,191.51) ;
\draw [shift={(416,193.5)}, rotate = 264.88] [color={rgb, 255:red, 0; green, 0; blue, 0 }  ][line width=0.75]    (10.93,-3.29) .. controls (6.95,-1.4) and (3.31,-0.3) .. (0,0) .. controls (3.31,0.3) and (6.95,1.4) .. (10.93,3.29)   ;
\draw    (549,155.5) -- (574.8,189.9) ;
\draw [shift={(576,191.5)}, rotate = 233.13] [color={rgb, 255:red, 0; green, 0; blue, 0 }  ][line width=0.75]    (10.93,-3.29) .. controls (6.95,-1.4) and (3.31,-0.3) .. (0,0) .. controls (3.31,0.3) and (6.95,1.4) .. (10.93,3.29)   ;
\draw  [fill={rgb, 255:red, 254; green, 248; blue, 4 }  ,fill opacity=0.06 ] (278,12.8) .. controls (278,9.6) and (280.6,7) .. (283.8,7) -- (378.2,7) .. controls (381.4,7) and (384,9.6) .. (384,12.8) -- (384,30.2) .. controls (384,33.4) and (381.4,36) .. (378.2,36) -- (283.8,36) .. controls (280.6,36) and (278,33.4) .. (278,30.2) -- cycle ;
\draw    (290,22) -- (226,21.03) ;
\draw [shift={(224,21)}, rotate = 360.87] [color={rgb, 255:red, 0; green, 0; blue, 0 }  ][line width=0.75]    (10.93,-3.29) .. controls (6.95,-1.4) and (3.31,-0.3) .. (0,0) .. controls (3.31,0.3) and (6.95,1.4) .. (10.93,3.29)   ;
\draw  [fill={rgb, 255:red, 254; green, 248; blue, 4 }  ,fill opacity=0.06 ] (262,316.4) .. controls (262,312.31) and (265.31,309) .. (269.4,309) -- (417.6,309) .. controls (421.69,309) and (425,312.31) .. (425,316.4) -- (425,338.6) .. controls (425,342.69) and (421.69,346) .. (417.6,346) -- (269.4,346) .. controls (265.31,346) and (262,342.69) .. (262,338.6) -- cycle ;

\draw (57,141.4) node [anchor=north west][inner sep=0.75pt]    {$X$};
\draw (540,136.4) node [anchor=north west][inner sep=0.75pt]    {$S'$};
\draw (408,139.4) node [anchor=north west][inner sep=0.75pt]    {$S$};
\draw (188,74.4) node [anchor=north west][inner sep=0.75pt]    {$S_{1}$};
\draw (165,222.4) node [anchor=north west][inner sep=0.75pt]    {$S_{2}$};
\draw (615,133.4) node [anchor=north west][inner sep=0.75pt]    {$Y$};
\draw (398,53.4) node [anchor=north west][inner sep=0.75pt]    {$G_{1}$};
\draw (312,285.5) node [anchor=north west][inner sep=0.75pt]   [align=left] {{\scriptsize Two different leftmost separators [non-comparable]}};
\draw (502,6) node [anchor=north west][inner sep=0.75pt]   [align=left] {{\scriptsize The subgraph that gets}\\{\scriptsize disconnected from $\displaystyle X\ $}\\{\scriptsize and $\displaystyle Y\ $after removing $\displaystyle S_{2}$}};
\draw (365,193) node [anchor=north west][inner sep=0.75pt]   [align=left] {{\scriptsize The leftmost minimum}\\{\scriptsize size separator (output}\\{\scriptsize  of Algorithm 1)}\\{\scriptsize  [This is unique]}};
\draw (548,194) node [anchor=north west][inner sep=0.75pt]   [align=left] {{\scriptsize A minimum size}\\{\scriptsize separator}};
\draw (280.5,13.4) node [anchor=north west][inner sep=0.75pt]  [font=\scriptsize]  {$w(X)>w(Y)$};
\draw (139,13) node [anchor=north west][inner sep=0.75pt]   [align=left] {{\scriptsize weight of $\displaystyle X$}};
\draw (272,320.4) node [anchor=north west][inner sep=0.75pt]  [font=\scriptsize]  {$| S_{1}| ,\ |S_{2} |,\ |S|,\ |S'|\ \leq \ k$};

\end{tikzpicture}
    \caption{Leftmost separators vs leftmost minimum size separator}
    \label{fig:all-seps}
\end{figure}
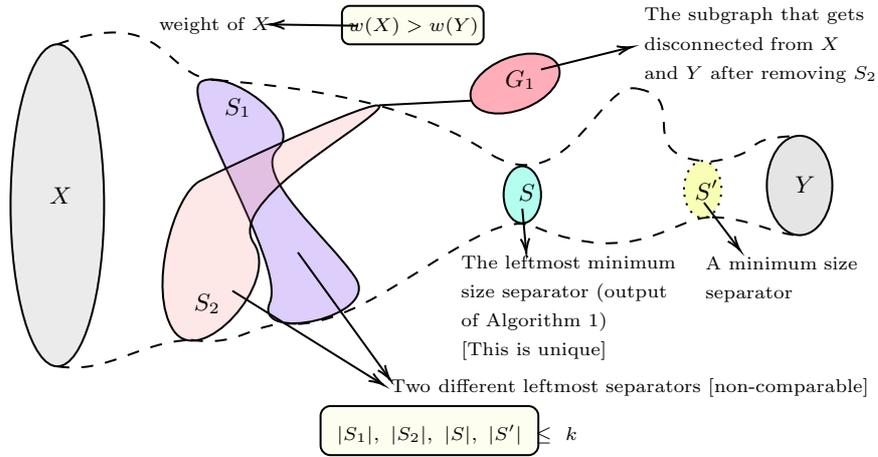
\begin{algorithm}
		\DontPrintSemicolon 
		\KwIn{Graph $G = (V, E)$, a subset of vertices $Y$, and a set $\mathcal{P}$ of pairwise disjoint paths from $X$ to $Y$}
		\KwOut{The leftmost minimum ($X$, $Y$, $\leq k$)$^G$-separator, and an updated set of pairwise disjoint paths $\mathcal{P}$}
		\While{$\exists$ a $\mathcal{P}$-augmenting walk $Q$}{Update $\mathcal{P}$.}\tcp*{by sending a unit flow through the edges of $\mathcal{P}$ and $Q$. Also, $|\mathcal{P}|$ is increased by 1}
		Construct $R(\mathcal{P}) = \{v \in V\ | \ \exists \text{ a } \mathcal{P}\text{-augmenting walk from } X \text{ to } v\}$ using DFS.\;
		Initialize $C(\mathcal{P})\leftarrow \emptyset$\;
		\For{$P \in \mathcal{P}$}{$c_P \leftarrow $ the first vertex $\in P$ and $\notin R(\mathcal{P})$.\;
			$C(\mathcal{P})\leftarrow C(\mathcal{P}) \cup \{c_P\}$.}
		\Return{$(C(\mathcal{P}), \mathcal{P})$.}
		\caption{{\sc $Left\_Minimum\_Sep(G, Y, \mathcal{P})$: Find the leftmost minimum separator} }
		\label{algo:1}
	\end{algorithm}
\begin{algorithm}
	\DontPrintSemicolon 
	\KwIn{Graph $G$}
	$\{S, \mathcal{P}\} \gets Left\_Minimum\_Sep(G, Y, \emptyset)$\; \tcp*{$S$ is the minimum size leftmost ($X$, $Y$, $\leq k$)-separator and $\mathcal{P}$ is a set of pairwise disjoint paths}
	push all the vertices in $S$ onto  empty stack $R$.\;
	$Branch$($G[V_{X, S}\cup S], S, \emptyset, \mathcal{P}, $R$, True$)\;
	\caption{{\sc $Init(G, Y)$: Initialization}}
	\label{algo:2}
\end{algorithm}

\begin{algorithm}
	\DontPrintSemicolon 
	\KwIn{Graph $G$, a separator $S \in S_{X,Y}$, a subset of vertices $Y$, $I$: the included vertices, $\mathcal{P}$: a set of pairwise disjoint paths between $X$ and $Y$, $R$: the stack to hold the order of handling vertices, and $\mathit{leftmost}$: a boolean indicating whether we have a leftmost separator.}
	\If{$I == S$}{
		$A \leftarrow A \cup \{S\}$}
	\Else{
		pop $v$ from $R$\;
		$Y' \leftarrow \left(S\setminus \{v\}\right) \cup \left(N\left(v\right) \cap V_{X, S}\right)$\;
		$\{S', \mathcal{P'}\} \gets \textit{Left\_Minimum\_Sep\/}(G, Y', \mathcal{P})$\; 
		\If{$|S'| \leq k \, \wedge I \subseteq S'$}{
			$\textit{leftmost}\leftarrow \textit{False}$\;
			\If{$|S'| < k$}{
				let $R'$ be a copy of $R$. Push all vertices of $\left(N\left(v\right) \cap V_{X, S}\right)$ onto $R'$\;
				$Branch\left(G[V_{X, S'}], S', Y', I, \mathcal{P'}, R', True\right)$}}
		\If{$(|S\setminus I| \geq 2 \, \vee \mathit{leftmost}) \wedge (|S| \leq k)$}{
			$Branch\left(G[V_{X, S}], S, Y, I\cup \{v\}, \mathcal{P}, R, \mathit{leftmost}\right)$
	}}
	
	\caption{{\sc $Branch(G, S, Y, I, \mathcal{P}, R,\textit{leftmost})$: the Main Procedure for finding all leftmost ($X$, $Y$, $\leq k$)-separators}}
	\label{algo:3}
\end{algorithm}

\section{Finding All Minimal Leftmost ($X$, $Y$, $\leq k$)-Separators}\label{sec:min_left}

In this section, we present our main algorithm. In the introduction, we mentioned why it is important to find the leftmost\footnote{We drop the term ``minimal'' because it is the default for the separators throughout this paper unless mentioned otherwise.} separators. Also, in Section \ref{sec:flumgrohe} we reviewed an algorithm (Algorithm \ref{algo:1}) to find the leftmost minimum separator. We use this algorithm in ours. 

In our problem, we have two subsets of vertices $X$ and $Y$ such that $|X| \geq |Y|$. W.l.o.g., assume $X$ is the set on the left and $Y$ is the set on the right. 

\begin{tcolorbox}
	\textbf{Problem 2.} Given a graph $G = (V, E)$, sets $X, Y \subseteq V$, and $k \in \mathbb{N}$, what are the minimal leftmost ($X$, $Y$, $\leq k$)$^G$-separators?
\end{tcolorbox}
\begin{theorem}
	Given a graph $G = (V, E)$, sets $X, Y \subseteq V$, and $k \in \mathbb{N}$, there exists an algorithm which solves Problem 2 in time $\mathcal{O}(2^{2k}\sqrt{k}n)$
\end{theorem}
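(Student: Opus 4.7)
The plan is to analyse the algorithm already laid out in Algorithms~\ref{algo:2} and~\ref{algo:3}. Algorithm~\ref{algo:2} first invokes Algorithm~\ref{algo:1} on $(G, Y, \emptyset)$ to obtain the leftmost minimum size $(X, Y, \leq k)^G$-separator $S_0$ together with a maximum family $\mathcal{P}$ of pairwise vertex-disjoint $X$--$Y$ paths. The recursion in Algorithm~\ref{algo:3} then maintains at each node a current separator $S$, an ``included'' prefix $I \subseteq S$ committed to the output, a stack $R$ of vertices still to be decided, and the warm-start family $\mathcal{P}$. For the vertex $v$ popped from $R$ the procedure branches two ways: the \emph{push} branch replaces the target by $Y' = (S\setminus\{v\}) \cup (N(v) \cap V_{X,S})$ and recomputes, via Algorithm~\ref{algo:1} warm-started at $\mathcal{P}$, the leftmost minimum size separator $S'$ for $Y'$ (moving $S$ rightward past $v$); the \emph{include} branch commits $v$ to $I$ and continues with the same $S$. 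Leaves with $I = S$ are deposited into $\mathcal{A}$.

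For correctness I would show that the set of leaves equals the set of minimal leftmost $(X, Y, \leq k)^G$-separators. \emph{Existence:} given any leftmost separator $T$, following the recursion by entering the push branch when the popped vertex lies in $S \setminus T$ and the include branch when it lies in $S \cap T$ terminates at a leaf whose committed set equals $T$. The key invariant is that each intermediate $S$ is the leftmost minimum size separator for the current target set, hence $S \preceq T$, while the target modification $Y \mapsto Y'$ keeps a separator of size $\le |S| \le k$ available (namely $T$ itself). \emph{Uniqueness:} the $\mathit{leftmost}$ flag, set to \emph{False} upon descending the push branch, together with the guard $|S\setminus I| \ge 2 \vee \mathit{leftmost}$ on the include branch, prevents the leaves of a push subtree from duplicating leaves generated elsewhere; I would formalise this by an invariant stating that the push subtree below $v$ produces exactly those leftmost separators which strictly dominate the current $S$ at $v$.

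For the running time I combine the counting bound with the cost of Algorithm~\ref{algo:1}. By the first theorem of the paper, the number of leftmost $(X, Y, \leq k)^G$-separators, and hence the number of leaves of the recursion tree, is at most $C_{k-1}$. Because the tree is binary, the total number of recursive calls is $\mathcal{O}(C_{k-1})$. Each call performs one invocation of Algorithm~\ref{algo:1}, which by Lemma~11.20 of~\cite{FlumGrohe} runs in time $\mathcal{O}(kn)$ (the warm start with $\mathcal{P}$ remains valid since $Y'$ differs from $Y$ only around $v$, and any reroutings are absorbed into the $\mathcal{O}(kn)$ Ford--Fulkerson budget), plus $\mathcal{O}(k)$ bookkeeping for copying $R$. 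Multiplying,
\[
\mathcal{O}\bigl(C_{k-1}\cdot kn\bigr) \;=\; \mathcal{O}\!\left(\binom{2k-2}{k-1}\, n\right) \;=\; \mathcal{O}\!\left(\frac{4^{k}}{\sqrt{k}}\, n\right) \;\subseteq\; \mathcal{O}\bigl(2^{2k}\sqrt{k}\,n\bigr).
\]
The main obstacle is the uniqueness half of correctness: the include and push branches can reach overlapping states, so one must tie the $\mathit{leftmost}$ flag to a precise invariant on the stack $R$ guaranteeing that no leaf is emitted twice. A secondary technical point is verifying that the warm start of Algorithm~\ref{algo:1} survives the target change $Y \to Y'$, which amounts to showing that the path family $\mathcal{P}$ inherited from the parent still forms a valid family of pairwise disjoint $X$--$Y'$ paths up to a single rerouting absorbed by the next augmenting walk.
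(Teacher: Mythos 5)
Your proposal follows essentially the same route as the paper: the same branching algorithm (include $v$ in $I$ versus push $N(v)\cap V_{X,S}$ and recompute the leftmost minimum separator), the same existence argument that traces any leftmost separator $T$ down a unique computation branch, and the same running-time bound of $\mathcal{O}(C_{k-1})$ tree nodes times the $\mathcal{O}(kn)$ flow computation per node. The two technical points you flag as remaining work (the no-duplication invariant tied to the $\mathit{leftmost}$ flag, and the validity of the warm-started path family $\mathcal{P}$ after the target change $Y\to Y'$) are exactly the points the paper itself treats only briefly, so your account is a faithful reconstruction of the paper's proof.
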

\begin{proof}
	We present a recursive branching algorithm (Algorithm \ref{algo:3}). Initially, it calls Algorithm \ref{algo:1} to find the leftmost minimum size ($X$, $Y$, $\leq k$)-separator (using a simple flow algorithm), namely $S$ by feeding $X$ as the left set, $Y$ as the right set, and an empty set of pairwise disjoint paths (namely $\mathcal{P}$) from $X$ to $Y$ (inherited from the parent branch). Notice that the leftmost minimum size separator is unique. This is the root (namely $r$) of the computation tree (namely $\mathcal{T}$). Let us refer to the computation subtree rooted at node $x$, as $\mathcal{T}(x)$. 
	
	In each node $x$ of $\mathcal{T}$ with the corresponding graph $G_{x}$, let $Y_{x}$, $I_{x}$, and $\mathcal{P}_x$ be the right set, the set of vertices that we require to be in the separator, and the set of disjoint paths inherited from the parent's node, respectively. Notice that we do not pass $X_x$ (the left set) as an argument since it does not change throughout the algorithm and hence we have defined it as a global variable ($\forall x, X_x = X$). 
	
	\begin{claim}\label{claim:all_left}
		Let $S$ be a minimal leftmost ($X$, $Y$, $\leq k$)$^G$-separator, and $S'$ be the separator generated by Algorithm \ref{algo:1}. Then, $ S \subseteq V_{X, S'} \cup S'$. 
	\end{claim}
	\begin{proof}
		For the sake of contradiction, assume that $\exists v \in S$ such that $v \notin V_{X, S'} \cup S'$. This means that $v \in V_{S', Y} \cup V_Z$. 
		
		If $v \in V_Z$, then $S \setminus \{v\}$ is still a leftmost ($X$, $Y$, $\leq k$)$^G$-separator which contradicts the minimality of $S$.
		
		The other possibility is that $v \in V_{S', Y}$. Let $S_{out}$ be the set of all such vertices; i.e., $S_{out} = \{v \in S \mid v \in V_{S', Y}\}$. Any $X - S_{out}$-path\footnote{any path from a vertex in $X$ to a vertex in $S_{out}$}  goes through $S'$, otherwise $S'$ would not be a separator. Let $S'_{in}$ be the set of all vertices of $S'$ that are on a path from vertex of $X$ to a vertex of $S_{out}$. Now, let $S'' = (S \setminus S_{out}) \cup S'_{in}$. Hence, $S''$ is an $(X, Y, \leq k)^{G}$-separator\footnote{TODO, why is at most $\leq k$} such that $S'' \preceq S$ with $|S''| \leq S \leq k$, which is a contradicts that $S$ is a leftmost ($X$, $Y$, $\leq k$)-separator.\hfill $\square$
	\end{proof}
	Claim~\ref{claim:all_left} allows us to ignore the subgraph $G[V_{S, Y} \cup V_Z]$ and focus only on the graph to the left of the current separator and keep moving towards left until it is impossible.
	
	Let $S_x$ be the separator found by Algorithm \ref{algo:1} while processing node $x$ from the computation tree.  If $S_x$ is a leftmost ($X$, $Y$, $\leq k$)-separator\footnote{Later, we explain how this is done}, this branch terminates and we add $S_x$ to the set of all the leftmost ($X$, $Y$, $\leq k$)$^G$-separators, namely $\mathcal{A}$ ($\mathcal{A}$ is a global variable). Otherwise, we keep pushing the separator to the left by branching 2-fold. Let us call the children of $x$ by $c_1$ and $c_2$. If $S_x$ was not a leftmost ($X$, $Y$, $\leq k$)$^{G_x}$-separator\footnote{note that by Claim~\ref{claim:all_left} $G_x = G[V_{X, S} \cup S]$}, this means that there exists at least one leftmost ($X$, $Y$, $\leq k$)$^{G_x}$-separator, namely $S'$ such that $S' \neq S$ and $S' \preceq S$. Also, as a result of Claim \ref{claim:all_left}, $S'$ is a leftmost ($X$, $Y$, $\leq k$)$^G$-separator, too.
	
	In each node $x$, we call Algorithm \ref{algo:1} to find the minimum separator $S_x$ of size $\leq k$ between $X$ and $Y_x$. Now, we push all the vertices of $S_x$ onto stack $R$. Then, we pop vertex $v$ which is on top of the stack $R$ and consider the following two scenarios (corresponding to $c_1$ and $c_2$).
	\begin{enumerate}
		\item If we want $v$ to belong to the leftmost separators. In this case, we add $v$ to set $I$\footnote{\textbf{I}nclude}, which is the set of vertices that we require to be in all the leftmost separators in $\mathcal{T}_{c_1}$.
		\item If $v$ does not belong to the leftmost separators. Here, we pop $v$ and push the left neighbors of $v$ (i.e., $N(v) \cap V_{X, S_X}$) onto $R$ (we just move to the left due to Claim \ref{claim:all_left}).
	\end{enumerate}
	Notice that the order of handling vertices is not important but we use stack because it simplifies the proof later on. 

	Every produced separator is a leftmost ($X$, $Y$, $\leq k$)$^G$-separator because the only time that one branch terminates is when it finds a leftmost separator. Now, we show that all the leftmost ($X$, $Y$, $\leq k$)$^G$-separators are generated by the algorithm given.
	
	Let $S_0$ be an arbitrary leftmost ($X$, $Y$, $\leq k$)$^G$-separator, and as before let $S$ be the separator generated by Algorithm \ref{algo:1}. At this point $R$ is filled with the vertices of $S$. Pop $v$ from $R$. 
	\begin{itemize}
		\item If $v \in S_0$, then put $v$ in $I$, and recurs.
		\item If $v \notin S_0$, push $N(v) \cap V_{X,S}$ into $R$, recurs. 
	\end{itemize}
	This determines an exact computation branch. All branches halt with a minimal leftmost separator since each time we go at least one more to the left. So, this branch terminates with minimal leftmost separator $S'_0$ as well. All the vertices of $S_0$ are pushed into $R$ at some point because otherwise this branch terminates with a separator which is not leftmost and we can push it more to the left. At the end of this branch, $I \subseteq S'_0$. On the other hand,  $I = S_0 \subseteq S'_0$, which implies that $S_0 = S'_0$ because both of them should be minimal. \hfill $\square$
\end{proof}

In \cite{marx2014fixed}, the authors mention that the number of important $(X, Y, \leq k)^G$-separators is $\leq 4^k$, and they even mention that this upper bound can be tight by a polynomial factor. Here, we give a precise upper bound for the number of leftmost $(X, Y, \leq k)^G$-separators and the number of important $(X, Y, \leq k)^G$-separators and we show that both bounds are tight.

Before that, let us review the definition of the Catalan numbers.
\begin{definition}
	\emph{Catalan numbers} is a sequence of numbers where the $n$-th Catalan number is:
	\[
	C_n = {2n \choose n} - {2n \choose n+1} = \frac{1}{n+1}{2n \choose n} \sim \frac{4^n}{n^{3/2}\sqrt{\pi}}.
	\]
	
\end{definition}
\begin{theorem}\label{thm:bound}
	Let $G=(V, E)$ be a graph and let $X, Y \subseteq V$ and let $k \in \mathbb{N}$. The number of leftmost $(X, Y, \leq k)^G$-separators is at most $C_{k-1}$ and  the number of important $(X, Y, \leq k)^G$-separators is at most $\sum\limits_{i=0}^{k-1}C_{i}$. Furthermore, both upper bounds are tight.
\end{theorem}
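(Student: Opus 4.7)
The plan is to prove the theorem in three parts: the upper bound on leftmost separators, the upper bound on important separators (reducing to the first), and a tightness construction for both.

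\textbf{Leftmost-separator bound.} I would analyze the recursion tree $\mathcal{T}$ of Algorithm~\ref{algo:3}. By the correctness argument from the preceding theorem, the leaves of $\mathcal{T}$ are in bijection with the leftmost $(X,Y,\leq k)^G$-separators, so the task reduces to bounding $|\mathrm{leaves}(\mathcal{T})|\le C_{k-1}$. Each internal node of $\mathcal{T}$ has at most two children, one for \emph{include} and one for \emph{exclude}; a crude count of this branching only gives $4^k$. To obtain the tight Catalan bound, I would introduce a potential $\Phi(x)$ on the nodes of $\mathcal{T}$ that jointly tracks $|S_x|-|I_x|$ (the number of still-undecided vertices of the current separator) and $k-|S_x|$ (the remaining ``budget'' for growing the separator). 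The include branch decreases the first component by one, while an exclude branch that actually recurses is constrained by the strict condition $|S'|<k$ and the requirement $I\subseteq S'$ in Algorithm~\ref{algo:3}. I would encode each root-to-leaf path of $\mathcal{T}$ as a lattice path of length at most $2(k-1)$ in the first quadrant, never crossing the diagonal, whose up-steps correspond to includes and down-steps to excludes. The classical Catalan enumeration of such Dyck paths yields $|\mathrm{leaves}(\mathcal{T})|\le C_{k-1}$.

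\textbf{Important-separator bound.} This follows from Lemma~\ref{lem:numleft_numimportant}, which gives $\mathcal{B}_{X,Y,\leq k}^{G} = \bigcup_{i=1}^{k}\mathcal{A}_{X,Y,\leq i}^{G}$. Combining with the first part and applying a union bound yields
\[
|\mathcal{B}_{X,Y,\leq k}^{G}| \;\leq\; \sum_{i=1}^{k} |\mathcal{A}_{X,Y,\leq i}^{G}| \;\leq\; \sum_{i=1}^{k} C_{i-1} \;=\; \sum_{i=0}^{k-1} C_i.
\]

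\textbf{Tightness.} Guided by the Catalan recurrence $C_k=\sum_{i=0}^{k-1}C_i\,C_{k-1-i}$, I would construct the extremal graphs $G_k$ recursively. The graph $G_k$ would be built by placing two smaller extremal instances $G_i$ and $G_{k-1-i}$ in series, joined through a single shared ``pivot'' vertex that forces every leftmost $(X,Y,\leq k)^{G_k}$-separator to split as a leftmost $(\leq i)$-separator of $G_i$ together with a leftmost $(\leq k-1-i)$-separator of $G_{k-1-i}$ (and the pivot). An induction using the recurrence then gives exactly $C_{k-1}$ leftmost $(X,Y,\leq k)^{G_k}$-separators. For the important-separator bound, I would arrange a variant of the same construction in which the leftmost separators at every threshold $i\le k$ are realized disjointly, so that their union achieves $\sum_{i=0}^{k-1}C_i$.

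\textbf{Main obstacle.} The delicate point is the tight Catalan constant in the first part. Beating the naive $4^k$ count by the polynomial factor $\sqrt{\pi}(k-1)^{3/2}$ requires the amortized analysis of $\mathcal{T}$ to match a non-crossing lattice-path constraint exactly, which in turn hinges on the interplay between the conditions $I\subseteq S'$, $|S'|<k$, and the \emph{leftmost} flag in Algorithm~\ref{algo:3}. Verifying that these algorithmic guards translate cleanly into a Dyck-path invariant---rather than some looser ballot-style bound---is the main technical step.
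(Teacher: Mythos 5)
Your two upper-bound arguments are essentially the paper's own: the paper also bounds the number of leaves of the computation tree of Algorithm~\ref{algo:3} by encoding an exclude branch as ``$[$'' and an include branch as ``$]$'', so that root-to-leaf paths become (restricted) Dyck words counted by $C_{k-1}$, and it derives the important-separator bound from Lemma~\ref{lem:numleft_numimportant} by exactly the union bound you write. (The paper does need a small amortization --- its operations (a) and (b) on the compact parentheses tree --- to get $C_{k-1}$ rather than $\sum_{k'<k}C_{k'}$ out of this encoding; your sketch glosses over this step, but the idea is the same.)

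The genuine gap is in the tightness construction. The recurrence $C_{k-1}=\sum_{a=0}^{k-2}C_a C_{k-2-a}$ sums over all ways to split the budget, whereas your $G_k$ glues together one copy of $G_i$ and one copy of $G_{k-1-i}$ for a \emph{fixed} $i$; even in the best case such a graph realizes only the single product term $C_{i-1}\cdot C_{k-2-i}$, which is strictly smaller than $C_{k-1}$ once $k\geq 3$. Moreover, the composition you describe --- ``in series, joined through a single shared pivot vertex'' --- does not even produce a product: if the pivot $p$ is a cut vertex with $G_i$ on the $X$-side and $G_{k-1-i}$ on the $Y$-side, then every minimal $(X,Y)$-separator is either $\{p\}$ itself, or a minimal $(X,p)$-separator lying entirely in the first part, or a minimal $(p,Y)$-separator lying entirely in the second part; no minimal separator mixes the two parts, so the leftmost count degenerates to that of the $X$-side part alone with the full budget $k$. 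What is needed is a \emph{parallel} composition (both parts independently join $X$ to $Y$, so both must be cut and sizes add), with the two sides extremal for \emph{every} budget $j\leq k-1$ simultaneously, so that the separator itself can choose how to distribute its $k$ vertices. That is exactly the paper's example: $\hat{G}$ is a complete binary tree of depth at least $k+1$ with $Y$ the root and $X$ the set of leaves. There a leftmost $(X,Y,\leq k)^{\hat{G}}$-separator is precisely the leaf set of a full binary subtree of the root with exactly $k$ leaves (anything smaller can be pushed left by replacing a node with its two children), and these subtrees are counted by the Catalan recurrence, giving exactly $C_{k-1}$. The same tree also settles tightness for important separators with no separate ``variant'': in $\hat{G}$ every leftmost $(X,Y,\leq i)$-separator has size exactly $i$, so the families for distinct thresholds are automatically disjoint, and Lemma~\ref{lem:numleft_numimportant} yields exactly the claimed sum of Catalan numbers.
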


Before giving a proof, we mention some definitions.

\begin{definition}
	A \emph{full $k$-paranthesization} is a string over the alphabet $\{[,]\}$ consisting of $k$ ``$[$'' and	$k$ ``$]$'' having more ``$[$'' than ``$]$'' in every non-trivial prefix. This forms a language slightly different from the Dyck language (a well-known language). We call this language \emph{restricted Dyck language} and any string in this language is called a \emph{restricted Dyck word}.
\end{definition}
Notice that $[x]$ is a restricted Dyck word iff $x$ is a Dyck word\footnote{Notice the difference between the restricted Dyck Language and the Dyck language itself. In every non-trivial prefix of a Dyck word, the number of ``$[$'' is $\geq $ the number of ``$]$'', whereas for the restricted version it should be strictly greater.}. 

Full paranthesizations (the restricted Dyck language over the alphabet $\{[,]\}$) are generated by the following context-free grammar $G = (\{S, A\}, \{[,]\}, R, S)$, where the set of the rules $R$ is:
\begin{align*}
	  S &\rightarrow [A] \, | \, \epsilon\\
	  A &\rightarrow AA \, | \, [A] \, | \, \epsilon
\end{align*}

\begin{definition}
	The \emph{$k$-parentheses tree} is the binary tree $(V, E)$ with the following properties.
	\begin{itemize}
		\item $V$ is the set of prefixes of full $k'$-parenthesizations for $0 < k' \leq k$.
		\item If ``$x$'' is in $V$, then ``$x[$'' is the left child of $x$, and if ``$x]$'' is in $V$, then ``$x]$'' is the right child of $x$.
		\item If ``$x[$'' or ``$x]$'' are not in $V$, then $x$ has no left or right child
		respectively.
	\end{itemize}
\end{definition}

\begin{definition}
	The \emph{compact $\leq k$-parentheses tree} is obtained from the $\leq k$-parentheses tree by removing all nodes containing $k$ ``$[$'' and ending in
	``$]$''.
\end{definition}

Note that the nodes with $k$ ``$[$'' ending in ``$]$'' have been removed, because no branching happens at their parents, as there are no left children.

The compact $\leq k$-parentheses tree is a full binary tree, every node has $0$
or $2$ children. The number of leaves in this tree is equal to the number
of full $\leq k$-parenthesizations, which is equal to $\sum_{k'=1}^{k-1} C_{k’}
= \mathcal{O}(C_{k})$ where $C_{k}$ is the $k$-th Catalan number. An immediate
consequence is that the number of nodes in the compact $\leq k$-parentheses
tree is $2 \sum_{k’=1}^{k-1} C_{k’} -1 = \mathcal{O}(C_{k})$.

For the algorithm to find all leftmost ($X, Y, \leq k$)-separators, the worst case computation tree is the compact $\leq k$-parentheses tree with the nodes
representing full $k’$-parenthesizations for $1\leq k’ \leq k-1$ removed.

An arbitrary computation tree for the algorithm to find all leftmost
$\leq k$-separators is obtained form the worst case computation tree by the
following two operations.
\begin{enumerate}[(a)]
	\item For any node $x$, the subtree rooted at the left child $c_1(x)$ may be just removed or replaced by the subtree rooted at $c_1(c_1(x))$. This splicing operation can be repeated to jump down arbitrarily far.
	\item For any node $x$ with less than or equal to $k$ ``$[$''s, let $x_i$ be $x$ concatenated with $i$ ``$]$''s. If
	$x_{i^*}$ is a full $k’$-parenthesization for some $k’$ with $0 < k’ < k$, and the
	nodes $x_{i’}$ for $0 \leq i’< i^*$ have no left children, then the leaf $x_{i^*}$ is added. (In the added node $x_{i^*}$, a minimal separator $S$ of size $<k$ is picked, because there is no larger minimal $\leq k$-separator to the left of $S$.)
\end{enumerate}

Note that the number of leaves in an arbitrary computation tree for the	algorithm to find all ($X, Y, \leq k$)-separators is $\leq C_{k-1}$, because for every leaf ``$x]$'' added in (b), at least one leaf, namely ``$x[\cdots[]\cdots]]$'' (full parenthesization of length $k$) has been removed in (a).

For the recursive procedure, we introduce a boolean parameter leftmost. The parameter is originally true in the root and in every node that is a left child. If a node has a left child, i.e., excluding $v$, a new larger separator $S’$ with $|S’| \leq k$ has been found, then leftmost is set to false. The current truth value of leftmost is passed to the right child (include $v$ call). If $x$ contains $k’<k$ ``$[$'' and $k’-1$ ``$]$'', then a right call (including $v$) is only made if leftmost is true, i.e., there is no larger ($X, Y, \leq k$)-separator to the left of the current small (size $< k$) separator.

In every node, $S$ is the minimum leftmost separator between and including
$X$ and the current $Y$.

\begin{itemize}
    	\item In the root, $Y$ is the original $Y$. $I=\emptyset$, and $\mathit{leftmost} = true$. The node is
	represented by a sequence of ``$[$''s of length $|S|$.
	\item A node $x$ with $|S|=k’<k$, is represented by a prefix $x$ of a full parenthesization with $k’$ ``$[$'' and $|I|$ ``$]$''.
\end{itemize}

[Alternatively, one could include a root in the computation tree
corresponding to the empty $0$-parenthesization. In such a root, no $S$ is
defined. The minimum leftmost separator $S’$ is computed. A left call is
made if $|S’|\leq k$. No right call is made. In this view,
the root node has only one child. Not including such a root node means
that this node is not handled by the recursive procedure, but by a
different calling procedure.]

Now, here is the proof of Theorem \ref{thm:bound}.
\begin{proof}
	First, we prove the result corresponding to the leftmost separators. As explained above, we show excluding a vertex by a ``$[$'', which corresponds to a left branch. Analogously, ``$]$'' denotes including a vertex in $I$ (requiring the leftmost separator to have $v$ in it), which corresponds to the right branch. 
	
	Now, we see why we used a stack to handle branching on the vertices. Even though the order does not matter but we need a stack to have a nice correspondence between the algorithm behavior and the restricted Dyck words. It is well known that the number of the Dyck words of length $2k$ is $C_k$. So, the upper bound on the number of leftmost ($X$, $Y$, $\leq k$)-separators is $C_{k-1} = \frac{{2(k-1) \choose k-1}}{k}\sim\frac{4^{k-1}}{(k-1)^{3/2}\sqrt{\pi}}$.  We also close the gap and show that this bound is tight.
	
	Let $\hat{G}$ be a complete binary tree with depth at least $k+1$. Let $X$ be the set of all the leaves of $\hat{G}$, and $Y$ be the root. (see Figure \ref{fig:03-03}). The number of full subtrees with exactly $k$ leaves is $C_{k-1}$. On the other hand in $\hat{G}$, no ($X$, $Y$, $\leq k-1$)$^{\hat{G}}$-separator is a leftmost ($X$, $Y$, $\leq k$)$^{\hat{G}}$-separator since we can replace a node with its both children and move more to the left. So, the number of minimal leftmost ($X$, $Y$, $\leq k$)$^{\hat{G}}$-separators is exactly $C_{k-1}$.
	
	For the upper bound on the number of important separators, we use Lemma~\ref{lem:numleft_numimportant} and this immediately implies that  the number of important ($X$, $Y$, $\leq k$)$^{\hat{G}}$-separators $\leq \sum\limits_{i=1}^{i=k-1}C_i$.
	
	For the tightness part, assume the same $\hat{G}$ with the same $X$ and $Y$. The number of important $(X, Y, \leq k)^{\hat{G}}$-separators is $\sum\limits_{i=1}^{k-1}C_i$ since based on Lemma~\ref{lem:numleft_numimportant}, the number of important $(X, Y, \leq k)^{\hat{G}}$-separators is $\leq \sum\limits_{i=1}^{k-1}(\text{the number of leftmost $(X, Y, \leq k)^{\hat{G}}$-separators})$. On the other hand, as we mentioned earlier, in $\hat{G}$, no leftmost $(X, Y, \leq k-1)^{\hat{G}}$-separator is a leftmost $(X, Y, \leq k)^{\hat{G}}$-separator. Hence, the number of important $(X, Y, \leq k)^{\hat{G}}$-separators is exactly $ \sum\limits_{i=1}^{k-1}(\text{the number of leftmost $(X, Y, \leq k)^{\hat{G}}$-separators})$, which is equal to $\sum\limits_{i=1}^{k-1}C_i$.
	
	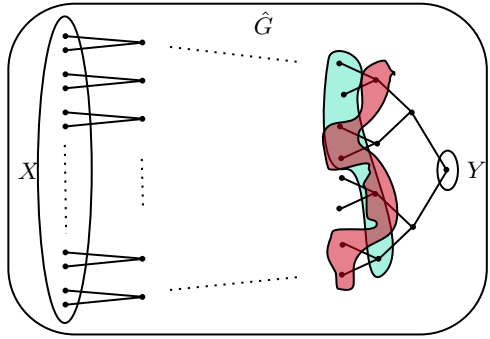
\begin{figure}
		
		\begin{minipage}[c]{0.55\textwidth}
			\tikzset{every picture/.style={line width=0.75pt}} 
			
			\begin{tikzpicture}[x=0.48pt,y=0.48pt,yscale=-1,xscale=1]
				
				\draw  [fill={rgb, 255:red, 80; green, 227; blue, 194 }  ,fill opacity=0.5 ] (266,47) .. controls (269,36) and (300,43.92) .. (289.5,70.96) .. controls (279,98) and (333,190) .. (310,216) .. controls (287,242) and (302,131) .. (278,135) .. controls (254,139) and (263,58) .. (266,47) -- cycle ;
				\draw  [fill={rgb, 255:red, 0; green, 0; blue, 0 }  ,fill opacity=1 ] (59,30.5) .. controls (59,29.67) and (59.67,29) .. (60.5,29) .. controls (61.33,29) and (62,29.67) .. (62,30.5) .. controls (62,31.33) and (61.33,32) .. (60.5,32) .. controls (59.67,32) and (59,31.33) .. (59,30.5) -- cycle ;
				\draw  [fill={rgb, 255:red, 0; green, 0; blue, 0 }  ,fill opacity=1 ] (59,41.5) .. controls (59,40.67) and (59.67,40) .. (60.5,40) .. controls (61.33,40) and (62,40.67) .. (62,41.5) .. controls (62,42.33) and (61.33,43) .. (60.5,43) .. controls (59.67,43) and (59,42.33) .. (59,41.5) -- cycle ;
				\draw  [fill={rgb, 255:red, 0; green, 0; blue, 0 }  ,fill opacity=1 ] (59,60.5) .. controls (59,59.67) and (59.67,59) .. (60.5,59) .. controls (61.33,59) and (62,59.67) .. (62,60.5) .. controls (62,61.33) and (61.33,62) .. (60.5,62) .. controls (59.67,62) and (59,61.33) .. (59,60.5) -- cycle ;
				\draw  [fill={rgb, 255:red, 0; green, 0; blue, 0 }  ,fill opacity=1 ] (59,71.5) .. controls (59,70.67) and (59.67,70) .. (60.5,70) .. controls (61.33,70) and (62,70.67) .. (62,71.5) .. controls (62,72.33) and (61.33,73) .. (60.5,73) .. controls (59.67,73) and (59,72.33) .. (59,71.5) -- cycle ;
				\draw  [fill={rgb, 255:red, 0; green, 0; blue, 0 }  ,fill opacity=1 ] (59,90.5) .. controls (59,89.67) and (59.67,89) .. (60.5,89) .. controls (61.33,89) and (62,89.67) .. (62,90.5) .. controls (62,91.33) and (61.33,92) .. (60.5,92) .. controls (59.67,92) and (59,91.33) .. (59,90.5) -- cycle ;
				\draw  [fill={rgb, 255:red, 0; green, 0; blue, 0 }  ,fill opacity=1 ] (59,101.5) .. controls (59,100.67) and (59.67,100) .. (60.5,100) .. controls (61.33,100) and (62,100.67) .. (62,101.5) .. controls (62,102.33) and (61.33,103) .. (60.5,103) .. controls (59.67,103) and (59,102.33) .. (59,101.5) -- cycle ;
				\draw [line width=0.75]  [dash pattern={on 0.84pt off 2.51pt}]  (60,116) -- (61,186) ;
				\draw  [fill={rgb, 255:red, 0; green, 0; blue, 0 }  ,fill opacity=1 ] (59,200.5) .. controls (59,199.67) and (59.67,199) .. (60.5,199) .. controls (61.33,199) and (62,199.67) .. (62,200.5) .. controls (62,201.33) and (61.33,202) .. (60.5,202) .. controls (59.67,202) and (59,201.33) .. (59,200.5) -- cycle ;
				\draw  [fill={rgb, 255:red, 0; green, 0; blue, 0 }  ,fill opacity=1 ] (59,211.5) .. controls (59,210.67) and (59.67,210) .. (60.5,210) .. controls (61.33,210) and (62,210.67) .. (62,211.5) .. controls (62,212.33) and (61.33,213) .. (60.5,213) .. controls (59.67,213) and (59,212.33) .. (59,211.5) -- cycle ;
				\draw  [fill={rgb, 255:red, 0; green, 0; blue, 0 }  ,fill opacity=1 ] (59,230.5) .. controls (59,229.67) and (59.67,229) .. (60.5,229) .. controls (61.33,229) and (62,229.67) .. (62,230.5) .. controls (62,231.33) and (61.33,232) .. (60.5,232) .. controls (59.67,232) and (59,231.33) .. (59,230.5) -- cycle ;
				\draw  [fill={rgb, 255:red, 0; green, 0; blue, 0 }  ,fill opacity=1 ] (59,241.5) .. controls (59,240.67) and (59.67,240) .. (60.5,240) .. controls (61.33,240) and (62,240.67) .. (62,241.5) .. controls (62,242.33) and (61.33,243) .. (60.5,243) .. controls (59.67,243) and (59,242.33) .. (59,241.5) -- cycle ;
				\draw   (39,135.5) .. controls (39,68.95) and (48.4,15) .. (60,15) .. controls (71.6,15) and (81,68.95) .. (81,135.5) .. controls (81,202.05) and (71.6,256) .. (60,256) .. controls (48.4,256) and (39,202.05) .. (39,135.5) -- cycle ;
				\draw  [fill={rgb, 255:red, 0; green, 0; blue, 0 }  ,fill opacity=1 ] (119,35.5) .. controls (119,34.67) and (119.67,34) .. (120.5,34) .. controls (121.33,34) and (122,34.67) .. (122,35.5) .. controls (122,36.33) and (121.33,37) .. (120.5,37) .. controls (119.67,37) and (119,36.33) .. (119,35.5) -- cycle ;
				\draw  [fill={rgb, 255:red, 0; green, 0; blue, 0 }  ,fill opacity=1 ] (119,65.5) .. controls (119,64.67) and (119.67,64) .. (120.5,64) .. controls (121.33,64) and (122,64.67) .. (122,65.5) .. controls (122,66.33) and (121.33,67) .. (120.5,67) .. controls (119.67,67) and (119,66.33) .. (119,65.5) -- cycle ;
				\draw  [fill={rgb, 255:red, 0; green, 0; blue, 0 }  ,fill opacity=1 ] (119,95.5) .. controls (119,94.67) and (119.67,94) .. (120.5,94) .. controls (121.33,94) and (122,94.67) .. (122,95.5) .. controls (122,96.33) and (121.33,97) .. (120.5,97) .. controls (119.67,97) and (119,96.33) .. (119,95.5) -- cycle ;
				\draw [line width=0.75]  [dash pattern={on 0.84pt off 2.51pt}]  (120,127) -- (121,166) ;
				\draw  [fill={rgb, 255:red, 0; green, 0; blue, 0 }  ,fill opacity=1 ] (119,205.5) .. controls (119,204.67) and (119.67,204) .. (120.5,204) .. controls (121.33,204) and (122,204.67) .. (122,205.5) .. controls (122,206.33) and (121.33,207) .. (120.5,207) .. controls (119.67,207) and (119,206.33) .. (119,205.5) -- cycle ;
				\draw  [fill={rgb, 255:red, 0; green, 0; blue, 0 }  ,fill opacity=1 ] (119,235.5) .. controls (119,234.67) and (119.67,234) .. (120.5,234) .. controls (121.33,234) and (122,234.67) .. (122,235.5) .. controls (122,236.33) and (121.33,237) .. (120.5,237) .. controls (119.67,237) and (119,236.33) .. (119,235.5) -- cycle ;
				\draw  [fill={rgb, 255:red, 0; green, 0; blue, 0 }  ,fill opacity=1 ] (356,135.5) .. controls (356,134.67) and (356.67,134) .. (357.5,134) .. controls (358.33,134) and (359,134.67) .. (359,135.5) .. controls (359,136.33) and (358.33,137) .. (357.5,137) .. controls (356.67,137) and (356,136.33) .. (356,135.5) -- cycle ;
				\draw   (366.5,136) .. controls (366.5,127.44) and (362.92,120.5) .. (358.5,120.5) .. controls (354.08,120.5) and (350.5,127.44) .. (350.5,136) .. controls (350.5,144.56) and (354.08,151.5) .. (358.5,151.5) .. controls (362.92,151.5) and (366.5,144.56) .. (366.5,136) -- cycle ;
				\draw   (16,57) .. controls (16,28.28) and (39.28,5) .. (68,5) -- (337,5) .. controls (365.72,5) and (389,28.28) .. (389,57) -- (389,213) .. controls (389,241.72) and (365.72,265) .. (337,265) -- (68,265) .. controls (39.28,265) and (16,241.72) .. (16,213) -- cycle ;
				\draw    (60.5,30.5) -- (120.5,35.5) ;
				\draw    (62,41.5) -- (122,35.5) ;
				\draw    (62,60.5) -- (122,65.5) ;
				\draw    (60.5,71.5) -- (122,65.5) ;
				\draw    (62,90.5) -- (122,95.5) ;
				\draw    (60.5,101.5) -- (122,95.5) ;
				\draw    (60.5,200.5) -- (122,205.5) ;
				\draw    (62,211.5) -- (122,205.5) ;
				\draw    (62,241.5) -- (122,235.5) ;
				\draw    (62,230.5) -- (122,235.5) ;
				\draw  [fill={rgb, 255:red, 0; green, 0; blue, 0 }  ,fill opacity=1 ] (329,90.5) .. controls (329,89.67) and (329.67,89) .. (330.5,89) .. controls (331.33,89) and (332,89.67) .. (332,90.5) .. controls (332,91.33) and (331.33,92) .. (330.5,92) .. controls (329.67,92) and (329,91.33) .. (329,90.5) -- cycle ;
				\draw  [fill={rgb, 255:red, 0; green, 0; blue, 0 }  ,fill opacity=1 ] (330,180.5) .. controls (330,179.67) and (330.67,179) .. (331.5,179) .. controls (332.33,179) and (333,179.67) .. (333,180.5) .. controls (333,181.33) and (332.33,182) .. (331.5,182) .. controls (330.67,182) and (330,181.33) .. (330,180.5) -- cycle ;
				\draw    (330.5,90.5) -- (357.5,135.5) ;
				\draw    (331.5,180.5) -- (357.5,137) ;
				\draw    (302.5,63.5) -- (329.5,89.25) ;
				\draw    (303.5,115) -- (329.5,90.11) ;
				\draw    (303.5,154.5) -- (330.5,180.25) ;
				\draw    (304.5,206) -- (330.5,181.11) ;
				\draw  [fill={rgb, 255:red, 0; green, 0; blue, 0 }  ,fill opacity=1 ] (300.5,154.5) .. controls (300.5,153.67) and (301.17,153) .. (302,153) .. controls (302.83,153) and (303.5,153.67) .. (303.5,154.5) .. controls (303.5,155.33) and (302.83,156) .. (302,156) .. controls (301.17,156) and (300.5,155.33) .. (300.5,154.5) -- cycle ;
				\draw  [fill={rgb, 255:red, 0; green, 0; blue, 0 }  ,fill opacity=1 ] (301,65) .. controls (301,64.17) and (301.67,63.5) .. (302.5,63.5) .. controls (303.33,63.5) and (304,64.17) .. (304,65) .. controls (304,65.83) and (303.33,66.5) .. (302.5,66.5) .. controls (301.67,66.5) and (301,65.83) .. (301,65) -- cycle ;
				\draw  [fill={rgb, 255:red, 0; green, 0; blue, 0 }  ,fill opacity=1 ] (302,115) .. controls (302,114.17) and (302.67,113.5) .. (303.5,113.5) .. controls (304.33,113.5) and (305,114.17) .. (305,115) .. controls (305,115.83) and (304.33,116.5) .. (303.5,116.5) .. controls (302.67,116.5) and (302,115.83) .. (302,115) -- cycle ;
				\draw  [fill={rgb, 255:red, 0; green, 0; blue, 0 }  ,fill opacity=1 ] (303,205.5) .. controls (303,204.67) and (303.67,204) .. (304.5,204) .. controls (305.33,204) and (306,204.67) .. (306,205.5) .. controls (306,206.33) and (305.33,207) .. (304.5,207) .. controls (303.67,207) and (303,206.33) .. (303,205.5) -- cycle ;
				\draw    (274.5,141) -- (301.5,153.5) ;
				\draw    (275.5,166) -- (301.5,153.92) ;
				\draw    (276.5,193) -- (303.5,205.5) ;
				\draw    (277.5,218) -- (303.5,205.92) ;
				\draw  [fill={rgb, 255:red, 0; green, 0; blue, 0 }  ,fill opacity=1 ] (274.5,218) .. controls (274.5,217.17) and (275.17,216.5) .. (276,216.5) .. controls (276.83,216.5) and (277.5,217.17) .. (277.5,218) .. controls (277.5,218.83) and (276.83,219.5) .. (276,219.5) .. controls (275.17,219.5) and (274.5,218.83) .. (274.5,218) -- cycle ;
				\draw    (275.5,52) -- (302.5,64.5) ;
				\draw    (276.5,77) -- (302.5,64.92) ;
				\draw    (274.5,102) -- (301.5,114.5) ;
				\draw    (275.5,127) -- (301.5,114.92) ;
				\draw  [fill={rgb, 255:red, 0; green, 0; blue, 0 }  ,fill opacity=1 ] (275,194.5) .. controls (275,193.67) and (275.67,193) .. (276.5,193) .. controls (277.33,193) and (278,193.67) .. (278,194.5) .. controls (278,195.33) and (277.33,196) .. (276.5,196) .. controls (275.67,196) and (275,195.33) .. (275,194.5) -- cycle ;
				\draw  [fill={rgb, 255:red, 0; green, 0; blue, 0 }  ,fill opacity=1 ] (272.5,166) .. controls (272.5,165.17) and (273.17,164.5) .. (274,164.5) .. controls (274.83,164.5) and (275.5,165.17) .. (275.5,166) .. controls (275.5,166.83) and (274.83,167.5) .. (274,167.5) .. controls (273.17,167.5) and (272.5,166.83) .. (272.5,166) -- cycle ;
				\draw  [fill={rgb, 255:red, 0; green, 0; blue, 0 }  ,fill opacity=1 ] (274.5,142) .. controls (274.5,141.17) and (275.17,140.5) .. (276,140.5) .. controls (276.83,140.5) and (277.5,141.17) .. (277.5,142) .. controls (277.5,142.83) and (276.83,143.5) .. (276,143.5) .. controls (275.17,143.5) and (274.5,142.83) .. (274.5,142) -- cycle ;
				\draw  [fill={rgb, 255:red, 0; green, 0; blue, 0 }  ,fill opacity=1 ] (274,126.5) .. controls (274,125.67) and (274.67,125) .. (275.5,125) .. controls (276.33,125) and (277,125.67) .. (277,126.5) .. controls (277,127.33) and (276.33,128) .. (275.5,128) .. controls (274.67,128) and (274,127.33) .. (274,126.5) -- cycle ;
				\draw  [fill={rgb, 255:red, 0; green, 0; blue, 0 }  ,fill opacity=1 ] (273,101.5) .. controls (273,100.67) and (273.67,100) .. (274.5,100) .. controls (275.33,100) and (276,100.67) .. (276,101.5) .. controls (276,102.33) and (275.33,103) .. (274.5,103) .. controls (273.67,103) and (273,102.33) .. (273,101.5) -- cycle ;
				\draw  [fill={rgb, 255:red, 0; green, 0; blue, 0 }  ,fill opacity=1 ] (276,76.5) .. controls (276,75.67) and (276.67,75) .. (277.5,75) .. controls (278.33,75) and (279,75.67) .. (279,76.5) .. controls (279,77.33) and (278.33,78) .. (277.5,78) .. controls (276.67,78) and (276,77.33) .. (276,76.5) -- cycle ;
				\draw  [fill={rgb, 255:red, 0; green, 0; blue, 0 }  ,fill opacity=1 ] (272.5,52) .. controls (272.5,51.17) and (273.17,50.5) .. (274,50.5) .. controls (274.83,50.5) and (275.5,51.17) .. (275.5,52) .. controls (275.5,52.83) and (274.83,53.5) .. (274,53.5) .. controls (273.17,53.5) and (272.5,52.83) .. (272.5,52) -- cycle ;
				\draw [color={rgb, 255:red, 0; green, 0; blue, 0 }  ][fill={rgb, 255:red, 208; green, 2; blue, 27 }  ,fill opacity=0.5 ][line width=0.75] [line join = round][line cap = round]   (317,61) .. controls (317,58.11) and (315.36,58.7) .. (314,57) .. controls (313.9,56.88) and (310.46,50.15) .. (310,50) .. controls (306.07,48.69) and (297.47,53.53) .. (295,56) .. controls (285.84,65.16) and (297.45,84.55) .. (288,94) .. controls (280.64,101.36) and (266.88,94.6) .. (262,106) .. controls (260.51,109.47) and (261.41,115.04) .. (262,118) .. controls (262.3,119.51) and (264.46,134.49) .. (266,135) .. controls (274.42,137.81) and (292.81,128.21) .. (297,131) .. controls (298.11,131.74) and (296.68,133.71) .. (297,135) .. controls (297.26,136.02) and (298.33,137) .. (298,138) .. controls (297.61,139.16) and (291.9,146.9) .. (293,148) .. controls (297.05,152.05) and (294.03,167.74) .. (296,173) .. controls (296.94,175.52) and (302.15,174.45) .. (303,177) .. controls (306.27,186.82) and (273.71,179.29) .. (269,184) .. controls (267.74,185.26) and (267.87,187.82) .. (267,190) .. controls (261.65,203.37) and (266.55,207.1) .. (271,216) .. controls (272.22,218.44) and (271.13,228.53) .. (273,229) .. controls (293.45,234.11) and (280.83,218.33) .. (287,206) .. controls (295.51,188.98) and (316.87,194.33) .. (319,173) .. controls (320.82,154.78) and (311.23,134.23) .. (301,124) .. controls (297.81,120.81) and (289.65,117.53) .. (289,113) .. controls (287.61,103.29) and (292.03,102.97) .. (298,97) .. controls (304.41,90.59) and (309.76,81.95) .. (312,73) .. controls (312.49,71.03) and (312.09,68.81) .. (313,67) .. controls (314.33,64.33) and (312,62) .. (317,59) ;
				\draw  [dash pattern={on 0.84pt off 2.51pt}]  (142,39) -- (244,51) ;
				\draw  [dash pattern={on 0.84pt off 2.51pt}]  (142,230) -- (241,220) ;
				
				\draw (205,9) node [anchor=north west][inner sep=0.75pt]   [align=left] {$\displaystyle \hat{G}$};
				\draw (21,128) node [anchor=north west][inner sep=0.75pt]   [align=left] {$\displaystyle X$};
				\draw (371,127) node [anchor=north west][inner sep=0.75pt]   [align=left] {$\displaystyle Y$};

			\end{tikzpicture}
		\end{minipage}\hfill
		\begin{minipage}[c]{0.43\textwidth}
			\caption{
				Here, $k = 6$. We illustrate two out of $42$ minimal leftmost ($X$, $Y$, $\leq 6$)$^{\hat{G}}$-separators. Also, there are $1 + 2 + 5 + 14 + 42 = 64$ important ($X$, $Y$, $\leq 6$)$^{\hat{G}}$-separators.
			} \label{fig:03-03}
		\end{minipage}
	\end{figure}
	\hfill $\square$
	\end{proof}
	
	Notice that $\sum\limits_{i=1}^{k-1}C_i = \Theta(\frac{4^{k-1}}{(k-1)^{3/2}})$. 
	
	\begin{remark} Let $\mathcal{A}_{X, Y, \leq}^G$ and $\mathcal{B}_{X, Y, \leq}^G$ be the set of all leftmost ($X$, $Y$, $\leq k$)$^G$-separators and the set of   
	Based on Lemma~\ref{lem:numleft_numimportant}, the set of all important ($X$, $Y$, $\leq k$)$^G$-separators that are not leftmost ($X$, $Y$, $\leq k$)$^G$-separator is the union of the set of all leftmost ($X$, $Y$, $\leq i$)$^G$-separator, for $i=1, \cdots, k-1$.
	\end{remark}
	\begin{remark}
	Let $\hat{G}$ be a complete binary tree with depth at least $k+1$, $\hat{X}$ be the set of all leaves of $\hat{G}$, and $\hat{Y}$ be the root. For a fixed $k \in \mathbb{N}$,
	\begin{enumerate}
	    \item $\hat{G}$ has the highest number of leftmost ($X$, $Y$, $\leq k$)$^G$-separators among all graphs like $G$ and for all $X, Y \subseteq V(G)$ and it happens by setting $X = \hat{X}$ and $Y = \hat{Y}$.
	    \item $\hat{G}$ has the highest number of important ($X$, $Y$, $\leq k$)$^G$-separators among all graphs like $G$ and for all $X, Y \subseteq V(G)$ and it happens by setting $X = \hat{X}$ and $Y = \hat{Y}$.
	    \item $\hat{G}$ has the highest number of important but not leftmost ($X$, $Y$, $\leq k$)$^G$-separators among all graphs like $G$ and for all $X, Y \subseteq V(G)$ and it happens by setting $X = \hat{X}$ and $Y = \hat{Y}$.
	\end{enumerate}
	\end{remark}
	\begin{theorem}
		Let $G = (V, E)$ be a graph, $X, Y \subseteq V$, and $k \in \mathbb{N}$. There is an algorithm which finds all minimal leftmost ($X$, $Y$, $\leq k$)$^G$-separators in time $\mathcal{O}(\frac{4^k}{\sqrt{k}}n)$.
	\end{theorem}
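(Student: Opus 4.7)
The plan is to reuse the algorithm of Theorem~2 (Algorithm \ref{algo:3}) and obtain the sharper time bound by combining two ingredients already in hand: (a) the structural analysis of the computation tree via the compact $\leq k$-parentheses tree, which caps the number of leaves by $C_{k-1}$; and (b) a careful accounting of the work done at each node of that tree using the incremental form of Algorithm~\ref{algo:1} that reuses the family $\mathcal{P}$ passed down from the parent.

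First I would recall that, as argued in the proof of Theorem~\ref{thm:bound}, any computation tree produced by Algorithm~\ref{algo:3} is (a sub-tree of) the compact $\leq k$-parentheses tree, every internal node of which is full (has exactly two children). Hence if $L$ is the number of leaves, the total number of nodes is at most $2L-1$, and Theorem~\ref{thm:bound} gives $L \leq C_{k-1}$. Thus the total number of recursive calls is $O(C_{k-1}) = O\!\left(\dfrac{4^{k-1}}{(k-1)^{3/2}\sqrt{\pi}}\right)$.

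Next I would bound the work per recursive call. At each node Algorithm~\ref{algo:3} runs Algorithm~\ref{algo:1} on the local graph with the inherited set $\mathcal{P}$ of pairwise disjoint paths. By the modification described in Section~\ref{sec:flumgrohe}, the cost of Algorithm~\ref{algo:1} is dominated by searching for at most $k$ augmenting walks plus a single DFS to compute $R(\mathcal{P})$ and the separator $C(\mathcal{P})$. Each augmenting walk search and the final DFS take $O(n)$ time (with constant-degree bookkeeping on the flow graph, as in Lemma~11.20 of \cite{FlumGrohe}), so the work per node is $O(kn)$. Multiplying gives
\[
O\!\bigl(C_{k-1}\bigr)\cdot O(kn) \;=\; O\!\left(\frac{4^{k-1}}{(k-1)^{3/2}}\cdot kn\right) \;=\; O\!\left(\frac{4^{k}}{\sqrt{k}}\,n\right),
\]
which is the claimed bound.

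The one step that needs genuine care is verifying that the $O(kn)$ per-node cost is correct and not an over-count, because the inherited $\mathcal{P}$ is not built from scratch at every node. I would argue that in the worst case the inclusion/exclusion branches can each force at most one new augmenting walk before the separator size is re-established, and that the remaining work (updating $R(\mathcal{P})$ and rebuilding the $c_P$'s) is linear in $n$; even using the loose bound of $k$ augmenting walks per call, the analysis above already gives the desired $O(4^k/\sqrt{k}\cdot n)$. The saving over the coarser $O(2^{2k}\sqrt{k}\,n) = O(4^k \sqrt{k}\,n)$ bound of Theorem~2 comes entirely from replacing the trivial estimate $4^k$ on the number of computation-tree leaves by the tight Catalan estimate $C_{k-1}\sim 4^{k-1}/((k-1)^{3/2}\sqrt{\pi})$ obtained in Theorem~\ref{thm:bound}, which shaves a factor of $k$ from the running time.
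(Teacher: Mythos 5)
Your proposal is correct and follows essentially the same route as the paper's proof: bound the number of computation-tree nodes by $\mathcal{O}(C_{k-1})$ via the parentheses-tree correspondence from Theorem~\ref{thm:bound}, charge $\mathcal{O}(kn)$ per node for the flow computation, and multiply to get $\mathcal{O}(C_{k-1}\cdot kn)=\mathcal{O}(4^k n/\sqrt{k})$. The extra care you take with the full-binary-tree node count and the incremental reuse of $\mathcal{P}$ is a welcome elaboration but does not change the argument.
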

	\begin{proof}
		Our algorithm searches all the possibilities and enumerates all the possible leftmost separators. We showed that the number of leaves, which is the number of leftmost separators is $\mathcal{O}(C_{k-1})$. So, we have $\mathcal{O}(C_{k-1})$ nodes in our computation tree and work done in every node is $\mathcal{O}(kn)$ (the running time of a simple flow algorithm). Hence, the total running time is $\mathcal{O}(\frac{2^{2k}}{\sqrt{k}}n)$.\hfill $\square$
	\end{proof}
	\section{Application to Treewidth Approximation}
	As mentioned in the introduction, Reed's  $5$-approximation algorithm \cite{reed1992finding} for treewidth  runs in time $\mathcal{O}(2^{24k}(k+1)!\,n \log n)$. This was improved to a $5$-approximation algorithm running in $\mathcal{O}(k^2 2 ^{8.766k} n \log n)$ \cite{belbasi2020improvement}. Here, we further improve the treewidth approximation algorithm. In order to do so, we briefly describe the algorithms given in \cite{reed1992finding}, and \cite{belbasi2020improvement}. Hence, we review some notations and for the others we give references. 

For a graph $G = (V, E)$ and a subset $W$ of the vertices, $G[W]$ is the subgraph induced by $W$. For the sake of simplicity throughout this paper, let $G - W$ be $G[V\setminus W]$ and $G - v$ be $G - \{v\}$ for any $W \subseteq V(G)$ and any $v \in V(G)$.\\
Also, in a weighted graph, a non-negative integer weight $w(v)$ is defined for each vertex $v$. For a subset $W$ of the vertices, the weight $w(W)$ is simply the sum of the weights of all vertices in $W$. Furthermore, the total weight or the weight of $G$ is the weight of $V$.
\begin{definition}
    A \emph{tree decomposition} of a graph $G = (V, E)$ is a tree $\mathcal{T} = (V_{\mathcal{T}}, E_{\mathcal{T}})$ such that each node $x$ in $V_{\mathcal{T}}$ is associated with a set $B_x$ (called the bag of $x$) of vertices in $G$, and $\mathcal{T}$ has the following properties:
\begin{itemize}
    \item $\bigcup\limits_{x \in V_{\mathcal{T}}} B_x = V(G)$
    
    \item $\forall \{u, v\} \in E, \, \exists x\in V_{\mathcal{T}}: u,v \in B_x$
    
    \item $\forall x, y \in V_{\mathcal{T}}, \, \forall z\in V_{\mathcal{T}} \text{ on the path connecting $x, y \in V_{\mathcal{T}}$: $B_x \cap B_y \subseteq B_z.$}$
\end{itemize}

Even though historically the \emph{width of a tree decomposition} is  defined to be the size of its largest bag minus one, in this paper we follow Reed's definition \cite{reed1992finding} and define the {\emph width of a tree decomposition} to be the size of its largest bag\footnote{For the sake of simplicity in the computation}. 

The \emph{treewidth} of a graph $G$ is the minimum width over all tree decompositions of $G$ called $tw(G)$. In the following, we use the letter $k$ for the treewidth.
\end{definition}
\begin{definition}
    A \emph{nice tree decomposition} is a tree decomposition which is rooted and its every node has at most two children. Any node $x$ in a nice tree decomposition $\mathcal{T}$ is of one of the following types (let $c$ be the only child of $x$ or let $c_1$ and $c_2$ be the two children of $x$):
\begin{itemize}
\item \emph{Leaf} node, a leaf of $\mathcal{T}$ without any children.
\item \emph{Forget} node (forgetting vertex $v$), where $v \in B_c$ and $B_x = B_c \setminus \{v\}$,
\item \emph{Introduce vertex} node (introducing vertex $v$), where $v \notin B_c$ and $B_x = B_c \cup \{v\}$ ,
\item \emph{Join} node, where $x$ has two children with the same bag as $x$, i.e. $B_x = B_{c_1} = B_{c_2}$.
\end{itemize}
\end{definition}
It has been shown that any given tree decomposition can be converted to a nice tree decomposition with the same width in polynomial time\cite{kneis2009bound}.

\begin{definition}
	A centroid of a weighted tree $T$ is a node $x$ such that none of the trees in the forest $T-x$ has more than half the total weight.
\end{definition}
For nice tree decompositions\footnote{The property that we use is that the bags of the adjacent nodes in a nice tree decomposition differ by at most one vertex. In fact, any tree decoposition with adjacent bags differing in at most one vertex works just fine.}, we choose a stronger version of centroid for this paper.
\begin{definition}
	A strong centroid of a nice tree decomposition $\mathcal{T}$ of a graph $G = (V, E)$ with respect to $W \subseteq V$ is a node $x$ of $\mathcal{T}$ such that none of the connected components of $G - B_x$ contains more than $\frac{1}{2}|W\setminus B_x|$ vertices of $W$.
\end{definition}
The following lemma shows there existence of a strong centroid for any given $W\subseteq V$.
\begin{lemma}\label{lem:cent}
	For every nice tree decomposition $(\mathcal{T}, \{B_x: x\in V_{\mathcal{T}}\})$ of a graph $G = (V, E)$ and every subset $W \subseteq V$, there exist a strong centroid with respect to $W$.
\end{lemma}
The proof of Lemma~\ref{lem:cent} can be found in \cite{belbasi2020improvement}.
\begin{definition} Let $G = (V,E)$ be a graph and $W \subseteq V$. A \emph{balanced $W$-separator} is a set $S \subseteq V$ such that every connected component of $G-S$ has at most $\frac{1}{2}|W|$ vertices.
\end{definition}

\begin{lemma}\label{Lem:ResBalSep}{\rm \cite[Lemma 11.16]{FlumGrohe}}
	Let $G = (V,E)$ be a graph of treewidth at most $k-1$ and $W \subseteq V$. Then there exists a balanced $W$-separator of $G$ of size at most $k$.
\end{lemma}

\begin{definition}
	Let $G = (V,E)$ be a graph and $W \subseteq V$. A \emph{weakly balanced separation} of $W$ is a triple $(X, S, Y )$, where $X, Y \subseteq W$, $S \subseteq V$ are pairwise disjoint sets such that:
	\begin{itemize}
		\item $W = X \cup (S \cap W) \cup Y$.
		\item $S$ separates $X$ from $Y$.
		\item $0 < |X|, |Y| \leq \frac{2}{3}|W|$.
	\end{itemize}
\end{definition}

\begin{lemma}\label{lem:exs_weak}{\rm \cite[Lemma 11.19]{FlumGrohe}}
	For $k \geq 3$, let $G = (V, E)$ be a graph of treewidth at most $k-1$ and $W \subseteq V$ with $|W| \geq 2k+1$. Then there exists a weakly balanced separation of $W$ of size at most $k$.
\end{lemma}

\begin{theorem}{\rm \cite[Corollary 11.22]{FlumGrohe}}
	For a graph of treewidth at most $k-1$ with a given set $W \subseteq V$ of size $|W| = 3k-2$, a weakly balanced separation of $W$ can be found in time $O(2^{3k} k^2 n)$.
\end{theorem}

\subsection{Summary of Reed's algorithm}

Reed finds a weakly balanced separator for the set of ``representatives'' as follows (it has been described in \cite{reed1992finding} and the details have been filled in \cite{belbasi2020improvement}):
\begin{itemize}
	\item Do a DFS on $G$ and find the deepest vertex $v\in V(G)$ whose subtree has at least $\frac{n}{24k}$ vertices. Let us call $v$ a representative. Also, define the weight of $v$ to be the size of this subtree, denoted as $w(v)$.
	\item Let $W$ be the set of all representatives.
	\item Form a balanced separator $S$ of size $\leq k$, partitioning $G - S$ into $X$ and $Y$ as follows:
	\begin{itemize}
		\item Decide if any representative is going to go into $S$. Branch 2-fold and consider both scenarios.
		\begin{itemize}
			\item Scenario \rom{1}: At least a representative like $v$ is going into $S$. Place $v$ into $S$, decrease $k$ by one, do a DFS from scratch and form a new set of representatives and recurs.
			\item Scenario \rom{2}: Branch on every representative like $v$ going into $X$ or $Y$. The crux of the idea is that if a representative goes into one side, most probably its corresponding subtree will also go into the same side. It is because if $v$ goes into let us say $X$ and one of the proper descendants of one of its children in its corresponding subtree like $u$ goes into $Y$, then the path from $v$ to $u$ should go through $S$, which means it requires a vertex from its subtree to be in $S$. However, the size of $S$ is bounded by $k$. So, not more than $k$ such incidents can happen. This means at most $k\cdot  \frac{n}{24k}$ vertices might go to the opposite side of their corresponding representative (notice that no children of $v$ has $\geq \frac{n}{24k}$ vertices in its subtree). This means the error is $\leq \frac{1}{24}n$. This property allows Reed to work with the representatives (there are  $\leq 24k$ of them), rather than all the vertices. Branching on every vertex results in an exponential-time algorithm in term of $n$ but it does not happen when we work with the representatives.
			
			Notice that we separated the branching on $X$ and $Y$ from the branching on $S$. It is because if a representative goes into $S$, we have no control over its corresponding  subtree. 
			
			Because of the error mentioned above, $\frac{1}{3}n - \frac{1}{24}n\leq w(X),w(Y)\leq\frac{2}{3}n + \frac{1}{24}n$.
			
			These bounds are in terms of weight while we want a balanced separator in terms of the exact size. So, in order to go from a separator in terms of weight to a separator in terms of the volume (actual size), we might have to pay up to the error, once more. So, the separator that we find for $W$, separates the entire graph into $L$, and $R$ such that $\frac{1}{3}n - \frac{1}{24}n - \frac{1}{24}n\leq |L|,|R|\leq\frac{2}{3}n + \frac{1}{24}n + \frac{1}{24}n$, which means $\frac{1}{4}n \leq |L|, |R| \leq \frac{3}{4}n$. 
		\end{itemize}
	\end{itemize}
\end{itemize}
Once Reed obtains $X$ and $Y$, uses the flow algorithm to find the leftmost minimum size ($X$, $Y$, $\leq k$)-separator, which is unique.

Now, $G-S$ is separated into two sides $L$ and $R$, such that $\frac{1}{4}n \leq |L|, |R|\leq \frac{3}{4}n$. We recurs on subproblems with the inputs $G[X\cup S]$ and $G[Y\cup S]$. Then, the algorithm finds a tree decomposition for each subproblem and finally merges them together to obtain a tree decomposition for $G$.  

Based on Lemma \ref{lem:cent} there exists a strong centroid with respect to $W$. We do not necessarily know what it is but we sure know that it exists. Reed's algorithm checks all the partitions and one of them is in fact the centroid. So, this is why the algorithm definitely finds a balanced separator if it exists.
\subsection{Further Previous Improvement}
Recently, the authors of this paper, improved Reed's algorithm so that it runs in time $\mathcal{O}(k^2 2^{8.766k}n \log n)$. Here, we briefly describe how this improvement has been achieved.

As mentioned earlier, Reed picks $24$ but any constant $C_{\epsilon} > 6$ gives us two subproblems with sizes $\epsilon\, n$ and $(1 - \epsilon) n$ (or better), for any small and nonzero $\epsilon$. In Reed's algorithm, $\epsilon = \frac{1}{4}$, and $C_{\epsilon} = 24$. We can fix $\epsilon$ and $C_{\epsilon}$, later.

There are two major improvements in \cite{belbasi2020improvement}. 
\begin{itemize}
	\item First improvement is that we do not separate scenarios \rom{1} and \rom{2} as in Reed's algorithm. We consider one representative at a time and once we are working on representative $v$, we branch on $v$ going into the separator or not. If it goes into the separator, we put $v$ there but this time we do not do the DFS from scratch to form a new group of representatives (which is costly). We undo the DFS for the subtree rooted at $v$ (the subtree that $v$ represents) and continue the DFS.
	\item Second, we do not construct a bipartition at the very beginning after removing $S$. Reed starts with looking for a weakly balanced separator ($\frac{1}{3}n$ to $\frac{2}{3} n$), which is known to exist. However, he works with the weights and as argued before, he makes a bipartition based on the weights but the volumes can be $\frac{1}{4} n$ to $\frac{3}{4} n$ (or better). In \cite{belbasi2020improvement}, our argument starts with a restricted balanced separator by volume. We allow more than two partitions (multi partition by volume for the analysis). Each part has volume of at most $\frac{1}{2}$. This means each part has weight up to $3/4 - \epsilon/2$. Only now, we form a two-partition (after applying the weights) by weight. Each part has weight $\leq \max \{3/4 - \epsilon/2, 2/3\}$ (for $\epsilon \leq 1/6$). Then, we find a separator for this weight partition. The larger part has volume between $1/4 + \epsilon$ and $1 - \epsilon$. So, in worst-case scenario, we end up with an $\epsilon$ to $1 - \epsilon$ partition, which is still good. Hence, we set $C_{\epsilon} = \frac{1}{((1-\epsilon) - 1/2)/2} = \frac{4}{1-2\epsilon} \leq 4 + 12\epsilon$, for $\epsilon \leq 1/6$.
\end{itemize}

Our time analysis in \cite{belbasi2020improvement} shows that the running time complexity to split based on $W$ is as given below, where $t$ is the number of subtrees (or representatives), and $k$ is the upper bound on treewidth.   
\begin{align}
	T_{W}(t, k) \leq T_{W}(t, k-1) + 2T_{W}(t-1, k) + Qkn + \mathcal{O}(1),
\end{align}
where $Q$ is the constant which shows up in the time complexity of the flow algorithm. We have shown that finally,
\begin{align}
	T_{W}(t, k) \leq Qk^2ne^{k}(C_{\epsilon} + 1)^{k}(2^{(C_{\epsilon} + 1)k}(C_{\epsilon}+1)+2).
\end{align}
The following theorem can be found in \cite{belbasi2020improvement}.
\begin{theorem}
	Let $C_0 = \log_2 e + \log_2 5 + 5 < 8.765$. For every $C > C_0$, a 5-approximation algorithm of the treewidth can be computed in time $\mathcal{O}(2^{Ck}n \log n)$.
\end{theorem}

	\subsection{Our Improvement}
	In this subsection, our goal is to use the algorithm for finding the leftmost separators to further improve the coefficient of $k$ in the exponent of the tree decomposition algorithm to make it more applicable.
	
	\begin{itemize}
		\item 
		For the analysis, we consider a centroid by volume, namely $C$. It has size $k' \leq k$.
		\item
		Each connected component of $G-C$ has volume at most $\frac{1}{2} (n-k')$. 
		\item
		These connected components can be grouped into 3 parts, each with volume at most $\frac{1}{2} (n-k')$.
		(Just place the components by decreasing volume into the part with currently smallest volume.)
		\item
		Let the proper volume be the part of the volume that has its corresponding weight in the same part.
		In other words, the proper volume is the number of vertices whose representative is in the same part.
		\item
		Let $t$ be the threshold for the size of the small trees.
		At most $k' (t-2) = k'((\frac{1}{2} - \epsilon)n / k - 2) \leq (\frac{1}{2} - \epsilon) n - 2k'$ vertices 
		can be in a different part than their representative.
		Therefore, the total proper volume is at least $n - k' - (\frac{1}{2} - \epsilon) n + 2k' \geq (\frac{1}{2} + \epsilon) n + k'$
		\item
		Of the proper volume, at least $(\frac{1}{2} + \epsilon) n + k' - \frac{1}{2} (n-k') > \epsilon n$  
		is not in the part with largest proper volume.
		\item
		Therefore, there are at least 2 parts with proper volume at least $\frac{\epsilon n}{2}$ 
		\item
		Of these 2 parts, we put the part with larger weight on the left side, the other one on the right side.
		\item
		We also put the third part on the left side.
		\item
		The left part has weight at least half the total weight, which is $\frac{1}{2} (n-k')$.
		\item
		The right part has weight at most $\frac{1}{2} (n-k')$ and (proper) volume at least $\frac{\epsilon n}{2}$ 
	\end{itemize}
	
	The algorithm tries all possible 2-partitions of the representatives. This includes the left-right partition that we are currently investigating.
	While searching for a leftmost separator, the centroid is a competitor. Thus the algorithm finds a separator that is equal to the centroid or is located strictly to the left of it. 
	From now on, left (call it $X$) and right (namely $Y$) are defined by the leftmost ($X$, $Y$, $\leq\! k$)$^G$-separator found by Algorithm~\ref{algo:2}. This separator has size $k''$ with $k' \leq k'' \leq k$. It produces the same weight partition as the centroid, but part of the volume might shift to the right.
	
	\begin{itemize}
		\item
		Thus the left part has still weight at least $\frac{1}{2} (n-k')$ and therefore volume at least $\frac{1}{2} (n-k') - (\frac{1}{2} - \epsilon) n + 2k
		\geq \epsilon n + \frac{3}{2} k$.
		\item
		The right part has still volume at least $\frac{\epsilon n}{2}$ 
		\item
		The recursive calls are done with the subgraphs induced by the union of the vertices of a connected component with the vertices of the separator. Their number of vertices is upper bounded by $n$ minus the volume of the smaller side. It is less than $(1 - \frac{\epsilon}{2})n$.
		\item 
		After $\frac{2 \ln 2}{\epsilon} (\log n - \log b) = O(\log n)$ rounds for $b \geq k$, the largest volume of a recursive call is at most $b$.
	\end{itemize}
	
	In the worst case, the algorithm alternates between a split by volume and $\log k$ splits of $W$ steps. Let the time spent between two splits by volume be at most $f(k)n$. Note that $f(k) \leq g(k) + 3^{3k} k \log k = O(g(k))$, where $g(k)$ is the time of one split by volume step.
	Then we get the following recurrence for an upper bound on the running time of the whole algorithm.
	\[
	T(n) = 
	\begin{cases}
		O(k)	&	\text{if $n \leq 3k$} \\
		\max_{p, n_1,\dots, n_p} \left\{ \sum_{i \in [p]} T(n_i) \right\} + f(k)n	&	\text{otherwise,} \\
	\end{cases}
	\]
	where the maximum is taken over $p \geq 2$ and $n_1, \dots, n_p \in [n-1]$ such that $\sum_{i \in [p]} (n_i - k) = n - k$.
	Note that $\sum_{i \in [p]} (n_i - k'') = n - k''$ reflects that every recursive call includes a connected component of $G-S$ together with the separator $S$ of size $k''$. We can round up $k''$ to $k$, because $T(n)$ is an increasing function. Because, the sum of the $n_i$'s is more than $n$, it is beneficial to consider the following modified function $T'(n') = T(n+k)$. Then we get the simpler recursion
	\[
	T'(n') = 
	\begin{cases}
		O(k)	&	\text{if $n' \leq 2k$} \\
		\max_{p, n'_1,\dots, n'_p} \left\{ \sum_{i \in [p]} T'(n'_i) \right\} + f(k)(n'+k)	&	\text{otherwise,} \\
	\end{cases}
	\]
	where the maximum is taken over $p \geq 2$ and $n'_1, \dots, n'_p \in [n'-1]$ such that $\sum_{i \in [p]} (n'_i) = n'$.
	
	Now we prove
	\[T'(n') \leq \frac{c}{\epsilon}  f(k) n' \log n' \]
	by induction, where $c$ is minimal such that $c \geq 3$ and the base case ($n' \leq 2k$) is satisfied. 
	Assume that the $i$th component of size $n_i$ is on the side of the separator with smaller volume if and only if $1 \leq i \leq p'$. 
	Let $n_S = \sum_{i \in [p']} n'_i $, and let $n_L = n' - n_S$. Furthermore, let 
	\[h_S = \sum_{i \in [p']} n'_i \log n'_i  \leq  \sum_{i \in [p']} n'_i \log \frac{n'}{2} = n_S (\log n' - 1), \]
	and
	\[h_L  = \sum_{i \in \{p'+1, \dots, p\}} n'_i \log n'_i  \leq  \sum_{i \in  \{p'+1, \dots, p\}} n'_i \log n' = n_L \log n'. \]
	
	Recall that $\epsilon n' /2 \leq n_S \leq n'/2$.
	By the inductive hypothesis, for $n'>2k$ we have 
	\[ T'(n') \leq \frac{c}{\epsilon} f(k) (h_S +h_L) + f(k)(n'+k) \]  
	
	This implies
	\[ T'(n') \leq \frac{c}{\epsilon} f(k) (n' \log n' -  n_S ) + f(k)(n'+k) \]
	Thus $T'(n') \leq  \frac{c}{\epsilon} f(k) n' \log n'$ if $\frac{c}{\epsilon} n_S \geq n' +k$. As $n_S \geq \epsilon n' / 2$ and $n'>2k$, this is the case when $c \geq 3$.
	
	Each split by volume can be done by finding at most $C_{k-1} = \Theta(4^k / k^{3/2})$ separators in time $\mathcal{O}(C_{k-1} kn) = \mathcal{O}(4^k / \sqrt{k})$  for each placement of at most $n/t = k/(1/2 - \epsilon)$ representatives to the left or right side and the placement of at most $k$ vertices into the centroid. These are at most  $(2 +8 \epsilon) k$ representatives for $\epsilon \leq 1/4$. Choosing $\epsilon = \Theta(1/k)$ this results in a running time of $f(k) n = \mathcal{O}({{(3+8 \epsilon)k} \choose k} 2^{4k} k^{-1/2} n) = \mathcal{O}(\frac{3^{3k}}{2^{2k}k} 2^{4k}  n$ for one split by volume in a graph of size $n$. Together with the solution of the previous recurrence, we obtain.
	
	\begin{theorem}
		If a graph has treewidth at most $k$, then a tree decomposition of width at most $4k-1$ can be found in time
		$\mathcal{O}(2^{6.755k} n \log n)$.
	\end{theorem}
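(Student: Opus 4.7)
The plan is to follow the Belbasi--F\"urer 5-approximation blueprint, but to replace the single call that finds only the leftmost minimum $(X,Y,\le k)^G$-separator by a call that enumerates \emph{all} minimal leftmost $(X,Y,\le k)^G$-separators using the algorithm established earlier in the paper. The approximation-ratio argument carries over verbatim from \cite{belbasi2020improvement}; the entire work is to show that (i) among the enumerated leftmost separators there is one whose induced split has each piece of size at most $(1-\epsilon/2)n$, and (ii) the resulting divide-and-conquer recursion solves to $\mathcal{O}(f(k)\,n\log n)$ with $f(k)=2^{6.755k}$.

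For (i), I would fix a centroid $C$ (of the hypothetical optimal tree decomposition) with respect to the representative set $W$, guaranteed by Lemma~\ref{lem:cent}, then greedily bundle the components of $G-C$ into three bins, each of volume at most $\tfrac{1}{2}(n-|C|)$. Choosing the representative threshold $t=(\tfrac{1}{2}-\epsilon)n/k$ limits the ``improperly placed'' vertex total (those whose representative lies in a different bin) to $(\tfrac{1}{2}-\epsilon)n$, and a short accounting then shows that at least two of the three bins have proper volume at least $\epsilon n/2$. Placing the heavier of those two bins together with the third bin into $X$ and the lighter into $Y$ yields a weight-balanced bipartition against which $C$ is a legal competitor. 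Because the enumeration algorithm returns every leftmost $(X,Y,\le k)^G$-separator, it certainly returns one that is no further to the right than $C$, and hence both sides of the induced split retain volume $\ge \epsilon n/2$.

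For (ii), let $f(k)\,n$ bound the time at one split-by-volume step and consider the recurrence $T(n)\le \max\sum_i T(n_i)+f(k)\,n$, where each recursive $n_i\le (1-\epsilon/2)n$ and $\sum_i n_i \le n+(p-1)|S|$. A standard induction shows $T(n)\le (c/\epsilon)\,f(k)\,n\log n$: the $\Omega(\epsilon n)$ volume on the smaller side gives an entropy decrease of $\Omega(n)$ that absorbs the additive $f(k)\,n$ term, as in the textbook split-master analysis.

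The final step, which is where the constant in the exponent is determined, is to evaluate $f(k)$. At one split-by-volume step I enumerate the at most $(2+O(\epsilon))k$ representatives and their placements into left/right, costing $\binom{(2+O(\epsilon))k}{k}$; for each placement I call the leftmost-separator enumerator, which produces at most $C_{k-1}\sim 4^{k-1}/\sqrt{\pi(k-1)^3}$ candidates in total time $\mathcal{O}(4^k k^{-1/2}\,n)$. Choosing $\epsilon=\Theta(1/k)$ and using $\binom{3k}{k}\sim (27/4)^k/\sqrt{k}$, the product is $\Theta(27^k \cdot 4^k) = \Theta(108^k)$, and since $\log_2 108<6.755$ this yields $f(k)\,n=\mathcal{O}(2^{6.755k}\,n)$. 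The place where I expect the most care is in tuning $\epsilon$ and the representative-bin bookkeeping so that the volume-balance, weight-balance, and branching-factor estimates simultaneously give the claimed constant, together with checking that the standard merge step (which glues the recursively produced tree decompositions at bags that extend the separator) keeps the final width at $4k-1$.
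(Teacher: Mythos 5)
Your proposal follows essentially the same route as the paper's own proof: a centroid-by-volume grouped into three bins with proper-volume accounting under the threshold $t=(\tfrac{1}{2}-\epsilon)n/k$, a weight-balanced bipartition against which the centroid is a legal competitor so that enumerating all leftmost separators preserves volume $\geq \epsilon n/2$ on both sides, the same $\mathcal{O}(f(k)\,n\log n)$ recurrence solved by induction, and the same final count giving $\log_2 108 < 6.755$. One bookkeeping remark: the binomial $\binom{3k}{k}$ should be attributed to choosing which representatives enter the separator (the left/right placements contribute a separate $2^{2k}$ factor, not a binomial), and your stated product $\Theta(27^k\cdot 4^k)=\Theta(108^k)$ only comes out if both of these factors are multiplied by the $4^k/\sqrt{k}$ enumeration cost, exactly as in the paper's $\binom{3k}{k}\cdot 2^{2k}\cdot 4^k k^{-1/2}\,n$ bound.
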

	
	\begin{proof}
		Here, we analyze the running time of the treewidth algorithm that we just described.
		
		First, we check the running time of split by $W$, denoted by $T_W(k,n)$. Since $|W| \leq 3k$, and $|S| \leq k$, then we have at most $k$ placements into the centroid and $2k$ outside the centroid. Thus,
		\begin{eqnarray*}
			T_W(k, n) & \approx &		\binom{3k}{k} 2^{2k} \frac{4^k}{\sqrt{k}}n\\
			& \approx & 	\frac{3^3k}{2^2k} 2^{2k} \frac{4^k}{\sqrt{k}}n\\
			& = & 		2^{(3 \log_2 3 + 2)k} \frac{1}{\sqrt{k}}n\\
			& = &		\frac{2^{2k}  3^{3k}}{\sqrt{k}} n\\
			& < &		\frac{2^{6.755 k}}{\sqrt{k}} n
		\end{eqnarray*}
		Therefore, $T_W(k,n) = \mathcal{O}(2^{6.755 k}k^{-1/2} n)$.
		
		Now, we do a split by $V$ every $\log_2 k$ step. So, the total running time of our algorithm is:
		\[\mathcal{O}(\frac{2^{6.755 k}}{\sqrt{k}}\,k\, \log k\, n \,\log n) = \mathcal{O}(2^{6.755 k}(\log k) \sqrt{k} n (\log n)).\]
		Notice that we rounded th exponent up, hence the polynomial part is dominated by this roundup and the total running time is $\mathcal{O}(2^{6.755k}n \log n)$.\hfill $\square$
	\end{proof}
	
	\bibliographystyle{amsplain}
	\bibliography{refs}
	\newpage

\end{document}